\newcommand{\qedc}{\hfill $\Box$(Claim)}
\newcommand{\commentout}[1]{}
\newcommand{\newekki}[1]{{\color{black} #1}} 
\newcommand{\tb}{{\sf tb}}
\newcommand{\tl}{{\sf tl}}
\newcommand{\cP}{{\cal P}}
\newcommand{\cO}{{\cal O}}
\newcommand{\cF}{{\cal F}}
\newcommand{\cT}{{\cal T}}
\newcommand{\cL}{{\cal L}}
\newcommand{\cI}{{\cal I}}
\newcommand{\pl}{{\sf pl}}
\newcommand{\pb}{{\sf pb}}
\newcommand{\pw}{{\sf pw}}
\newcommand{\pd}{{\sf pd}}
\newcommand{\ld}{{\sf ld}}
\newcommand{\bw}{{\sf bw}}
\newcommand{\marginekki}[1]%
{%
  \setlength{\marginparwidth}{20mm}%
  \marginpar{%
    \begin{spacing}{0.5}
      \textsl{\tiny\color{green} #1}%
    \end{spacing}
  }%
}%
\begin{document}

\title{Line-distortion, Bandwidth and Path-length of a
  graph\thanks{Results of this paper were partially presented at the
    SWAT 2014 conference~\cite{SWATversion}.}}

\author{Feodor F. Dragan\inst{1} \and Ekkehard K\"ohler\inst{2} \and
  Arne Leitert\inst{1}}

\institute{
  Algorithmic Research Laboratory, Department of Computer Science, \\
  Kent State University, Kent, OH 44242, USA  \\
  \email{dragan@cs.kent.edu}, \email{aleitert@cs.kent.edu} \and
  Mathematisches Institut, Brandenburgische Technische Universit\"at Cottbus, \\
  D-03013 Cottbus, Germany \\
  \email{ekoehler@math.tu-cottbus.de} }

\maketitle

\begin{abstract}
  \newekki{For a graph $G=(V,E)$ the \emph{minimum line-distortion
      problem} asks for the minimum $k$ such that there is a mapping
    $f$ of the vertices into points of the line such that for each
    pair of vertices $x,y$ the distance on the line $|f(x) - f(y)|$
    can be bounded by the term $d_G(x, y)\leq |f(x)-f(y)|\leq k \,
    d_G(x, y)$, where $d_G(x, y)$ is the distance in the graph.  The
    \emph{minimum bandwidth problem} minimizes the term $\max_{uv\in
      E}|f(u)-f(v)|$, where $f$ is a mapping of the vertices of $G$
    into the integers \newekki{$\{1, \ldots, n\}$}.}

  We investigate the {minimum line-distortion} and the {minimum
    bandwidth} problems on unweighted graphs and their relations with
  the {\em minimum length} of a Robertson-Seymour's
  path-decomposition.  The {\em length} of a path-decomposition of a
  graph is the largest diameter of a bag in the decomposition.  The
  {\em path-length} of a graph is the minimum length over all its
  path-decompositions.  In particular, we show:

  \begin{itemize}
  \item if a graph $G$ can be embedded into the line with distortion
    $k$, then $G$ admits a Robertson-Seymour's path-decomposition with
    bags of diameter at most $k$ in $G$;
  \item for every class of graphs with path-length bounded by a
    constant, there exist an efficient constant-factor approximation
    algorithm for the minimum line-distortion problem and an efficient
    constant-factor approximation algorithm for the minimum bandwidth
    problem; 
  \item there is an efficient 2-approximation algorithm for computing
    the path-length of an arbitrary graph; 
  \item AT-free graphs and some intersection families of graphs have
    path-length at most 2; 
  \item for AT-free graphs, there exist a linear time 8-approximation
    algorithm for the minimum line-distortion problem and a linear
    time 4-approximation algorithm for the minimum bandwidth problem.
  \end{itemize}

  \medskip

  \noindent{\bf Keywords:} {\em graph algorithms; approximation
    algorithms; minimum line-distortion; minimum bandwidth;
    Robertson-Seymour's path-decomposition; path-length; AT-free
    graphs.}
\end{abstract}

\section{Introduction and previous work}\label{sec:intro}

Computing a minimum distortion embedding of a given $n$-vertex graph
$G$ into the line $\ell$ was recently identified as a fundamental
algorithmic problem with important applications in various areas of
computer science, like computer vision~\cite{TSL00}, as well as in
computational chemistry and biology (see~\cite{Indyk05,IndykM04}). It
asks, for a given graph
$G=(V,E)$, 
to find a mapping $f$ of vertices $V$ of $G$ into points of $\ell$
with minimum number $k$ such that $d_G(x, y)\leq |f(x)-f(y)|\leq k \,
d_G(x, y)$ for every $x,y\in V$. The parameter $k$ is called the {\em
  minimum line-distortion} of $G$ and denoted by $\ld(G)$. The
embedding $f$ is called {\em non-contractive} since $d_G(x, y)\leq
|f(x)-f(y)|$ for every $x,y\in V$.

In~\cite{BDG+05}, B\v{a}doiu et al.\ showed that this problem is hard
to approximate within a constant factor.  They gave an
exponential-time exact algorithm and a polynomial-time
$\cO(n^{1/2})$-approximation algorithm for arbitrary unweighted input
graphs, along with a polynomial-time $\cO(n^{1/3})$-approximation
algorithm for unweighted trees.  In another paper~\cite{BaChInSi}
B\v{a}doiu et al.\ showed that the problem is hard to approximate by a
factor $\cO(n^{1/12})$, even for weighted trees.  They also gave a
better polynomial-time approximation algorithm for general weighted
graphs, along with a polynomial-time algorithm that approximates the
minimum line-distortion $k$ embedding of a weighted tree by a factor
that is polynomial in $k$.

Fast exponential-time exact algorithms for computing the
line-distortion of a graph were proposed
in~\cite{FFLLRS09,FLS11}.  Fomin et al.\ in~\cite{FLS11} showed that a
minimum distortion embedding of an unweighted graph into the line can
be found in time $5^{n+o(n)}$.  Fellows et al.\ in~\cite{FFLLRS09}
gave an $\cO(nk^4(2k+1)^{2k})$ time algorithm that for an unweighted
graph $G$ and integer $k$ either constructs an embedding of $G$ into
the line with distortion at most $k$, or concludes that no such
embedding exists.  They extended their approach also to weighted
graphs obtaining an $\cO(nk^{4W} (2k + 1)^{2kW})$ time algorithm,
where $W$ is the largest edge weight.  Thus, the problem of minimum
distortion embedding of a given $n$-vertex graph $G$ into the line
$\ell$ is Fixed Parameter Tractable.

Recently, Heggernes et al.\ in~\cite{HM10,HMP11} initiated the study
of minimum distortion embeddings into the line of specific graph 
classes.  In particular, they gave polynomial-time algorithms for the
problem on bipartite permutation graphs and on threshold
graphs~\cite{HMP11}.  Furthermore, in~\cite{HM10}, Heggernes et al.\
showed that the problem of computing a minimum distortion embedding of
a given graph into the line remains NP-hard even when the input graph
is restricted to a bipartite, cobipartite, or split graph, implying
that it is NP-hard also on chordal, cocomparability, and AT-free
graphs.  They also gave polynomial-time constant-factor approximation
algorithms for split and cocomparability graphs.

Table~\ref{tbl:ExisPosLineDesResults} and
Table~\ref{tbl:ExisHardLineDesResults} summarise the results mentioned
above.

\begin{table}
  \scriptsize
  \centering
  \caption{Existing solutions for calculating the minimum line-distortion
    $\lambda$.}\label{tbl:ExisPosLineDesResults}
  \begin{tabular}{l  c c c}
    \toprule
    Graph Class~ & ~Solution Quality~ & ~Run Time~ & ~Source~  \\
    \midrule
    trees (unweighted) & $\cO(n^{1/3})$-approx. & polynomial & \cite{BDG+05} \\
    trees (weighted) & $\lambda^{\cO(1)}$-approx. & polynomial & \cite{BaChInSi} \\
    general (unweighted) & $\cO(n^{1/2})$-approx. & polynomial & \cite{BDG+05} \\
    & optimal & $5^{n+o(n)}$ & \cite{FLS11} \\
    & optimal & $~\cO(n \lambda^4 (2 \lambda+1)^{2\lambda})$~ & \cite{FFLLRS09} \\
    bipartite permutation~ & optimal & $\cO(n^2)$ & \cite{HMP11} \\
    threshold & optimal & linear & \cite{HMP11} \\
    split & 6-approx. & linear & \cite{HM10} \\
    cocomparability & 6-approx. & $\cO(n \log^2 n + m)$ & \cite{HM10} \\
    \bottomrule
  \end{tabular}
\end{table}
\begin{table}
  \scriptsize
  \centering
  \caption{Existing hardness results for calculating the minimum
    line-distortion.}\label{tbl:ExisHardLineDesResults}
  \begin{tabular}{l  cc}
    \toprule
    Graph Class & Result & ~Source~  \\
    \midrule
    general & $\cO(1)$-approximation is NP-hard & \cite{BDG+05} \\
    trees (weighted)~ & ~Hard to $\cO(n^{1/12})$-approximate~ & ~\cite{BaChInSi} \\
    bipartite & NP-hard & \cite{HM10} \\
    cobipartite & NP-hard & \cite{HM10} \\
    split & NP-hard & \cite{HM10} \\
    AT-free & NP-hard & \cite{HM10} \\
    cocomparability  & NP-hard & \cite{HM10} \\
    chordal & NP-hard & \cite{HM10} \\
    \bottomrule
  \end{tabular}
\end{table}

\newekki{The} minimum distortion embedding into the line may appear to
be closely related to the widely known and extensively studied graph
parameter {\em bandwidth}, denoted by $\bw(G)$.  The only difference
between the two parameters is that a minimum distortion embedding has
to be {\em non-contractive}, meaning that the distance in the
embedding between two vertices of the input graph has to be at least
their original distance, whereas there is no such restriction for
bandwidth.

Formally, given an unweighted graph $G=(V,E)$ on $n$ vertices,
consider a $1$-$1$ map $f$ of the vertices $V$ into integers in
\newekki{$\{1, \ldots, n\}$}; $f$ is called a {\em layout} of $G$. The
{\em bandwidth of layout} $f$ is defined as the maximum stretch of any
edge, i.e., $\bw(f) = \max_{uv \in E} |f(u) - f(v)|$.  The {\em
  bandwidth} of a graph is defined as the minimum possible bandwidth
achievable by any $1$-$1$ map (layout) $V \rightarrow \newekki{\{1,
  \ldots, n\}}$.  That is, $\bw(G) = \min_{f\colon V \rightarrow
  \newekki{\{1, \ldots, n\}}} \bw(f)$.

It is known that $\bw(G) \leq \ld(G)$ for every connected graph $G$
(see, e.g.,~\cite{HMP11}). 
However, the bandwidth and the minimum line-distortion of a graph can
be very different.  For example, it is common knowledge that a cycle
of length $n$ has bandwidth $2$, whereas its minimum line-distortion
is exactly $n - 1$~\cite{HMP11}.  Bandwidth is known to be one of the
hardest graph problems; it is NP-hard even for very simple graphs like
\emph{caterpillars of hair-length at most $3$} \newekki{(i.e., trees
  in which all the vertices are within distance 3 of a central
  path and all vertices of degree at least 3 are on the path)}~\cite{Monien86}, and it is hard to approximate by a constant
factor even for trees~\cite{BlKaWi97} and caterpillars with arbitrary
hair-lengths~\cite{DubeyaFeige2011}.  Polynomial-time algorithms for
the exact computation of bandwidth are known for very few graph
classes, including bipartite permutation graphs~\cite{HKM09} and
interval graphs~\cite{KV90,KS2002,Sprague94}.  Constant-factor
approximation algorithms are known for AT-free graphs~\cite{KKM99} and
convex bipartite graphs~\cite{ShTaUe2012}.  Recently,
in~\cite{GHKLMS2011} Golovach et al.\ showed also that the bandwidth
minimization problem is Fixed Parameter Tractable on AT-free graphs by
presenting an $n 2^{\cO(k \log k)}$ time algorithm. 
For general (unweighted) $n$-vertex graphs, the minimum bandwidth can
be approximated 
within a factor of
$\cO(\log^{3.5} n)$~\cite{Feige00}.  For $n$-vertex trees and chordal
graphs, the minimum bandwidth can be approximated 
within a factor of $\cO(\log^{2.5} n)$~\cite{Gupta01}.  For
$n$-vertex caterpillars with arbitrary hair-lengths, the minimum
bandwidth can be approximated to within a factor of $\cO(\log
n/\log\log n)$~\cite{FeigeTalwar2005}.

Table~\ref{tbl:ExisPosBwResults} and Table~\ref{tbl:ExisHardBwResults}
summarise the results mentioned above.

\begin{table}
  \scriptsize
  \centering
  \caption{Existing solutions for calculating the minimum
    bandwidth $k$.}\label{tbl:ExisPosBwResults}
  \begin{tabular}{l c c c}
    \toprule
    Graph Class & Solution Quality & Run Time & Source   \\
    \midrule
    caterpillars with hair-length 1 or 2 & optimal & $\cO(n \log n)$ & \cite{APSZ81} \\
    caterpillars with arbitrary hair-lengths~ & ~$\cO(\log n/\log\log n)$-approx.~ & ~polynomial~ & \cite{FeigeTalwar2005} \\
    general  & ~$\cO(\log^{3.5} n)$-approx.~ & polynomial & \cite{Feige00} \\
    chordal & ~$\cO(\log^{2.5} n)$-approx.~ & polynomial & \cite{Gupta01} \\
    AT-free & 2-approx. & $\cO(nm)$  & \cite{KKM99} \\
    & 4-approx. & ~$\cO(m + n \log n)$~  & \cite{KKM99} \\
    & optimal & $n2^{\cO(k\log k)}$ & \cite{GHKLMS2011} \\
    convex bipartite & 2-approx. & $\cO(n \log^2 n)$ & \cite{ShTaUe2012} \\
    & 4-approx. & $\cO(n)$ & \cite{ShTaUe2012} \\
    bipartite permutation & optimal & $\cO(n^4 \log n)$ & \cite{HKM09} \\
    interval & optimal & $\cO(n \log^2 n)$ & \cite{Sprague94} \\
    \bottomrule
  \end{tabular}
\end{table}

\begin{table}
  \scriptsize
  \centering
  \caption{Existing hardness results for calculating the minimum bandwidth.}\label{tbl:ExisHardBwResults}
  \begin{tabular}{l c c}
    \toprule
    Graph Class & Result & Source   \\
    \midrule
    trees & hard to approximate by a constant factor & \cite{BlKaWi97} \\
    caterpillars with arbitrary hair-lengths~ & ~hard to approximate by a constant factor~ & \cite{DubeyaFeige2011} \\
    caterpillars with hair-length at most 3 & NP-hard & \cite{Monien86} \\
    convex bipartite & NP-hard & \cite{ShTaUe2012} \\
    \bottomrule
  \end{tabular}
\end{table}

Our main tool in this paper is Robertson-Seymour's path-decomposition
and its length.  A \emph{path-decomposition}~\cite{RS83} of a graph
$G=(V,E)$ is a sequence of subsets $\{X_i: i\in I\}$
($I:=\{1,2,\dots,q\}$) of vertices of G, called {\em bags}, with three
properties:
\begin{enumerate}\label{def:path-decomposition}
\item $\bigcup_{i\in I}X_i=V$;
\item For each edge $uv\in E$, there is a bag $X_i$ such that $u,v \in
  X_i$;
\item For every three indices $i\le j\le k$, $X_i \cap X_k\subseteq
  X_j$.  Equivalently, the subsets containing any particular vertex
  form a contiguous subsequence of the whole sequence.
\end{enumerate}
We denote a path-decomposition $\{X_i: i\in I\}$ of a graph $G$ by
$\cP(G)$.

The \emph{width} of a path-decomposition $\cP(G)=\{X_i: i\in I\}$ is
$\max_{i\in I}|X_i|-1$. The {\em path-width} of a graph $G$, denoted
by $\pw(G)$, is the minimum width over all path-decompositions
$\cP(G)$ of $G$~\cite{RS83}. The caterpillars with hair-length at most
$1$ are exactly the graphs with
path-width~$1$~\newekki{\cite{proskurowski-telle-pathwidth}}.

Following~\cite{DoGa07} (where the notion of tree-length of a graph
was introduced), we define the {\em length} of a path-decomposition
$\cP(G)$ of a graph $G$ to be $\lambda := \max_{i \in I}\max_{u, v \in
  X_i}d_G(u, v)$ (i.e., each bag $X_i$ has diameter at most $\lambda$
in $G$).  The {\em path-length} of $G$, denoted by $\pl(G)$, is the
minimum length over all path-decompositions of $G$.  Interval graphs
(i.e., the intersection graphs of intervals on a line) are exactly the
graphs with path-length $1$; it is known (see,
e.g.,~\cite{Diestel00,FulGro1965,GilHof64,Gol-book}) that $G$ is an
interval graph if and only if $G$ has a path-decomposition with each
bag being a maximal clique of $G$.

Note that these two graph parameters (path-width and path-length) are
not related to each other.  For instance, a clique on $n$ vertices has
path-length $1$ and path-width $n-1$, whereas a cycle on $2n$ vertices
has path-width $2$ and path-length $n$.

Following~\cite{DraganK11}, where the notion of tree-breadth of a
graph was introduced, we define the breadth of a path-decomposition as
follows.  \newekki{Let $D_G(v_i,r)$ be the disk of radius $r$ around
  vertex $v_i$, more precisely, $D_G(v_i, r) = \{w \in V: d_G(v_i, w)
  \leq r\}$.  Then the \emph{breadth} of a path-decomposition $\cP(G)$
  of a graph $G$ is the minimum integer $r$ such that for every $i \in
  I$ there is a vertex $v_i \in V$ with $X_i \subseteq D_G(v_i, r)$
  (i.e., each bag $X_i$ can be covered by a disk $D_G(v_i, r)$ of
  radius at most $r$ in $G$).}  Note that vertex $v_i$ does not need
to belong to $X_i$.  The {\em path-breadth} of $G$, denoted by
$\pb(G)$, is the minimum breadth over all path-decompositions of $G$.
Evidently, for any graph $G$ with at least one edge, $1 \leq \pb(G)
\leq \pl(G) \leq 2 \, \pb(G)$ holds.  Hence, if one parameter is
bounded by a constant for a graph $G$ then the other parameter is
bounded for $G$ as well.


Recently, Robertson-Seymour's {tree-decompositions} with bags of
bounded radius proved to be very useful in designing an efficient
approximation algorithm for the problem of minimum stretch embedding
of an unweighted graph in to its spanning tree~\cite{DraganK11}.  The
decision version of the problem is the {\em tree $t$-spanner problem}
which asks, for a given graph $G=(V,E)$ and an integer $t$,
\newekki{whether} a spanning tree $T$ \newekki{of $G$} exists such
that $d_T(x, y) \leq t \, d_G(x, y)$ for every $x, y \in V$. It was
shown in~\cite{DraganK11} that: 
\begin{enumerate}[(1)]
\item if a graph $G$ can be embedded \newekki{into} a spanning tree
  with stretch $t$, then $G$ admits a Robertson-Seymour
  tree-decomposition with bags of radius at most $\lceil{t/2}\rceil$
  and diameter at most $t$ in $G$ (i.e., the tree-breadth $\tb(G)$ of
  $G$ is at most $\lceil{t/2}\rceil$ and the tree-length $\tl(G)$ of
  $G$ is at most $t$);
\item there is an efficient algorithm which constructs for an
  $n$-vertex unweighted graph $G$ with $\tb(G)\leq \rho$ a spanning
  tree with stretch at most $2\rho\log_2 n$.
\end{enumerate}
As a consequence, an efficient ($\log_2 n$)-approximation algorithm
was obtained for embedding an unweighted graph with minimum stretch
into its spanning tree~\cite{DraganK11}.

\subsection{Contribution of this paper} \label{ssec:our-res}

Motivated by~\cite{DraganK11}, in this paper, we investigate possible
connections between the line-distortion and the path-length
(path-breadth) of a graph.  We show that for every graph $G$, $\pl(G)
\leq \ld(G)$ and $\pb(G) \leq \lceil{\ld(G)/2}\rceil$ hold.
Furthermore, we demonstrate that for every class of graphs with
path-length bounded by a constant, there is an efficient
constant-factor approximation algorithm for the minimum
line-distortion problem.  As a consequence, every graph $G$ with
$\ld(G) = c$ can be embedded in polynomial time into the line with
distortion at most $\cO(c^2)$ (reproducing a result
from~\cite{BDG+05}).  Additionally, using the same technique, we show
that, for every class of graphs with path-length bounded by a
constant, there is an efficient constant-factor approximation
algorithm for the minimum bandwidth problem.

We also investigate 
(i) which particular graph classes have constant bounds on path-length
and (ii) how fast the path-length of an arbitrary graph can be
computed or sharply estimated. We present an efficient
$2$-approximation ($3$-approximation) algorithm for computing the
path-length (resp., the path-breadth) of a graph.  We show that
AT-free graphs and some \newekki{well-known} intersection families of
graphs have small path-length and path-breadth.  In particular, the
path-breadth of every permutation graph and every trapezoid graph is
$1$ and the path-length (and therefore, the path-breadth) of every
cocomparability graph and every AT-free graph is at most $2$.  Using
this and some additional structural properties, we give a linear time
$8$-approximation algorithm for the minimum line-distortion problem
and a linear time $4$-approximation algorithm for the minimum
bandwidth problem for AT-free graphs.

As a consequence of our results we obtain also that convex bipartite
graphs and caterpillars with hairs of bounded length admit constant
factor approximations of the minimum bandwidth and the minimum
line-distortion.  Furthermore, the minimum line-distortion problem and
the minimum bandwidth problem, are both NP-hard on bounded path-length
graphs.

\section{Preliminaries and metric properties of graphs with bounded
  path-length} \label{sec:met-prop}

All graphs occurring in this paper are connected, finite, unweighted,
undirected, loopless and without multiple edges. We call $G = (V, E)$
an {\em $n$-vertex $m$-edge graph} if $|V| = n$ and $|E| = m$.  In
this paper we consider only graphs with $n > 1$.  A {\em clique} is a
set of pairwise adjacent vertices of $G$.  By $G[S]$ we denote
\newekki{the} subgraph of $G$ induced by \newekki{the} vertices of $S
\subseteq V$. By $G\setminus S$ we denote the subgraph of $G$ induced by the vertices $V\setminus S$, i.e., the graph $G[V\setminus S]$.  For a vertex $v$ of $G$, the sets $N_G(v) = \{w \in V:
vw \in E\}$ and $N_G[v] = N_G(v) \cup \{v\}$ are called the \emph{open
  neighborhood} and the \emph{closed neighborhood} of $v$,
respectively.


In a graph $G$ the {\em length} of a path from a vertex $v$ to a
vertex $u$ is the number of edges in the path.  The {\em distance}
$d_G(u, v)$ between vertices $u$ and $v$ is the length of a shortest
path connecting $u$ and $v$ in $G$.  \newekki{For a set $S \subseteq
  V$ the {\em diameter} of $S$ in $G$} is $\max_{x, y \in S}d_G(x, y)$
and its {\em radius} in $G$ is $\min_{x \in V}\max_{y \in S}$ $d_G(x,
y)$ (in some papers \newekki{these terms} are called the {\em weak
  diameter} and the {\em weak radius} to indicate that the distances
are measured in $G$ not in $G[S]$).  The distance between a vertex $v$
and a set $S$ of $G$ is \newekki{given by} $d_G(v,S) = \min_{u \in S}
d_G(v, u)$.  The \emph{disk} of radius $k$ centered at vertex $v$ in
$G$ is the set of all vertices at distance at most $k$ from $v$, i.e.,
$D_G(v, k) = \{w \in V: d_G(v, w) \leq k\}.$
\commentout{ 
  Let also $G\setminus S$ be the graph $G[V\setminus S]$ (which is not
  necessarily connected). A set $S\subseteq V$ is called a {\em
    separator} of \muadstuff{a connected graph} $G$ if the graph
  $G[V\setminus S]$ has more than one connected component, and $S$ is
  called a {\em balanced separator} of $G$ if each connected component
  of $G[V\setminus S]$ has at most $|V|/2$ vertices. A set $C\subseteq
  V$ is called a {\em balanced clique-separator} of $G$ if $C$ is both
  a clique and a balanced separator of $G$.
}




The following result generalizes a characteristic property of the
famous class of AT-free graphs (see~\cite{CoOlSt97}).  An independent
set of three vertices such that each pair is joined by a path that
avoids the neighborhood of the third is called an {\em asteroidal
  triple}.  A graph $G$ is an {\em AT-free graph} if it does not
contain any asteroidal triples~\cite{CoOlSt97}.

\begin{proposition}\label{prop:AT-pl}
  Let $G$ be a graph with $\pl(G) \leq \lambda$. Then, for every three
  vertices $u, v, w$ of $G$ there is one vertex, say $v$, such that
  the disk of radius $\lambda$ centered at $v$ intercepts every path
  connecting $u$ and $w$, i.e., \newekki{after} the removal of the
  disk $D_G(v,\lambda)$ from $G$, \newekki{ $u$ and $w$ are not in the
    same connected component of $G \setminus D_G(v,\lambda)$.}
\end{proposition}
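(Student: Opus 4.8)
The plan is to use a path-decomposition of $G$ of length at most $\lambda$ and to exploit the standard fact that every bag of a path-decomposition is a vertex-separator of the graph. First I would fix a path-decomposition $\cP(G)=\{X_i : i\in I\}$, $I=\{1,\dots,q\}$, with $\max_{i\in I}\max_{x,y\in X_i}d_G(x,y)\le \lambda$. By property~(3) of a path-decomposition, for every vertex $x$ the index set $\{i\in I : x\in X_i\}$ is a contiguous subinterval of $I$, which I denote $[l(x),r(x)]$. Two facts I would record at the start are: (a) for every index $j\in I$ there is no path avoiding $X_j$ between a vertex $x$ with $r(x)<j$ and a vertex $y$ with $l(y)>j$ — the usual separator property, which follows because every vertex outside $X_j$ has its whole index interval to the left of $j$ or to the right of $j$, and an edge joining a left-vertex to a right-vertex would force a common bag, which is impossible; and (b) whenever $v\in X_j$, the length bound gives $d_G(v,x)\le\lambda$ for every $x\in X_j$, so $X_j\subseteq D_G(v,\lambda)$.

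Next I would choose which of the three given vertices plays the role of $v$. Relabelling if necessary, assume $l(u)\le l(v)\le l(w)$, and put $j:=l(v)$, so that $v\in X_j$ and hence $X_j\subseteq D_G(v,\lambda)$ by (b). Then I distinguish cases according to whether $u$, respectively $w$, also lies in $X_j$. If $u\in X_j$ or $w\in X_j$, then $u$ or $w$ lies in $D_G(v,\lambda)$ and is therefore deleted together with the disk, so the conclusion holds trivially. In the remaining case $u\notin X_j$ and $w\notin X_j$; since the index interval of $u$ does not contain $j$ but $l(u)\le j$, we get $r(u)<j$, and since the interval of $w$ does not contain $j$ but $l(w)\ge j$, we get $l(w)>j$. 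Thus $u$ and $w$ lie on opposite sides of the separator $X_j$ in the sense of (a).

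Finally I would conclude: in that last case every $u$--$w$ path meets $X_j\subseteq D_G(v,\lambda)$, so deleting $D_G(v,\lambda)$ removes a superset of the separator $X_j$; hence every connected component of $G\setminus D_G(v,\lambda)$ is contained in a connected component of $G\setminus X_j$, and $u$ and $w$ (if they survive at all) lie in different such components. Therefore $u$ and $w$ are in different connected components of $G\setminus D_G(v,\lambda)$, which is the claim. I do not expect a genuine obstacle here; the only points that need care are stating the separator property precisely (it relies on the fact that the bags containing a fixed vertex form a contiguous subsequence) and cleanly disposing of the degenerate subcases where $u$ or $w$ already lies in the bag $X_j$ and hence is simply deleted.
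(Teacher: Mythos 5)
Your proposal is correct and follows essentially the same route as the paper's own proof: pick the vertex whose bag-interval lies between the other two, use the fact that a bag is a separator (property~3 plus property~2 of a path-decomposition), and note that the bag has diameter at most $\lambda$ and hence is contained in $D_G(v,\lambda)$. Your explicit interval notation $[l(x),r(x)]$ and the careful handling of the degenerate subcases where $u$ or $w$ already lies in $X_{l(v)}$ merely make precise the paper's ``two of them belong to the same bag'' case and its without-loss-of-generality ordering of the bags.
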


\begin{proof}
  Consider a path-decomposition $\cP(G) = \{X_i: i\in I\}$ of $G$ with
  length $\pl(G) \leq \lambda$. 
  Consider
  any three vertices $u, v, w$ of $G$.  If any two of them, say $u$
  \newekki{and} $v$, belong to same bag $X_i$ \newekki{of $\cP(G)$}
  then \newekki{the} disk $D_{G}(v,\lambda)$ contains vertex $u$ and
  hence intercepts every path of $G$ connecting vertices $u$ and $w$.
  Assume now, without loss of generality, that all bags containing
  vertex $u$ have smaller indexes in $I$ than all bags containing
  vertex $v$, and, in turn, all bags containing vertex $v$ have
  smaller indexes in $I$ than all bags containing vertex $w$.  Then,
  by properties of path-decompositions (see~\cite{Diestel00,RS83}),
  every $u,w$-path of $G$ \newekki{contains at least one} vertex in
  \newekki{each of the bags} between the bags containing $u$ and the
  bags containing $w$ in \newekki{the} sequence $\{X_i: i\in
  I\}$. 
  Hence, every
  bag $X_i$ containing $v$ intercepts every path connecting $u$ and
  $w$ in $G$.  Since $X_i$ is a subset of $D_{G}(v, \lambda)$, the
  proof is complete. \qed
\end{proof}

Since for every graph $G$, $\pl(G) \leq 2 \, \pb(G)$, the following
statement is also true.

\begin{corollary}\label{cor:AT-pb}
  Let $G$ be a graph with $\pb(G) \leq \rho$. Then, for every three
  vertices 
  of $G$, the disk of radius $2 \rho$ centered at one of them
  intercepts every path connecting the \newekki{other two} vertices.
\end{corollary}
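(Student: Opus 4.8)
The plan is to obtain this as a direct consequence of Proposition~\ref{prop:AT-pl}, once we bound the \emph{length} of a path-decomposition in terms of its \emph{breadth}. So first I would establish the elementary inequality $\pl(G) \leq 2\,\pb(G)$, already quoted in the text. Fix a path-decomposition $\cP(G) = \{X_i : i \in I\}$ of breadth at most $\rho$, and for each $i$ let $v_i$ be a vertex with $X_i \subseteq D_G(v_i,\rho)$. For any two vertices $x,y \in X_i$ the triangle inequality gives $d_G(x,y) \leq d_G(x,v_i) + d_G(v_i,y) \leq \rho + \rho = 2\rho$, so every bag has diameter at most $2\rho$ in $G$; taking the maximum over $i \in I$ shows that $\cP(G)$ has length at most $2\rho$, hence $\pl(G) \leq 2\rho$.

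Next I would simply invoke Proposition~\ref{prop:AT-pl} with $\lambda = 2\rho$: for any three vertices $u,v,w$ of $G$ there is one of them, say $v$, such that $D_G(v,2\rho)$ intercepts every path connecting $u$ and $w$, i.e., $u$ and $w$ lie in different connected components of $G \setminus D_G(v,2\rho)$. Relabelling so that $v$ is the distinguished vertex, this is exactly the statement of the corollary.

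There is no real obstacle here; the argument is a two-line composition of Proposition~\ref{prop:AT-pl} with the breadth-to-length conversion. The only point worth a word of care is that the radius $2\rho$ comes straight from the triangle-inequality estimate and is not claimed to be optimal — the corollary asserts only the bound $2\rho$, so this crude estimate is all that is needed.
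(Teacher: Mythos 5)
Your argument is correct and is exactly the paper's route: the paper also derives the corollary in one line from Proposition~\ref{prop:AT-pl} together with the inequality $\pl(G) \leq 2\,\pb(G)$ (which it had already recorded as evident right after defining path-breadth). Your only addition is spelling out the triangle-inequality justification of that inequality, which is a fine, harmless elaboration.
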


We will also need the following property of graphs with path-length
$\lambda$.  A path $P$ of a graph $G$ is called {\em $k$-dominating
  path} of $G$ if every vertex $v$ of $G$ is at distance at most $k$
from a vertex of $P$, i.e., $d_G(v,P) \leq k$.  A pair of vertices
$x,y$ of $G$ is called a {\em $k$-dominating pair} if every path
between $x$ and $y$ is a $k$-dominating path of $G$. It is known that
every AT-free graph has a 1-dominating pair~\cite{CoOlSt97}.

\begin{corollary}\label{cor:DP-pb}
  Every graph $G$ with $\pl(G) \leq \lambda$ has a
  $\lambda$-dominating pair. 
\end{corollary}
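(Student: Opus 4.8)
The plan is to read off the dominating pair directly from a path-decomposition. Fix a path-decomposition $\cP(G)=\{X_i:i\in I\}$ of $G$ of length $\pl(G)\le\lambda$, with $I=\{1,2,\dots,q\}$; I may first delete any end-bag that is contained in its neighbour, which neither violates the path-decomposition axioms nor increases the length, so I may assume $X_1\not\subseteq X_2$ and $X_q\not\subseteq X_{q-1}$ (and if $q=1$ the whole vertex set is a single bag of diameter $\le\lambda$ and the claim below is immediate, since any endpoint of any path is then within distance $\lambda$ of every vertex). Under this assumption there is a vertex $x$ that belongs only to $X_1$ (take $x\in X_1\setminus X_2$; by contiguity of the bags containing $x$ it then lies in no $X_i$ with $i\ge 2$) and, symmetrically, a vertex $y$ that belongs only to $X_q$; in particular $x\ne y$. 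I claim that $x,y$ form a $\lambda$-dominating pair.

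To prove the claim, let $P$ be an arbitrary path from $x$ to $y$ and let $v$ be an arbitrary vertex of $G$, say with $v\in X_j$. By exactly the argument already used in the proof of Proposition~\ref{prop:AT-pl} --- every $x,y$-path of $G$ contains a vertex of each bag lying between a bag containing $x$ and a bag containing $y$ --- and since here $x\in X_1$ and $y\in X_q$ lie in the extreme bags, the path $P$ contains a vertex of \emph{every} bag, in particular a vertex $p\in X_j$. Because $X_j$ has diameter at most $\lambda$ in $G$, we get $d_G(v,P)\le d_G(v,p)\le\lambda$. As $v$ and $P$ were arbitrary, every path between $x$ and $y$ is a $\lambda$-dominating path, i.e.\ $(x,y)$ is a $\lambda$-dominating pair, which is what we wanted.

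The only nontrivial ingredient is the ``each bag separates the graph'' property of path-decompositions invoked in the second paragraph, and this has in effect already been established within the proof of Proposition~\ref{prop:AT-pl}, so I expect no real obstacle. The one place that needs a moment's care is the degenerate situation in which $X_1$ or $X_q$ is redundant (which could otherwise force $x=y$ and make the notion of an $x,y$-path vacuous), and this is precisely what the initial clean-up of $\cP(G)$ takes care of.
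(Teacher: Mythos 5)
Your proof is correct and is essentially the paper's own argument: take $x\in X_1$ and $y\in X_q$, observe that every $x,y$-path meets every bag of the decomposition, and conclude via the diameter bound on the bags. The preliminary pruning of redundant end-bags is harmless but not needed; the paper's version works for arbitrary $x\in X_1$, $y\in X_q$ (even in the degenerate case where they coincide, the single-vertex ``path'' lies in every bag by contiguity).
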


\begin{proof}
  Consider a path-decomposition $\cP(G) = \{X_1,X_2,\dots,X_q\}$ of
  length $\pl(G) \leq \lambda$ of $G$.  Consider any two vertices $x
  \in X_1$ and $y \in X_q$ and a path $P$ between them in $G$.
  Necessarily, by properties of path-decompositions
  (see~\cite{Diestel00,RS83}), every path of $G$ connecting vertices
  $x$ and $y$ has a vertex in every bag of $\cP(G)$.  Hence, as each
  vertex $v$ of $G$ belongs to \newekki{some} bag \newekki{$X_{i}$} of
  $\cP(G)$, there is a vertex $u \in P$ with \newekki{$u \in X_i$ and
    thus} $d_G(v,u)\leq \lambda$.  \qed
\end{proof}

  
  \newekki{It is easy to see that a pair of vertices $x$ and $y$ is a $k$-dominating pair if and only  if, for every vertex
$w\in V\setminus(D_G(x,k)\cup D_G(y,k))$,  the disk $D_G(w,k)$ separates $x$ and $y$. Hence, a
   $k$-dominating pair, with minimum $k$, of an
  arbitrary graph $G = (V,E)$ with $n$ vertices and $m$ edges can be
  found in $\cO(n^3 \log n)$ time as follows.  As an outer loop use a
  binary search to find the minimum $k$.  Inside this loop determine
  for the corresponding $k$ whether there is a $k$-dominating pair by
  the following method.  For each vertex $v$ of $G$ determine the
  connected components of $G \setminus D_{G}(v,k)$, label each vertex
  $x$ in $G \setminus D_{G}(v,k)$ with its connected component, and
  put all these labels in a $n \times n$ matrix $M$, such that in
  $M(v,x)$ is the label of the connected component of vertex $x$ in $G
  \setminus D_{G}(v,k)$.  This matrix $M$ can easily be determined in
  $\cO(n (n+m))$ (for each vertex $v$ remove $D_{G}(v,k)$ and
  determine the connected components of the remaining graph by a Breadth-First-Search).
  Now a pair of vertices $x, y$ is a $k$-dominating pair if and only
  if the columns of $x$ and $y$ in $M$ have different labels in every
  row corresponding to a vertex $w$ with  $w\in V\setminus(D_G(x,k)\cup D_G(y,k))$.  Thus, an easy $\cO(n^3)$ algorithm for checking whether there
  is a $k$-dominating pair in $G$ is simply comparing for each pair of
  vertices the corresponding columns.}


\newekki{It is not very likely that there is a linear time algorithm
  to find a dominating pair, if it exists, since it is shown
  in~\cite{kratsch-spinrad-matrixmult} that finding a dominating pair
  is essentially as hard as finding a triangle in a graph.  Yet, since
  path-decompositions with small length are closely related to
  $k$-domination one can search for $k$-dominating pairs in dependence
  of the path-length of a graph.}  We do not know how to find {\sl in
  linear time} for an arbitrary graph $G$ a $k$-dominating pair with
$k\leq \pl(G)$.
\newekki{However,} we \newekki{can} prove the following weaker result
which will be useful in later sections.

\begin{proposition}\label{prop:dom-pair-linear}
  \newekki{Let $G$ be an arbitrary graph for which the path-length is
    not necessarily known.}  There is a \newekki{linear time algorithm
    that determines} a $k$-dominating pair of $G$ \newekki{such that
    $k \leq 2 \, \pl(G)$.}

\end{proposition}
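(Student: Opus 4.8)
The plan is to show that a simple two‑sweep breadth‑first search already produces a pair with the required property. Pick an arbitrary vertex $s$, run a BFS from $s$ and let $x$ be a farthest vertex from $s$ (so $d_G(s,x)=\max_{v\in V}d_G(s,v)$); then run a BFS from $x$ and let $y$ be a farthest vertex from $x$ (so $d_G(x,y)=\max_{v\in V}d_G(x,v)$). Two BFS runs take $\cO(n+m)$ time, so the algorithm is linear, and it remains to prove that $\{x,y\}$ is a $k$-dominating pair for $k:=2\,\pl(G)$. Write $\lambda:=\pl(G)$. By the equivalence recalled above, it suffices to take any vertex $w$ with $d_G(w,x)>2\lambda$ and $d_G(w,y)>2\lambda$ and to show that $D_G(w,2\lambda)$ separates $x$ and $y$ (if no such $w$ exists, $\{x,y\}$ is trivially a $2\lambda$-dominating pair). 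Fix such a $w$. Since $d_G(x,y)\ge d_G(x,w)>2\lambda$, the three vertices $x,y,w$ are pairwise at distance more than $2\lambda$, hence pairwise distinct, and $s\neq x$.

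I would now apply Proposition~\ref{prop:AT-pl} to the triple $\{x,y,w\}$ (using $\pl(G)=\lambda$): one of these three vertices, call it $v$, satisfies that $D_G(v,\lambda)$ intercepts every path between the other two. Because the three pairwise distances all exceed $2\lambda\ge\lambda$, none of $x,y,w$ lies in the $\lambda$-disk of another, so this conclusion says precisely that $D_G(v,\lambda)$ separates the other two into distinct connected components of $G\setminus D_G(v,\lambda)$. If I can show that $v$ cannot be $x$ and cannot be $y$, then $v=w$, and since $D_G(w,\lambda)\subseteq D_G(w,2\lambda)$ this gives exactly what is needed. (Seen this way, Corollary~\ref{cor:DP-pb} is the special case using a vertex of the first bag and a vertex of the last bag; here we trade a factor of $2$ for not having to compute a path-decomposition.)

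Ruling out $v=y$ should use only the defining maximality of $y$. Suppose $D_G(y,\lambda)$ separated $x$ from $w$; then a shortest $x,w$-path meets $D_G(y,\lambda)$ in a vertex $z$ with $d_G(z,y)\le\lambda$, so
\[
  d_G(x,w)=d_G(x,z)+d_G(z,w)\ge \bigl(d_G(x,y)-\lambda\bigr)+\bigl(d_G(y,w)-\lambda\bigr).
\]
Since $y$ is a farthest vertex from $x$ we have $d_G(x,w)\le d_G(x,y)$, and combining the two inequalities yields $d_G(y,w)\le 2\lambda$, contradicting the choice of $w$. Ruling out $v=x$ is the part that genuinely uses the choice of $s$. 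First note $s\notin D_G(x,\lambda)$, since $d_G(s,x)\le\lambda$ would imply $d_G(x,w)\le d_G(x,s)+d_G(s,w)\le 2\lambda$. Now suppose $D_G(x,\lambda)$ separated $y$ from $w$. Then $s$ lies in some component of $G\setminus D_G(x,\lambda)$ while $y$ and $w$ lie in two different ones, so $D_G(x,\lambda)$ must separate $s$ from some $t\in\{y,w\}$; recall that $d_G(x,t)>2\lambda$. A shortest $s,t$-path then meets $D_G(x,\lambda)$ in a vertex $z$ with $d_G(z,x)\le\lambda$, and therefore
\[
  d_G(s,t)=d_G(s,z)+d_G(z,t)\ge \bigl(d_G(s,x)-\lambda\bigr)+\bigl(d_G(x,t)-\lambda\bigr)>d_G(s,x),
\]
which contradicts the fact that $x$ is a farthest vertex from $s$. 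Hence $v=w$, and the proof is complete.

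The only step that is not a routine triangle-inequality estimate is the elimination of $v=x$: there one must first locate the BFS source $s$ relative to the separating disk $D_G(x,\lambda)$ and split into the two subcases "$s$ on the far side from $w$" and "$s$ on the far side from $y$" before the distance bound can be invoked. This is precisely where it matters that $x$ was chosen as an eccentricity vertex of $s$ rather than as an arbitrary vertex; eliminating $v=y$ is cheaper and needs only that $y$ realizes the eccentricity of $x$.
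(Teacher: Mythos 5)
Your proof is correct, but it takes a genuinely different route from the paper's. The paper argues directly on a path-decomposition $\{X_1,\dots,X_q\}$ of length $\lambda$: it locates $x$ in a bag $X_i$ and $y$ in a bag $X_j$, shows by a triangle-inequality argument that every vertex living only in bags before $X_i$ (resp.\ after $X_j$) is within $2\lambda$ of $x$ (resp.\ of $y$), and observes that any $x,y$-path meets every intermediate bag and hence $\lambda$-dominates everything in between; this directly verifies that every $x,y$-path is $2\lambda$-dominating. You instead invoke the separator characterization of $k$-dominating pairs stated after Corollary~\ref{cor:DP-pb}, apply the three-point separation property of Proposition~\ref{prop:AT-pl} to a triple $\{x,y,w\}$ with all pairwise distances exceeding $2\lambda$, and eliminate the cases $v=y$ and $v=x$ using only the eccentricity maximality of $y$ (with respect to $x$) and of $x$ (with respect to $s$), the latter requiring the extra step of locating $s$ relative to $D_G(x,\lambda)$. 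I checked the two case eliminations and the final inclusion $D_G(w,\lambda)\subseteq D_G(w,2\lambda)$ (neither $x$ nor $y$ lies in the larger disk, so enlarging the separator preserves separation); all the inequalities go through, including the strict one in the $v=x$ case, which uses $d_G(x,t)>2\lambda$. What your approach buys is modularity: once Proposition~\ref{prop:AT-pl} is granted, the argument never touches the decomposition again and explains \emph{why} the two-sweep BFS works (each sweep kills one of the three candidate separating centers); the cost is a more delicate case analysis, whereas the paper's direct bag-partition argument is shorter and also exhibits explicitly which vertices are dominated by $x$, which by $y$, and which by the interior of the path. Both yield the same bound $k\le 2\,\pl(G)$ and the same linear running time.
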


\begin{proof}
  Let $\pl(G)= \lambda$. Consider a path-decomposition $\cP(G) =
  \{X_1, X_2, \ldots, X_q\}$ of $G$ of length $\lambda$.  Consider an
  arbitrary vertex $s$ of $G$ and, using a Breadth-First-Search
  $BFS(s,G)$ of $G$ started at $s$, find a vertex $x$ \newekki{of
    maximum distance} from $s$.  Use \newekki{a} second
  Breadth-First-Search $BFS(x,G)$ of $G$ \newekki{that is} started at
  $x$ to find a vertex $y$ \newekki{of maximum distance} from $x$.  We
  claim that $x,y$ is a $2 \, \lambda$-dominating pair of $G$.

  If there is a bag in $\cP(G)$ containing both $s$ and $x$, then
  $d_G(s,x) \leq \lambda$ and, by the choice of $x$, each vertex of
  $G$ is within distance at most $\lambda$ from $s$ and, hence, within
  distance at most $2 \, \lambda$ from $x$.  Evidently, in this case,
  $x,y$ is a $2 \, \lambda$-dominating pair of $G$.

  Assume now, without loss of generality, that $x \in X_i$ and $s \in
  X_l$ with $i < l$.  Consider an arbitrary vertex $v$ of $G$ that
  belongs to only bags with indexes smaller than $i$.  We show that
  $d_G(x, v) \leq 2 \, \lambda$.  As $X_i$ separates $v$ from $s$, a
  shortest path $P(s,v)$ of $G$ between $s$ and $v$ must have a vertex
  $u$ in $X_i$.  We have $d_G(s,x) \geq d_G(s,v) = d_G(s,u) +
  d_G(u,v)$ and, by the triangle inequality, $d_G(s,x) \leq d_G(s, u)
  + d_G(u, x)$.  Hence, $d_G(u, v) \leq d_G(u, x)$ and, since both $u$
  and $x$ belong to same bag $X_i$, $d_G(u, x) \leq \lambda$. That is,
  $d_G(x, v) \leq d_G(x, u) + d_G(u, v) \leq 2 d_G(u, x) \leq 2 \,
  \lambda$.

  If $d_G(x, y) \leq 2 \, \lambda$ then, by the choice of $y$, each
  vertex of $G$ is within distance at most $2 \, \lambda$ from $x$
  and, hence, $x, y$ is a $2 \, \lambda$-dominating pair of $G$.  So,
  assume that $d_G(x, y)> 2 \, \lambda$, i.e., every bag of $\cP(G)$
  that contains $y$ has index greater than $i$.  Consider a bag $X_j$
  containing $y$.  We have $i < j$. Repeating the arguments of the
  previous paragraph, we can show that $d_G(y, v) \leq 2 \, \lambda$
  for every vertex $v$ that belongs to bags with indexes greater than
  $j$.

  Consider now an arbitrary path $P$ of $G$ connecting vertices $x$
  and $y$.  By properties of path-decompositions
  (see~\cite{Diestel00,RS83}), $P$ has a vertex in every bag $X_h$ of
  $\cP(G)$ with $i \leq h \leq j$. Hence, for each vertex $v$ of $G$
  that belongs to a bag $X_h$ ($i \leq h \leq j$), there is a vertex
  $u \in P$ (in that bag $X_h$) such that $d_G(v, u) \leq \lambda$.
  As $d_G(v, x) \leq 2 \, \lambda$ for each vertex $v$ from $X_{i'}$
  with $i' < i$ and $d_G(v, y) \leq 2 \, \lambda$ for each vertex $v$
  from $X_{j'}$ with $j' > j$, we conclude that $P$ is a $2 \,
  \lambda$-dominating path of $G$.~\qed
\end{proof}

In Algorithm~\ref{algo:2plDomPath}, we formalize the method in the
proof above to calculate a $k$-dominating shortest path with $k\leq
2\pl(G)$ in linear time.

\begin{algorithm}
  \KwIn{A graph $G$.}%
  \KwOut{A $k$-dominating shortest path with $k\leq 2\pl(G)$.}%

  \smallskip

  Select an arbitrary vertex $s$.

  Find a vertex $x$ for which the distan\newekki{ce} to $s$ is maximal.

  Find a vertex $y$ for which the distan\newekki{ce} to $x$ is maximal.

  Output a shortest path from $x$ to $y$.

  \caption{Finding a $k$-dominating shortest path of $G$ with $k \leq
    2\pl(G)$.}\label{algo:2plDomPath}
\end{algorithm}

The following proposition further strengthens the connections between
small path-length graphs and AT-free graphs.  Recall that the
$k$-power of a graph $G = (V,E)$ is a graph $G^{k} = (V, E')$ such
that for every $x,y \in V$ ($x \neq y$), $xy \in E'$ if and only if
$d_G(x, y) \leq k$.

\begin{proposition}
  For a graph $G$ with $\pl(G) \leq \lambda$, $G^{2\,\lambda-1}$ is an
  AT-free graph.
\end{proposition}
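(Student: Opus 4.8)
The plan is to argue by contradiction. Suppose $H := G^{2\lambda-1}$ contains an asteroidal triple $\{a,b,c\}$. Since $\{a,b,c\}$ is independent in $H$, the three vertices are pairwise at distance at least $2\lambda$ in $G$; in particular they are distinct and none of them lies in a disk of radius $\lambda$ around another. Also, by the definition of the power graph, $N_H[x] = D_G(x,2\lambda-1)$ for every vertex $x$. I would then invoke Proposition~\ref{prop:AT-pl} applied to the triple $a,b,c$ \emph{viewed in $G$}: it produces one of the three vertices — relabel it $b$ — such that $D_G(b,\lambda)$ intercepts every path of $G$ connecting $a$ and $c$.

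Next I would use the asteroidal-triple property of $\{a,b,c\}$ in $H$ for exactly this choice of ``third'' vertex $b$: there is an $a$--$c$ path $a=v_0,v_1,\dots,v_m=c$ in $H$ that avoids $N_H[b] = D_G(b,2\lambda-1)$, so $d_G(v_i,b)\ge 2\lambda$ for all $i$. The heart of the argument is to \emph{lift} this $H$-path to a walk in $G$: since $v_iv_{i+1}\in E(H)$ means $d_G(v_i,v_{i+1})\le 2\lambda-1$, replace each such edge by a shortest $G$-path $Q_i$ between its endpoints and concatenate the $Q_i$. I would then show that every vertex $w$ on $Q_i$ satisfies $d_G(w,b)>\lambda$: from $d_G(v_i,w)+d_G(w,v_{i+1}) = d_G(v_i,v_{i+1})\le 2\lambda-1$, at least one of the two summands — say $d_G(v_i,w)$ — is at most $\lambda-1$, whence $d_G(w,b)\ge d_G(v_i,b)-d_G(v_i,w)\ge 2\lambda-(\lambda-1)=\lambda+1$. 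Consequently the whole concatenated walk lies in $G\setminus D_G(b,\lambda)$, so $a$ and $c$ are in a common connected component of $G\setminus D_G(b,\lambda)$, contradicting Proposition~\ref{prop:AT-pl}. Hence $H$ has no asteroidal triple.

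The one delicate point to watch is the arithmetic that makes this work: it is essential that the exponent is $2\lambda-1$ rather than $2\lambda$, because a shortest path of length at most $2\lambda-1$ cannot contain a vertex that is simultaneously at distance $\ge\lambda$ from both of its endpoints — with $2\lambda$ in place of $2\lambda-1$ the midpoint of such a path could sit at distance exactly $\lambda$ from $b$, i.e.\ on the boundary of the forbidden disk, and the lifting argument would collapse. Everything else is routine: the identity $N_H[x]=D_G(x,2\lambda-1)$, the passage from a walk avoiding a vertex set to ``same connected component of $G$ minus that set'', and the trivial base case $\lambda=1$, where the statement merely asserts that interval graphs are AT-free.
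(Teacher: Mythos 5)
Your proposal is correct and takes essentially the same route as the paper's proof: both single out the ``middle'' vertex $b$ whose radius-$\lambda$ set separates $a$ from $c$ in $G$ (the paper uses the bag $B_b$ directly, you invoke Proposition~\ref{prop:AT-pl} and the disk $D_G(b,\lambda)$), and both rest on the same triangle-inequality computation that a shortest $G$-path of length at most $2\lambda-1$ realizing an edge of $G^{2\lambda-1}$ has every vertex within $\lambda-1$ of one endpoint, which is exactly where the exponent $2\lambda-1$ is needed. Your contradiction-plus-lifting packaging is just the contrapositive of the paper's direct argument that any such path from $a$ to $c$ must meet $D_G(b,2\lambda-1)=N_{G^{2\lambda-1}}[b]$.
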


\begin{proof}
  Let $\cP(G)$ be a path-decomposition of length $\pl(G)\leq \lambda$
  of $G$ \newekki{and let $G^{2\,\lambda-1} = (V, E')$ be the
    $(2\,\lambda-1)$-power of $G$}.  Consider three arbitrary distinct vertices
  $a$, $b$ and $c$ of $G$.  If for two of those vertices there is a
  bag~$B \in \cP(G)$ containing both, then $a$, $b$ and $c$ cannot be
  an asteroidal triple in $G^{2\,\lambda-1}$, since they do not form
  an independent set in $G^{2\,\lambda-1}$.  Now assume that
  \newekki{no bag contains more than one of $a$, $b$, and
    $c$.}  
  Without loss of generality, we can assume that $b$ is in a bag~$B_b$
  between the bags containing $a$ and the bags containing $c$.  Assume
  \newekki{that} there is a path from $a$ to $c$ in $G^{2\,\lambda-1}$
  avoiding $B_b$.  
  Then, there is an edge $uv \in E'\setminus E$ such that the bags
  containing $u$ and the bags containing $v$ are separated by $B_b$
  \newekki{in $G$}.  Since $uv \in E'$, there is a path of length at
  most $2\,\lambda-1$ from $u$ to $v$ in $G$, and \newekki{again, by
    properties of path-decompositions (see~\cite{Diestel00,RS83}),}
  this path must contain a vertex~$w \in B_b$.  Without loss of
  generality, let $d_G(u, w) \leq d_G(v,w)$.  Since $d_G(u, w) \leq
  d_G(u, v)/2 \leq \lambda-1$, $d_G(b, u) \leq d_G(b, w) + d_G(u, w)
  \leq 2\,\lambda - 1$, i.e., $u \in D_G(b, 2\,\lambda-1)$.  Thus,
  each path from $a$ to $c$ of $G^{2\,\lambda - 1}$ intersects $D_G(b,
  2\,\lambda-1)$, \newekki{implying that $a, b, c$ cannot form an
    asteroidal triple in $G^{2\,\lambda - 1}$}. 
  \qed 
\end{proof}

\begin{corollary}
  If $\pb(G) \leq \rho$, then $G^{4\rho-1}$ is AT-free.
\end{corollary}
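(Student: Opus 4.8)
The plan is to reduce this immediately to the preceding proposition by means of the elementary inequality relating path-breadth and path-length. Recall from the introduction that for every graph $G$ with at least one edge we have $\pl(G) \leq 2\,\pb(G)$: given a path-decomposition in which every bag $X_i$ is covered by a disk $D_G(v_i,\rho)$, any two vertices of $X_i$ are both within distance $\rho$ of $v_i$, hence within distance $2\rho$ of each other, so that same decomposition has length at most $2\rho$.

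So the argument runs as follows. First I would set $\lambda := 2\rho$ and observe that the hypothesis $\pb(G) \leq \rho$ gives $\pl(G) \leq 2\,\pb(G) \leq 2\rho = \lambda$. Then I would simply apply the previous proposition (``for a graph $G$ with $\pl(G) \leq \lambda$, $G^{2\,\lambda-1}$ is AT-free'') to conclude that $G^{2\lambda - 1}$ is AT-free. Finally, substituting $\lambda = 2\rho$ yields $2\lambda - 1 = 4\rho - 1$, so $G^{4\rho-1}$ is AT-free, as claimed.

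There is essentially no obstacle here: the corollary is a direct specialization of the proposition through the chain $\pb \mapsto \pl$. The only thing worth a word of care is the degenerate case of a graph with no edges (where $\pb$ and $\pl$ are both taken to be $0$ or undefined), but the paper already restricts attention to connected graphs on $n > 1$ vertices, so $G$ has at least one edge and the inequality $\pl(G) \leq 2\,\pb(G)$ applies without qualification. Hence the proof is a single short deduction and needs no further case analysis.
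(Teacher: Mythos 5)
Your proposal is correct and matches the paper's (implicit) derivation exactly: the corollary is obtained by substituting $\pl(G) \leq 2\,\pb(G) \leq 2\rho$ into the preceding proposition, so $G^{2(2\rho)-1} = G^{4\rho-1}$ is AT-free. Nothing further is needed.
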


A subset of vertices of a graph is called {\em connected} if the
subgraph induced by those vertices is connected.  We say that two
connected sets $S_1$, $S_2$ of a graph $G$ {\em see each other} if
they have a common vertex or there is an edge in $G$ with one end in
$S_1$ and the other end in $S_2$.  A family of connected subsets of
$G$ is called a {\em bramble} if every two sets of the family see each
other.  We say that a bramble $\cF = \{S_1, \dots, S_h\}$ of $G$ is
$k$-dominated by a vertex $v$ of $G$ if in every set $S_i$ of $\cF$
there is a vertex $u_i \in S_i$ with $d_G(v, u_i) \leq k$.

\begin{proposition}\label{prop:bramble}
  For a graph $G$ with $\pb(G) \leq \rho$, every bramble of $G$ is
  $\rho$-dominated by a vertex.
\end{proposition}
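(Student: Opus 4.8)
The plan is to work with a path-decomposition $\cP(G) = \{X_1, \dots, X_q\}$ of $G$ of breadth $\rho$, and to combine two classical facts: that a connected vertex set occupies a contiguous block of bags, and that a pairwise-intersecting finite family of intervals on a line has a common point (the Helly property for intervals). Fix such a decomposition, and for each bag $X_i$ fix a vertex $v_i$ with $X_i \subseteq D_G(v_i, \rho)$; the vertex we produce will be one of these $v_i$.

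First I would associate to every set $S$ of the given bramble $\cF$ the index set $I_S := \{\, j : X_j \cap S \neq \emptyset \,\}$, and claim that $I_S$ is a nonempty interval of consecutive integers. Nonemptiness is immediate since $\bigcup_i X_i = V$. For contiguity I would argue by induction on $|S|$: for a single vertex this is exactly property~3 of path-decompositions; for the inductive step, write $S = S' \cup \{w\}$ with $w$ adjacent to some vertex of $S'$ (possible since $S$ is connected), observe that $I_{\{w\}}$ and $I_{S'}$ are intervals (by property~3 and the induction hypothesis), and that they intersect because the edge joining $w$ to $S'$ lies in a common bag by property~2; the union of two intersecting integer intervals is again an interval.

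Next I would check that the intervals $\{I_S : S \in \cF\}$ pairwise intersect, and this is precisely where the bramble condition is used. If $S, S' \in \cF$ share a vertex $x$, then every bag containing $x$ lies in $I_S \cap I_{S'}$. If instead there is an edge $xy$ of $G$ with $x \in S$ and $y \in S'$, then by property~2 some bag $X_k$ contains both $x$ and $y$, so $k \in I_S \cap I_{S'}$. In either case $I_S \cap I_{S'} \neq \emptyset$.

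Finally, by the Helly property of intervals there is an index $j^\ast$ with $j^\ast \in I_S$ for every $S \in \cF$, i.e.\ the bag $X_{j^\ast}$ meets every member of $\cF$. Choosing for each $S \in \cF$ a vertex $u_S \in X_{j^\ast} \cap S$ and using $X_{j^\ast} \subseteq D_G(v_{j^\ast}, \rho)$ gives $d_G(v_{j^\ast}, u_S) \leq \rho$ for every $S$, so $\cF$ is $\rho$-dominated by $v_{j^\ast}$. The only step that is not completely routine is the contiguity claim for $I_S$; once that induction is in place, the translation of the ``see each other'' relation into intersecting index sets and the Helly argument are both immediate.
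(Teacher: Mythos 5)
Your proof is correct and follows essentially the same route as the paper's: the same induction showing that a connected set meets a contiguous block of bags, the same translation of the ``see each other'' condition into pairwise-intersecting index intervals, and the same application of the Helly property for intervals followed by covering the common bag with a disk of radius $\rho$. The only detail worth making explicit (which the paper also only gestures at) is that in the inductive step you should choose $w$ so that $S' = S\setminus\{w\}$ remains connected, e.g.\ a leaf of a spanning tree of $G[S]$.
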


\begin{proof}
  Let $\cP(G) = \{X_1, X_2, \dots, X_q\}$ be a path-decomposition of
  breadth $\pb(G) \leq \rho$ of $G$.  Consider an arbitrary connected
  set $S$ of $G$.  We claim that the bags of $\cP(G)$ containing
  vertices of $S$ form a continuous subsequence $\cI(S)$ in $\{X_1,
  X_2, \dots, X_q\}$.  Assume, by induction on the cardinality of the
  set \newekki{$S$}, that the statement is true for \newekki{the}
  connected set $S' := S\setminus \{v\}$, where $v$ is some vertex of
  $S$ \newekki{(obviously, there is always such a vertex in $S$)}.
  Let $\cI(S')$ be the corresponding subsequence.  Since $S$ is
  connected, there must exist a vertex $u$ in $S'$ such that $uv \in
  E(G)$.  By properties \newekki{2 and 3} of \newekki{the}
  path-decomposition \newekki{(see definition on
    page~\pageref{prop:bramble})}, all bags containing vertex $v$ form
  a continuous subsequence $\cI(v)$ in $\{X_1, X_2, \dots, X_q\}$ and
  there is a bag in $\cP(G)$ which contains both vertices $u$ and $v$.
  Then, necessarily, all bags containing vertices of $S$ form a
  continuous subsequence in $\{X_1, X_2, \dots, X_q\}$; it is the
  union of \newekki{the} two continuous subsequences $\cI(S')$ and
  $\cI(v)$ sharing a common bag.

  \newekki{Now let} $\cF = \{S_1, \dots, S_h\}$ be an arbitrary
  bramble of $G$.  For every set $S_i$, the bags of $\cP(G)$
  containing vertices of $S_i$ form a continuous subsequence
  $\cI(S_i)$ in $\{X_1, X_2, \dots, X_q\}$.  Since each two sets of
  $\cF$ see each other, there must exist a bag in $\cP(G)$ that
  contains a vertex from $S_i$ and a vertex from $S_j$.  So, for every
  $i,j \in \{1, \dots, h\}$, the subsequences $\cI(S_i)$ and
  $\cI(S_j)$ overlap at least on one bag.  By the Helly property for
  intervals of a line (i.e., every family of pairwise intersecting
  intervals has a common intersection), there must exist a bag $B$ in
  $\cP(G)$ which has a vertex from each set $S_i$ ($i \in \{1, \dots,
  h\}$).  Let $v$ be a vertex of $G$ such that $B \subseteq D_G(v,
  \rho)$.  Then, $v$ necessarily $\rho$-dominates the bramble $\cF$.
  \qed
\end{proof}

The following result can be viewed as an analog of the classical Helly
property for disks.

\begin{corollary}\label{cor:helly}
  Let $G$ be a graph with $\pb(G) \leq \rho$, \newekki{let} $S$ be a
  subset of vertices of $G$\newekki{,} and \newekki{let} $r: S
  \rightarrow \mathbf{N}$ be a radius function defined on $S$ such
  that the disks of the family $\cF = \{D_G(x, r(x)) :x \in S\}$
  pairwise intersect.  Then the disks $\{D_G(x, r(x) + \rho) : x \in
  S\}$ have a nonempty common intersection.
\end{corollary}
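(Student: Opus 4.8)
The plan is to reduce the statement directly to Proposition~\ref{prop:bramble} by turning the family of disks into a bramble. First I would check that each disk $D_G(x, r(x))$ is a connected subset of $G$: if $w \in D_G(x, r(x))$, then a shortest $w$--$x$ path has length at most $r(x)$, so every vertex on it lies within distance $r(x)$ of $x$ and hence inside the disk; thus the disk induces a connected subgraph. Consequently $\cF = \{D_G(x, r(x)) : x \in S\}$ is a family of connected sets, and since by hypothesis any two of these disks intersect, any two of them share a common vertex and therefore see each other. Hence $\cF$ is a bramble of $G$.

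Next I would invoke Proposition~\ref{prop:bramble}: because $\pb(G) \leq \rho$, the bramble $\cF$ is $\rho$-dominated by some vertex $v \in V$. By definition of $\rho$-domination, for every $x \in S$ there is a vertex $u_x \in D_G(x, r(x))$ with $d_G(v, u_x) \leq \rho$. The triangle inequality then gives $d_G(v, x) \leq d_G(v, u_x) + d_G(u_x, x) \leq \rho + r(x)$, i.e.\ $v \in D_G(x, r(x) + \rho)$. Since this holds for every $x \in S$, the vertex $v$ lies in the common intersection of the disks $\{D_G(x, r(x) + \rho) : x \in S\}$, which is therefore nonempty.

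There is no serious obstacle here: the only point requiring a (one-line) verification is the connectedness of balls in $G$, which follows from the shortest-path argument above. The work is entirely in recognizing that pairwise-intersecting balls form a bramble and that Proposition~\ref{prop:bramble} is exactly the Helly-type statement needed. If one wanted a self-contained argument one could instead re-run the path-decomposition reasoning of Proposition~\ref{prop:bramble} directly (the sets $D_G(x,r(x))$ occupy contiguous subsequences of bags, pairwise overlapping, so by the Helly property for intervals some bag $B$ meets all of them; covering $B$ by a disk of radius $\rho$ about a vertex $v$ finishes it), but citing the proposition is cleaner.
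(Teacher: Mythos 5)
Your proposal is correct and follows essentially the same route as the paper: the paper's proof also observes that the pairwise-intersecting disks form a bramble and applies Proposition~\ref{prop:bramble} to obtain a $\rho$-dominating vertex lying in all the enlarged disks. You merely spell out two details the paper leaves implicit (connectedness of disks and the triangle-inequality step), which is fine.
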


\begin{proof}
  Since the family $\cF = \{D_G(x, r(x)) :x \in S\}$ is a bramble of
  $G$, a vertex $v$ of $G$ $\rho$-dominating the bramble $\cF$ belongs
  to all disks $\{D_G(x, r(x) + \rho) : x \in S\}$. \qed
\end{proof}

\commentout{ 
\begin{center}{\bf $\cdots$}\end{center}

Let $x,y,v$ be three arbitrary vertices of a graph $G$. The {\em
  Gromov product} of $x$ and $y$ at $v$, denoted $(x|y)_v$, is defined
by $$(x|y)_v:=\frac{1}{2}(d_G(x,v)+d_G(y,v)-d_G(x,y)).$$ It is easy to
see that $d_G(x,y)=(x|v)_y+(y|v)_x$
for every three vertices $x,y,v$.  We will need also the following
simple observation.

\begin{observation}\label{obs:opposite-corner}
Let $G$ be an arbitrary graph, $x,y,v$ be its arbitrary three vertices. If $d_G(x,y)\ge \max\{d_G(x,v),d_G(y,v)\}$,  then $(x|y)_v\le \min\{(x|v)_y,(y|v)_x\}$, i.e., if $d_G(x,y)$ is largest out of three distances $d_G(x,y)$, $d_G(x,v)$, $d_G(y,v)$, then $(x|y)_v$ is smallest out of three products $(x|y)_v$, $(x|v)_y$, $(y|v)_x$.
\end{observation}

\begin{proof}  We have $(x|v)_y+(y|v)_x=d_G(x,y)\ge d_G(v,y)=(x|y)_v+(x|v)_y$ and therefore
$(y|v)_x\ge (x|y)_v$. Analogously, $(x|v)_y\ge (x|y)_v$. \qed
\end{proof}

As a consequence,  we have the following auxiliary result which will be needed later.
\begin{observation}\label{obs:m*}
Let $G$ be an arbitrary graph, $x,y,v$ be its arbitrary three vertices, and $k$ be an arbitrary non-negative integer. The following two conditions are equivalent:
 \begin{enumerate}
   \item  $d_G(x,y)\ge \max\{d_G(x,v),d_G(y,v)\}$ implies $d_G(x,y)\ge d_G(x,v)+d_G(y,v)- 2k$, i.e., $(x|y)_v\le k$;
   \item  $\min\{(x|y)_v,(x|v)_y,(v|y)_x\}\leq k$.
 \end{enumerate}
\end{observation}

\commentout{
\begin{proof}  If $\min\{(x|y)_v,(x|v)_y,(v|y)_x\}\leq k$ and $d_G(x,y)$ is largest out of three distances $d_G(x,y)$, $d_G(x,v)$ and $d_G(y,v)$, then, by Observation \ref{obs:opposite-corner}, $(x|y)_v\le k$. If ...
\qed
\end{proof}
}

Using the Gromov product, we define the metric {\em path-defect} of a graph $G=(V,E)$ by $$\pd(G):=\max_{x,y,v\in V}\min\{(x|y)_v,(x|v)_y,(v|y)_x\}.$$ Clearly, the path-defect of an $n$-vertex graph $G$ can be computed in ${\cal O}(n^3)$ time. As, by the triangle inequality, $d_G(x,y)\le d_G(x,v)+d_G(y,v)$, observations above 
imply that, for every three vertices $x,y,v$ of a graph $G$ with $d_G(x,y)\ge d_G(x,v)\ge d_G(y,v)$, $0\le (d_G(x,v)+d_G(y,v))-d_G(x,y)= 2(x|y)_v \leq 2\min\{(x|y)_v,(x|v)_y,(v|y)_x\}\leq 2\cdot{\pd(G)}$. That is, larger distance can differ from the sum of two smaller distances by at most $2\cdot{\pd(G)}$. The path-defect of a graph indicates also how close the graph is metrically to a path. The graphs with path-defect equal to 0 are exactly the paths.

It turns out that the path-defect of a graph is bounded from above by the path-length.

\begin{proposition}\label{prop:pd-vs-pl}
For an arbitrary graph $G$, $\pd(G)\leq \pl(G)$.
\end{proposition}

\begin{proof} Consider arbitrary three vertices $x,y,v$ of $G$, and let $\pl(G)=\lambda$. We will show that $\min\{(x|y)_v,(x|v)_y,(v|y)_x\}\leq \lambda$. Assume, without loss of generality, that disk $D_{G}(v,\lambda)$  of radius $\lambda$ centered at $v$ intercepts every path connecting vertices $x$ and $y$ in $G$ (see Proposition \ref{prop:AT-pl}). Consider an arbitrary shortest path $P(x,y)$ between vertices $x$ and $y$ in $G$. Let $w$ be a vertex from $D_{G}(v,\lambda)\cap P(x,y)$. We have $2(x|y)_v=d_G(x,v)+d_G(y,v)-d_G(x,y)\leq d_G(x,w)+\lambda+d_G(y,w)+\lambda-d_G(x,y)=2 \lambda$, i.e., $(x|y)_v\leq \lambda$, implying $\min\{(x|y)_v,(x|v)_y,(v|y)_x\}\leq \lambda$.
\qed
\end{proof}

}

\section{Bandwidth of graphs with bounded path-length}
\label{sec:bw-vs-pl}

In this section we show that there is an efficient algorithm that for
any graph $G$ with $\pl(G)=\lambda$ produces a layout $f$ with
bandwidth at most 
$\cO(\lambda) \bw(G)$. Moreover, this statement is true even for all
graphs with $\lambda$-dominating shortest paths. Recall that a
shortest path $P$ of a graph $G$ is a {\em $k$-dominating shortest
  path} of $G$ if every vertex $v$ of $G$ is at distance at most $k$
from a vertex of $P$, i.e., $d_G(v,P)\leq k$. %

We will need the following standard ``local density'' lemma.

\begin{lemma}[\cite{lecturenotes}]\label{lem:Discbandwidth}
  For each vertex $v \in V$ of an arbitrary graph $G$ and each
  positive integer $r$, $$\frac{|D_G(v, r)| - 1}{2r} \leq \bw(G).$$
\end{lemma}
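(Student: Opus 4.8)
The plan is to use the standard argument that a disk of radius $r$ cannot be spread out too far in any layout, because each edge can move us at most $\bw(G)$ positions along the line. First I would fix an optimal layout $f\colon V \rightarrow \{1,\ldots,n\}$, so that $\bw(f) = \bw(G) =: b$, and fix the vertex $v$ and the positive integer $r$ from the statement.

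Next I would bound how far from $f(v)$ any vertex of $D_G(v,r)$ can be placed. Let $w \in D_G(v,r)$, and let $v = u_0, u_1, \ldots, u_t = w$ be a shortest path in $G$ from $v$ to $w$, so $t = d_G(v,w) \leq r$. Since each $u_{i-1}u_i$ is an edge, $|f(u_{i-1}) - f(u_i)| \leq b$, and by the triangle inequality for absolute values, $|f(v) - f(w)| \leq \sum_{i=1}^{t} |f(u_{i-1}) - f(u_i)| \leq tb \leq rb$. Hence every vertex of $D_G(v,r)$ is mapped by $f$ into the set of integers $\{\, f(v) - rb, \ldots, f(v) + rb \,\}$, which has $2rb + 1$ elements.

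Since $f$ is injective, $|D_G(v,r)| \leq 2rb + 1$, and rearranging gives $\frac{|D_G(v,r)| - 1}{2r} \leq b = \bw(G)$, as required.

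There is essentially no obstacle here: the only things to be careful about are (i) that one works with an \emph{optimal} layout so that $\bw(f) = \bw(G)$ rather than merely $\bw(f) \geq \bw(G)$, and (ii) that the shortest path has length \emph{at most} $r$ (not exactly $r$), so the bound $|f(v)-f(w)| \leq rb$ still holds. I would also note in passing that the lemma holds with $D_G(v,r)$ replaced by $G[\,D_G(v,r)\,]$ in the sense that only the graph distances in $G$ matter, which is exactly how it will be applied later.
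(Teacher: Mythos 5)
Your proof is correct and is exactly the standard ``local density'' argument that the paper relies on: the paper itself states this lemma without proof, citing it from lecture notes, and the cited argument is precisely the one you give (every vertex of $D_G(v,r)$ lies within $r\cdot\bw(G)$ positions of $f(v)$ in an optimal layout, so the disk occupies at most $2r\,\bw(G)+1$ slots). Nothing is missing; your two cautionary points (using an optimal layout, and $t\le r$ rather than $t=r$) are handled correctly.
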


The main result of this section is the following.

\begin{proposition}\label{prop:DP-bw}
  Every graph $G$ with a $k$-dominating shortest path has a layout $f$
  with bandwidth at most $(4k+2) \, \bw(G)$.
  If a $k$-dominating shortest path of $G$ is given in advance,
  then such a layout $f$ can be found in linear time.
\end{proposition}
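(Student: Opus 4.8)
Let $P = (p_0, p_1, \ldots, p_\ell)$ be the given $k$-dominating shortest path of $G$. The idea is to build a layout by "bucketing" the vertices according to the path vertex to which they are closest, then laying out the buckets in the order they appear along $P$, and ordering vertices within a bucket arbitrarily. Formally, assign each vertex $v$ a value $\pi(v) = \min\{i : d_G(v, p_i) \leq k\}$; since $P$ is $k$-dominating, $\pi(v)$ is well-defined for every $v$. Sort the vertices of $G$ by $\pi$ (breaking ties arbitrarily) to obtain the layout $f$. The linear-time claim is immediate: computing all distances $d_G(v, p_i)$ that matter can be done with a single BFS from the set $\{p_0, \ldots, p_\ell\}$ with appropriate bookkeeping (a BFS that records, for each vertex, the index of the nearest path vertex), and then a bucket sort by $\pi$-value over the range $\{0, \ldots, \ell\}$; both are linear.

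\textbf{Bounding the bandwidth.} Take any edge $uv \in E$; we must bound $|f(u) - f(v)|$. Without loss of generality $\pi(u) \leq \pi(v)$. Since $uv$ is an edge and $d_G(v, p_{\pi(v)}) \leq k$, we get $d_G(u, p_{\pi(v)}) \leq k+1$; but then $d_G(u, p_j) \leq k$ for some $j$ with $|j - \pi(v)| \leq 1$ (walk along the shortest path $P$ toward $u$'s nearest path vertex — actually more carefully: $u$ is within $k$ of $p_{\pi(u)}$ and within $k+1$ of $p_{\pi(v)}$, and consecutive path vertices are at distance $1$, so by a discrete intermediate-value argument along $P$ there is some index between $\pi(u)$ and $\pi(v)$ within distance $k$ of $u$, whence $\pi(u) \ge \pi(v) - k - 1$ is not quite what we want — rather we directly bound $\pi(v) - \pi(u)$). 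Concretely: $d_G(u, p_{\pi(u)}) \le k$ and $d_G(v, p_{\pi(u)}) \le k+1$, so along $P$ starting from index $\pi(u)$, the distance from $v$ decreases by at most $1$ per step, so $v$ reaches distance $\le k$ from some $p_j$ with $j \le \pi(u) + 1$; hence $\pi(v) \le \pi(u) + 1$, i.e. $\pi(v) - \pi(u) \le 1$. So $u$ and $v$ lie in the same bucket or in adjacent buckets. Therefore $|f(u) - f(v)|$ is at most the combined size of at most two consecutive buckets, minus one; it suffices to show each bucket $B_i := \pi^{-1}(i)$ has at most $(2k+1)\,\bw(G)$ vertices (so that two adjacent buckets hold at most $(4k+2)\,\bw(G)$ vertices, giving $|f(u)-f(v)| \le (4k+2)\bw(G)$; the off-by-one can be absorbed or handled by noting the strict counts).

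\textbf{Bounding a bucket.} Every vertex of $B_i$ is within distance $k$ of $p_i$, so $B_i \subseteq D_G(p_i, k)$. By the local-density Lemma~\ref{lem:Discbandwidth} with $r = k$, $|D_G(p_i, k)| \le 2k\,\bw(G) + 1 \le (2k+1)\,\bw(G)$ (using $\bw(G) \ge 1$). This yields $|B_i| \le (2k+1)\,\bw(G)$, and combining with the previous paragraph completes the argument.

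\textbf{Expected main obstacle.} The argument is structurally short; the place that needs care is the claim $\pi(v) - \pi(u) \le 1$ for an edge $uv$ — getting the discrete "intermediate value along $P$" step exactly right (using that $P$ is a \emph{shortest} path so consecutive vertices are at distance exactly $1$, and that $u$ is within $k+1$ of some $p_j$ forces it within $k$ of a neighboring index), and then making sure the bucket-size bound combines correctly with the adjacency to land the constant $4k+2$ rather than something slightly larger. A secondary point is spelling out that the single multi-source BFS genuinely runs in linear time and correctly computes $\pi$; this is routine but should be stated.
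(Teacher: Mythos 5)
Your layout is essentially the paper's (assign each vertex to a nearest path vertex, order the buckets along $P$), but the analysis has a genuine error: the claim that $\pi(v)-\pi(u)\le 1$ for an edge $uv$ is false. Moving one step along $P$ changes the distance from $v$ by at most $1$, but it need not \emph{decrease} it, so $d_G(v,p_{\pi(u)})\le k+1$ does not put $v$ within distance $k$ of an index adjacent to $\pi(u)$. Concretely, take $k=1$, $P=(p_0,p_1,p_2,p_3)$, a vertex $u$ adjacent only to $p_0$, and a vertex $v$ adjacent only to $u$ and $p_3$: then $P$ is still a $1$-dominating shortest path, $uv$ is an edge, yet $\pi(u)=0$ and $\pi(v)=3$. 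The true bound, which follows from $P$ being a \emph{shortest} path together with the triangle inequality ($\pi(v)-\pi(u)=d_G(p_{\pi(u)},p_{\pi(v)})\le d_G(p_{\pi(u)},u)+1+d_G(v,p_{\pi(v)})\le 2k+1$), is $\pi(v)-\pi(u)\le 2k+1$, and this is tight.

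With the corrected bound your strategy of bounding each bucket separately by $|D_G(p_i,k)|\le(2k+1)\,\bw(G)$ and summing over the spanned buckets only yields roughly $(2k+2)(2k+1)\,\bw(G)=\Theta(k^2)\,\bw(G)$, which misses the stated constant $(4k+2)$. The missing idea is to bound the \emph{union} $\bigcup_{\ell=i}^{j}B_\ell$ (where $i=\pi(u)$, $j=\pi(v)$) by a single disk: every vertex in these buckets is within distance $k$ of some $p_\ell$ with $i\le\ell\le j$, and every such $p_\ell$ is within distance $\lceil (j-i)/2\rceil\le k+1$ of the middle path vertex $p_c$, $c=i+\lfloor (j-i)/2\rfloor$; hence the whole union lies in $D_G(p_c,2k+1)$, and one application of Lemma~\ref{lem:Discbandwidth} with $r=2k+1$ gives $f(v)-f(u)\le |D_G(p_c,2k+1)|-1\le 2(2k+1)\,\bw(G)=(4k+2)\,\bw(G)$. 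The linear-time multi-source BFS implementation you describe is fine and matches the paper.
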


\begin{proof}
  Let $P=(x_0,x_1,\dots,x_i,\dots, x_j,\dots, x_q)$ be a
  $k$-dominating shortest path of $G$.  Consider a
  Breadth-First-Search-tree $T_P$ of $G$ started from path $P$, i.e.,
  \newekki{$T_{P}$ is the} $BFS(P,G)$-tree of $G$.  For each vertex
  $x_i$ of $P$, let $X_i$ be the set of vertices of $G$ that are
  located in the branch of $T_P$ that is rooted at $x_i$ (see
  Fig.~\ref{fig:bw-vs-pb1}(a) for an illustration). We have $x_i\in
  X_i$.  Since $P$ $k$-dominates $G$, we have $d_G(v,x_i)\leq k$ for
  every $i\in \{1,\dots,q\}$ and every $v\in X_i$. Now create a layout
  $f$ of $G$ by placing \newekki{all the} vertices of $X_i$ before all
  vertices of $X_j$, if $i < j$, and by placing \newekki{the} vertices
  within each $X_i$ in an arbitrary order (see
  Fig.~\ref{fig:bw-vs-pb1}(b) for an illustration).

  \begin{figure}[htb]
    \begin{center} \vspace*{-25mm}
      \begin{minipage}[b]{16cm}
        \begin{center} 
          \hspace*{-28mm}
          \includegraphics[height=16cm]{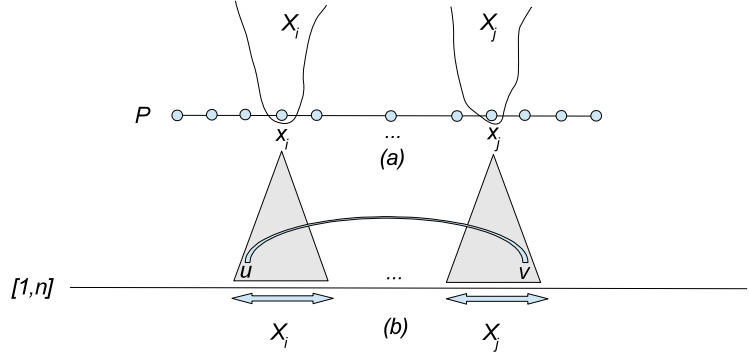}
        \end{center} \vspace*{-94mm}
        \caption{\label{fig:bw-vs-pb1} Illustration to the proof of
          Proposition \ref{prop:DP-bw}.  } %
      \end{minipage}
    \end{center}
  \end{figure}

  We claim that this layout $f$ has bandwidth at most
  $(4k+2)\,\bw(G)$.  Consider any edge $uv$ of $G$ and assume $u \in
  X_i$ and $v \in X_j$ ($i \le j)$.  For this edge $uv$ we have $f(v)
  - f(u) \leq |\bigcup_{\newekki{\ell} = i}^jX_\newekki{\ell}| - 1$.
  We \newekki{also} know that $d_P(x_i, x_j) = j-i \leq 2k + 1$, since
  $P$ is a shortest path of $G$ and $d_P(x_i, x_j) = d_G(x_i, x_j)
  \leq d_G(x_i, u) + 1 + d_G(x_j, v) \le 2k + 1$.  Consider vertex
  $x_c$ of $P$ with $c = i + \lfloor{(j - i)/2}\rfloor$, i.e., a
  middle vertex of the subpath of $P$ between $x_i$ and $x_j$.
  Consider an arbitrary vertex $w$ in $X_\newekki{\ell}$, $i \le
  \newekki{\ell} \le j$. Since
  $d_G(x_c, w) \leq d_G(x_c, x_\newekki{\ell}) + d_G(x_\newekki{\ell},
  w)$, $d_G(x_c, x_\newekki{\ell}) \le \lceil{2k + 1}\rceil/2$ and
  $d_G(x_\newekki{\ell}, w) \le k$, we get $d_G(x_c, w) \leq 2k + 1$.
  In other words, disk $D_G(x_c, 2k + 1)$ contains all vertices of
  $\bigcup_{\newekki{\ell} = i}^j X_\newekki{\ell}$.  Applying
  Lemma~\ref{lem:Discbandwidth} to $|D_G(x_c, 2k+1)| \geq
  |\bigcup_{\newekki{\ell} = i}^j X_\newekki{\ell}|$, we conclude
  $f(v) - f(u) \leq |\bigcup_{\newekki{\ell} = i}^j X_\newekki{\ell}|
  - 1 \leq |D_G(x_c, 2k + 1)| - 1 \leq 2(2k+1)\,\bw(G) = (4k +
  2)\,\bw(G)$. \qed
\end{proof}

Proposition~\ref{prop:DP-bw}, Corollary~\ref{cor:DP-pb}, and
Proposition~\ref{prop:dom-pair-linear} imply.

\begin{corollary}\label{cor:appr-bw-pl}
  For every $n$-vertex $m$-edge graph $G$, a layout with bandwidth at
  most $(4\,\pl(G)+2)\,\bw(G)$ can be found in $\cO(n^2 m)$ time and a
  layout with bandwidth at most $(8\,\pl(G)+2)\,\bw(G)$ can be found
  in $\cO(n + m)$ time.
\end{corollary}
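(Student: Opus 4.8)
Both bounds will follow from Proposition~\ref{prop:DP-bw} once a $k$-dominating \emph{shortest} path is produced whose domination radius $k$ is small relative to $\pl(G)$: recall that Proposition~\ref{prop:DP-bw} converts any such path into a layout of bandwidth at most $(4k+2)\,\bw(G)$ in linear additional time. So the plan is to obtain, for the first bound, a shortest path with $k\le\pl(G)$ in $\cO(n^2m)$ time, and for the second a shortest path with $k\le 2\,\pl(G)$ in linear time.

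For the first bound, the key observation is that by Corollary~\ref{cor:DP-pb} there is a $\pl(G)$-dominating pair $x^\ast,y^\ast$, and that \emph{every} path between a dominating pair is dominating, so in particular \emph{any} shortest $x^\ast$-$y^\ast$ path is $\pl(G)$-dominating. Hence it is enough to brute-force over pairs: for each vertex $x$ run a single BFS from $x$ (in $\cO(n+m)$ time), and then for each vertex $y$ read a shortest $x$-$y$ path $P_{x,y}$ off the BFS tree (in $\cO(n)$ time) and compute $k_{x,y}:=\max_{v\in V}d_G(v,P_{x,y})$ by a multi-source BFS from the vertices of $P_{x,y}$ (in $\cO(n+m)$ time). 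This costs $\cO(n^2(n+m))=\cO(n^2m)$ in total. Taking the pair $(x,y)$ minimizing $k_{x,y}$ we get $k_{x,y}\le k_{x^\ast,y^\ast}\le\pl(G)$, so $P_{x,y}$ is a $k$-dominating shortest path with $k\le\pl(G)$; Proposition~\ref{prop:DP-bw} then returns, in linear additional time, a layout of bandwidth at most $(4k+2)\,\bw(G)\le(4\,\pl(G)+2)\,\bw(G)$.

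For the second bound I would simply run Algorithm~\ref{algo:2plDomPath}: by Proposition~\ref{prop:dom-pair-linear} its two BFS sweeps produce, in $\cO(n+m)$ time, a shortest path $P$ that is $k$-dominating with $k\le 2\,\pl(G)$, and feeding $P$ to Proposition~\ref{prop:DP-bw} yields, again in $\cO(n+m)$ time, a layout of bandwidth at most $(4k+2)\,\bw(G)\le(8\,\pl(G)+2)\,\bw(G)$. I do not expect a real obstacle here; the only things to get right are (i) the realization that, since every path between a dominating pair dominates, one may replace ``find a good dominating pair'' by the cheaper ``try every pair and keep the best shortest path'', which is what keeps the first running time at $\cO(n^2m)$ instead of paying extra to certify the dominating-pair property, and (ii) carrying the two different upper bounds on $k$ through the $(4k+2)$ factor of Proposition~\ref{prop:DP-bw}.
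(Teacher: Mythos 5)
Your proposal is correct and follows essentially the same route as the paper: the first bound comes from brute-forcing over all vertex pairs, taking a shortest path for each, measuring its domination radius by a BFS from the path, and keeping the best one (which is at most $\pl(G)$ by Corollary~\ref{cor:DP-pb}), while the second comes from the two-sweep BFS of Proposition~\ref{prop:dom-pair-linear}; both are then fed into Proposition~\ref{prop:DP-bw}. This is exactly the paper's Algorithm~\ref{algo:plDomPath} and Algorithm~\ref{algo:2plDomPath}, respectively.
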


\begin{proof}
  For an $n$-vertex $m$-edge graph $G$, a $k$-dominating shortest path
  with $k \leq \pl(G)$ can be found in $\cO(n^2 m)$ time in the
  following way (see Algorithm~\ref{algo:plDomPath}). Iterate over all
  vertex pairs of $G$.  For each vertex pair \newekki{$x,y$} pick a
  shortest \newekki{$x,y$-path $P$} and run $BFS(P,G)$ to find a most
  distant vertex $v_P$ from $P$.  Finally, report that path $P$ for
  which $d_G(v_P,P)$ is minimum. By Corollary~\ref{cor:DP-pb}, this
  minimum is at most $\pl(G)$.


  \begin{algorithm}
    \KwIn{A graph $G$.}%
    \KwOut{A $k$-dominating shortest path with $k\leq \pl(G)$.}%

    \smallskip

    \ForEach {vertex pair $x, y$}%
    {

      Find a shortest path $P_{xy}$ from $x$ to $y$.

      Determine $k(x, y) := \max_{v \in V} d_G(v, P_{xy})$.

    }

    Output a path $P_{xy}$ for which $k(x,y)$ is minimal.

    \caption{Finding a $k$-dominating shortest path of $G$ with $k
      \leq \pl(G)$.}\label{algo:plDomPath}
  \end{algorithm}

  Alternatively, one can use the proof of
  Proposition~\ref{prop:dom-pair-linear} to find in linear time a
  $2\pl(G)$-dominating pair $x, y$ of $G$.  Then, any shortest path of
  $G$ between $x$ and $y$ is a $2\,\pl(G)$-dominating path of $G$ (see
  Algorithm~\ref{algo:2plDomPath}).

  The entire method for computing a required layout is given in
  Algorithm~\ref{algo:4kBwApprox}. Its runtime and approximation
  \newekki{ratio} depend on the algorithm to calculate a
  $k$-dominating shortest path.  \qed
\end{proof}

\begin{algorithm}
  \KwIn{A graph $G=(V,E)$.}%
  \KwOut{A layout $f$.}%

  \smallskip

  Find a $k$-dominating shortest path $P = (x_0, x_1, \ldots, x_q)$
  using Algorithm~\ref{algo:2plDomPath} or
  Algorithm~\ref{algo:plDomPath}.

  Partition $V$ into sets $X_0, X_1, \ldots, X_q$ using
  \newekki{a} $BFS(P,G)$-tree of $G$ (see the proof of
  Proposition~\ref{prop:DP-bw}).

  Create a layout $f$ of $G$ by placing \newekki{all the} vertices of
  $X_i$ before all vertices of $X_j$, if $i < j$, and by placing
  vertices within each $X_i$ in an arbitrary order.

  Output $f$.

  \caption{An $\cO(k)$-approximation algorithm for \newekki{computing}
    the minimum bandwidth of a graph \newekki{using} a $k$-dominating
    shortest path.}\label{algo:4kBwApprox}
\end{algorithm}

Thus, we have the following interesting conclusion.

\begin{theorem}\label{th:bw-pl}
  For every class of graphs with path-length bounded by a constant,
  there is an efficient constant-factor approximation algorithm for
  the minimum bandwidth problem.
\end{theorem}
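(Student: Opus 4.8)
The plan is to derive this statement directly from Corollary~\ref{cor:appr-bw-pl}, which already does all the real work; the theorem is just its packaging. Fix a class of graphs $\mathcal{G}$ together with a constant $c$ such that $\pl(G) \le c$ for every $G \in \mathcal{G}$. Given an input $G \in \mathcal{G}$ on $n$ vertices and $m$ edges, I would simply run Algorithm~\ref{algo:4kBwApprox}. A point worth stressing is that this algorithm does \emph{not} need to know $\pl(G)$, nor the constant $c$, in advance: the dominating-path subroutines (Algorithm~\ref{algo:2plDomPath} and Algorithm~\ref{algo:plDomPath}) each return a $k$-dominating shortest path whose parameter $k$ is automatically controlled by $\pl(G)$, so the algorithm is uniform over the whole class.

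Next I would invoke Corollary~\ref{cor:appr-bw-pl} to read off the two concrete trade-offs. Using the linear-time subroutine (Algorithm~\ref{algo:2plDomPath}), the output layout $f$ satisfies $\bw(f) \le (8\,\pl(G)+2)\,\bw(G) \le (8c+2)\,\bw(G)$ and is produced in $\cO(n+m)$ time; using the $\cO(n^2 m)$-time subroutine (Algorithm~\ref{algo:plDomPath}) one gets the sharper $\bw(f) \le (4\,\pl(G)+2)\,\bw(G) \le (4c+2)\,\bw(G)$. In both cases the ratio $8c+2$ (respectively $4c+2$) depends only on the class and is therefore constant, and the running time is polynomial (indeed linear in the first case), so this is an efficient constant-factor approximation, as claimed. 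Conceptually, the two ingredients behind Corollary~\ref{cor:appr-bw-pl} are: (i) every graph with $\pl(G)\le\lambda$ admits a $\lambda$-dominating shortest pair/path (Corollary~\ref{cor:DP-pb} and Proposition~\ref{prop:dom-pair-linear}), which the subroutines exploit; and (ii) the local-density lower bound $\bw(G)\ge (|D_G(v,r)|-1)/(2r)$ of Lemma~\ref{lem:Discbandwidth}, which is matched against the BFS-from-path layout constructed in Proposition~\ref{prop:DP-bw}.

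I do not anticipate any genuine obstacle here: the substantive arguments were already carried out in Proposition~\ref{prop:DP-bw} and in the dominating-path results of Section~\ref{sec:met-prop}, and the theorem only specializes $\pl(G)$ to the constant bound $c$. The only thing requiring a little care is the phrasing of ``efficient constant-factor approximation'': it should make explicit that the approximation guarantee is the class-dependent constant $8c+2$ (or $4c+2$) and that efficiency is witnessed by the $\cO(n+m)$ and $\cO(n^2 m)$ bounds already established, with no dependence of the algorithm itself on $c$.
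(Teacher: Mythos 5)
Your proposal is correct and follows exactly the paper's own route: the theorem is stated as an immediate consequence of Corollary~\ref{cor:appr-bw-pl} (itself derived from Proposition~\ref{prop:DP-bw}, Corollary~\ref{cor:DP-pb}, and Proposition~\ref{prop:dom-pair-linear}), with the approximation ratio $(4\,\pl(G)+2)$ or $(8\,\pl(G)+2)$ becoming a constant once $\pl(G)\le c$. Your added remark that the algorithm needs no advance knowledge of $c$ is a faithful reading of how the dominating-path subroutines work.
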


The above results did not require a path-decomposition of length
$\pl(G)$ of a graph $G$ \newekki{as input}; we \newekki{also} avoided
the construction of such a 
path-decomposition of $G$ and just relied on the existence of a
$k$-dominating shortest path in $G$.  If\newekki{, however,} a
path-decomposition with length $\lambda$ of a graph $G$ is given in
advance together with $G$, then a better approximation ratio for the
minimum bandwidth problem on $G$ can be achieved.

\begin{proposition}\label{prop:PD-bw}
  If a graph $G$ is given together with a path-decomposition of $G$ of
  length $\lambda$, then a layout $f$ with bandwidth at most $\lambda
  \, \bw(G)$ can be found in $\cO(n^2 + n \log^2 n)$ time.
\end{proposition}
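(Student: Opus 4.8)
The plan is to use the path-decomposition $\cP(G) = \{X_1, \dots, X_q\}$ of length $\lambda$ directly to build the layout, rather than a dominating path as in Proposition~\ref{prop:DP-bw}. First I would normalize the decomposition to a ``nice'' one with $q = \cO(n)$ bags (by merging/removing bags so that consecutive bags differ, which only decreases length), ensuring the construction runs in the stated time. Then, for each vertex $v$, let $\mathrm{left}(v)$ be the smallest index $i$ with $v \in X_i$; this assigns $v$ to a bag where it ``first appears.'' Order the vertices of $G$ by $\mathrm{left}(v)$, breaking ties arbitrarily within a bag, to obtain the layout $f$.

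The key estimate is that for any edge $uv \in E$ with, say, $\mathrm{left}(u) \le \mathrm{left}(v) = j$, all vertices placed strictly between $u$ and $v$ in $f$ lie in the bag $X_j$. Indeed, if $w$ is such a vertex then $\mathrm{left}(u) \le \mathrm{left}(w) \le j$; since $uv \in E$, there is a bag containing both $u$ and $v$, and by the contiguity property (property 3) that bag has index $\ge j$, so $u \in X_j$; combining this with $\mathrm{left}(w) \le j$ and the fact that $w$'s bags form a contiguous interval that must reach up to at least some index $\ge$ its placement forces $w \in X_j$ (here one argues that $w$ appears in some bag of index $\ge \mathrm{left}(w)$ and, because all vertices ordered after $w$ but before $v$ have $\mathrm{left}$ at most $j$, the interval of bags for $w$ overlaps $X_j$). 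Hence $f(v) - f(u) \le |X_j| - 1$. Since every two vertices of $X_j$ are within distance $\lambda$ in $G$, we get $X_j \subseteq D_G(v_0, \lambda)$ for any fixed $v_0 \in X_j$; but more directly, $|X_j| \le |D_G(u, \lambda)|$ since $u \in X_j$ and $X_j$ has diameter $\le \lambda$. Now apply Lemma~\ref{lem:Discbandwidth} with radius $r$: we need $\frac{|D_G(u,\lambda)| - 1}{2\lambda} \le \bw(G)$, which gives $|X_j| - 1 \le 2\lambda\,\bw(G)$ — a factor $2\lambda$, not $\lambda$. To get the sharper bound $\lambda\,\bw(G)$ claimed in the statement, I would instead bound $|X_j|$ by noting $X_j$ lies within distance $\lceil \lambda/2 \rceil$ of... no — rather, one uses that consecutive bags overlap and a more careful amortized/telescoping count of how the bags grow, or one uses a variant of the density lemma stating $\frac{|X_j|-1}{\lambda} \le \bw(G)$ directly because a bag of diameter $\lambda$ placed as an interval of an optimal layout spans at most $\lambda\,\bw(G)+1$ positions. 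I expect this last refinement — squeezing the constant from $2\lambda$ down to $\lambda$ — to be the main obstacle, and would handle it by arguing about an \emph{optimal} layout $f^*$: the bag $X_j$, having diameter $\lambda$, occupies $f^*$-positions spanning an interval of length at most $\lambda\,\bw(G)$ (any two of its vertices are joined by a $\le\lambda$-edge-path, each edge stretching at most $\bw(G)$ in $f^*$), hence $|X_j| \le \lambda\,\bw(G) + 1$.

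For the running time: after producing a nice decomposition with $\cO(n)$ bags, computing $\mathrm{left}(v)$ for all $v$ and sorting is near-linear; the $\cO(n^2 + n\log^2 n)$ bound presumably comes from preprocessing the given decomposition (verifying/computing bag diameters or realizing it as intervals) rather than from the layout step itself, so I would account for that in the bookkeeping and otherwise keep the argument combinatorial. The correctness of the ``between vertices lie in one bag'' claim is the conceptual heart; everything else is the density lemma plus careful constant-tracking.
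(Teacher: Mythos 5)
There is a genuine gap: the central claim of your argument --- that for an edge $uv$ with $\mathrm{left}(u)\le\mathrm{left}(v)=j$ every vertex placed strictly between $u$ and $v$ lies in $X_j$ --- is false. Take the star $K_{1,k}$ with center $u$ and leaves $a_1,\dots,a_k$, and the path-decomposition $X_i=\{u,a_i\}$ for $i=1,\dots,k$; it has length $\lambda=1$. Here $\mathrm{left}(u)=1$ and $\mathrm{left}(a_i)=i$, so your layout is essentially $(u,a_1,\dots,a_k)$, and for the edge $ua_k$ the intermediate vertices $a_1,\dots,a_{k-1}$ do not belong to $X_k=\{u,a_k\}$; the edge stretches by about $k\approx 2\lambda\,\bw(G)$, since $\bw(K_{1,k})=\lceil k/2\rceil$. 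The flaw in your derivation is the phrase ``$w$'s bags form a contiguous interval that must reach up to at least some index $\ge$ its placement'': nothing forces $w$'s interval of bags to extend to index $j$. What the left-endpoint ordering actually guarantees is only that every vertex $w$ placed between $u$ and $v$ shares \emph{some} bag with $u$ (because $u$'s interval of bags runs from $\mathrm{left}(u)$ through $j$ and contains $\mathrm{left}(w)$), hence $w\in D_G(u,\lambda)$ and $f(v)-f(u)\le |D_G(u,\lambda)|-1\le 2\lambda\,\bw(G)$ by Lemma~\ref{lem:Discbandwidth}. So your construction provably achieves only $2\lambda\,\bw(G)$, and the star example shows this is tight for it; your correct observation that $|X_j|\le\lambda\,\bw(G)+1$ does not rescue the argument because $f(v)-f(u)\le|X_j|-1$ is exactly what fails.

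The paper gets the factor $\lambda$ by a different route that you cannot reproduce with any simple greedy ordering of the decomposition. It forms $G^+$ by turning every bag into a clique; $G^+$ is an interval graph with $G\subseteq G^+\subseteq G^{\lambda}$, and it then runs Sprague's \emph{exact} $\cO(n\log^2 n)$ bandwidth algorithm for interval graphs on $G^+$. The returned layout $f$ satisfies $\max_{uv\in E}|f(u)-f(v)|\le\bw(G^+)\le\bw(G^{\lambda})\le\lambda\,\bw(G)$, where the last inequality comes from evaluating an optimal layout of $G$ on the edges of $G^{\lambda}$. The $\cO(n^2)$ term is the cost of building $G^+$. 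The exact interval-graph subroutine is essential here: bandwidth of interval graphs is not computed by a left-endpoint ordering, and it is precisely the optimality of that subroutine that converts the structural bound $\bw(G^{\lambda})\le\lambda\,\bw(G)$ into a layout achieving it.
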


\begin{proof}
  Let $\cP(G) = \{X_i: i \in I\}$ be a path-decomposition of length
  $\lambda$ of $G = (V, E)$.  We form a new graph $G^+=(V,E^+)$
  \newekki{from $G$} by adding an edge between a pair of vertices $u,
  v \in V$ if and only if $u$ and $v$ belong to a common bag in
  $\cP(G)$.  From this construction, we conclude that $G$ is a
  subgraph of $G^+$ and $G^+$ is a subgraph of $G^{\lambda}$.  It is a
  well-known fact (see,
  e.g.,~\cite{Bra-Book,Diestel00,FulGro1965,Gol-book}) that $G^+$ is
  an interval graph and $\cP(G) = \{X_i: i \in I\}$ gives a
  path-decomposition of $G^+$ (with $\{X_i: i \in I\}$ being cliques
  of $G^+$).  In~\cite{Sprague94}, an $\cO(n \log^2 n)$ time algorithm
  to compute a minimum bandwidth layout of an $n$-vertex interval
  graph is given.  Let $f$ be an optimal layout produced by that
  algorithm for our interval graph $G^+$.  We claim that this layout
  $f$\newekki{, when} considered for $G$\newekki{,} has bandwidth at
  most $\lambda \, \bw(G)$.  Indeed, following~\cite{KKM99}, we have
  $\max_{uv \in E}|f(u) - f(v)| \leq \max_{uv \in E^+}|f(u) - f(v)| =
  \bw(G^+) \leq \bw(G^{\lambda}) \leq \lambda \,\bw(G)$.  Clearly,
  raising a graph to the $\lambda$th power can only increase its
  bandwidth by a factor of $\lambda$.~\qed
\end{proof}

We formalize the method described above in
Algorithm~\ref{algo:plBandwidthApprox}.

\begin{algorithm}
  \KwIn{A graph $G$ with a path-decomposition $\cP(G) = \{ X_1,
    \ldots, X_q \}$.}%
  \KwOut{A layout $f$.}%

  \smallskip

  Create a new graph $G^+=(V,E^+)$ by adding an edge between
  \newekki{each} pair of vertices $u,v \in V$ if and only if $u$ and
  $v$ belong to a common bag in $\cP(G)$.

  Compute the minimum bandwidth layout $f$ of \newekki{the} interval
  graph $G^+$ by using an optimal $\cO(n \log^2 n)$ time algorithm
  from~\cite{Sprague94}.

  Output $f$.

  \caption{A $\lambda$-approximation algorithm for \newekki{computing}
    the minimum bandwidth for a graph with path-length
    $\lambda$.}\label{algo:plBandwidthApprox}
\end{algorithm}

We do not know how hard \newekki{it is} for an arbitrary graph to
construct its path-decomposition with minimum length.  We suspect that
\newekki{this} is an NP-hard problem as the problem to check
\newekki{whether} a graph has tree-length at most $\lambda$ is
NP-complete for every fixed $\lambda \geq 2$~\cite{Daniel10}.  In
Section~\ref{sec:appr-pl-pb} we show that a factor 2 approximation of
the path-length of an arbitrary $n$-vertex graph can be computed in
$\cO(n^3)$ time.  This \newekki{implies} in particular that, for an
arbitrary $n$-vertex graph $G$, a layout with bandwidth at most
$2\,\pl(G)\,\bw(G)$ can be found in $\cO(n^3)$ total time.

Additionally, in Section~\ref{sec:spec-classes} we show that the
path-breadth of every permutation graph and every trapezoid graph
is~$1$ and the path-length (and therefore, the path-breadth) of every
cocomparability graph and every AT-free graph is at
most~$2$. 
In Section~\ref{sec:appr-ld-AT}, using some additional structural
properties of AT-free graphs, we give a linear time 4-approximation
algorithm for the minimum bandwidth problem \newekki{for AT-free
  graphs}.  This result reproduces an approximation result
from~\cite{KKM99} with a better run-time.  Note that the class of
AT-free graphs properly contains all permutation graphs, trapezoid
graphs and cocomparability graphs; definitions of these graph classes
are given in Section~\ref{sec:spec-classes}.

\section{Path-length and line-distortion} \label{sec:ld-vs-pl}

In this section, we first show that the line-distortion of a graph
gives an upper bound on its path-length and then demonstrate that if
the path-length of a graph $G$ is bounded by a constant then there is
an efficient constant-factor approximation algorithm for the minimum
line-distortion problem on $G$.

\subsection{Bound on line-distortion implies bound on path-length}
\label{ssec:ld-bounds-pl}

In this subsection we show that the path-length of an arbitrary graph
never exceeds its line-distortion.  The following inequalities are
true.

\begin{proposition}\label{prop:ld-vs-pb}
  For an arbitrary graph $G$, $\pl(G) \leq \ld(G)$, $\pw(G) \leq
  \ld(G)$ and $\pb(G) \leq \lceil{\ld(G)/2}\rceil$.
\end{proposition}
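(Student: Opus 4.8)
The plan is to pull a single path-decomposition out of an optimal non-contractive embedding $f$ of the vertices into points of the line with distortion $k=\ld(G)$, and to read all three bounds off that one decomposition. First I would record two elementary facts about $f$: it is injective and its image points are pairwise at distance at least $1$ on the line (because $|f(u)-f(v)|\ge d_G(u,v)\ge 1$ whenever $u\ne v$), and every edge $uv$ of $G$ satisfies $|f(u)-f(v)|\le k$. For a real number $t$ put $X_t:=\{v\in V: t\le f(v)\le t+k\}$, and let $\cP$ be the sequence of these ``sliding window'' bags, indexed by the finite set of values at which $X_t$ changes, namely $\{f(v):v\in V\}\cup\{f(v)-k:v\in V\}$, taken in increasing order. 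Checking the three path-decomposition axioms is routine: every vertex $v$ lies in $X_{f(v)}$; for an edge $uv$ with $f(u)\le f(v)$ both endpoints lie in $X_{f(u)}$ since $f(v)\le f(u)+k$; and $v\in X_t$ holds exactly for $t\in[f(v)-k,f(v)]$, so the bags containing $v$ form a contiguous block of the sequence.

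From $\cP$, the bounds on $\pl$ and $\pw$ are immediate. If $u,v\in X_t$ then $|f(u)-f(v)|\le k$, so $d_G(u,v)\le|f(u)-f(v)|\le k$; hence $\cP$ has length at most $k$ and $\pl(G)\le k$. Each bag lies in an interval of the line of length $k$, and the image points are pairwise at distance at least $1$, so $|X_t|\le k+1$; hence $\cP$ has width at most $k$ and $\pw(G)\le k$. (Equivalently, $G$ is a subgraph of the interval graph in which $u$ and $v$ are adjacent iff $|f(u)-f(v)|\le k$, and that graph has clique number at most $k+1$.)

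The point I expect to require a genuine idea is that the \emph{same} $\cP$ has breadth at most $r:=\lceil k/2\rceil$. Fix a nonempty bag $X:=X_t$ and list its vertices as $u_1,\dots,u_s$ in increasing order of $f$-value $q_1<\dots<q_s$; since $X$ contains \emph{every} vertex whose $f$-value lies in $[t,t+k]$, the $u_j$ form a contiguous block of the global $f$-order, so $q_s-q_1\le k$. Build a walk $W$ in $G$ by concatenating a shortest path from $u_1$ to $u_2$, then one from $u_2$ to $u_3$, and so on up to $u_s$. By non-contractivity applied to consecutive vertices, $d_G(u_j,u_{j+1})\le q_{j+1}-q_j$, so the length $L$ of $W$ satisfies $L\le\sum_j(q_{j+1}-q_j)=q_s-q_1\le k$. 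Let $z$ be a vertex appearing on $W$ at walk-distance $\lceil L/2\rceil$ from $u_1$. Every $u_j$ lies on $W$ at walk-distance at most $L$ from $u_1$, so its walk-distance to $z$ is at most $\lceil L/2\rceil\le\lceil k/2\rceil=r$, and therefore $d_G(u_j,z)\le r$. Thus $X\subseteq D_G(z,r)$, so $\cP$ has breadth at most $r$ and $\pb(G)\le\lceil\ld(G)/2\rceil$.

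The hard part is exactly this last estimate, and the trap to avoid is trying to control $d_G(x,z)$ through $|f(x)-f(z)|$: the embedding only guarantees $d_G\le|f|$, and inside a single bag two vertices can be a full $k$ apart on the line, so that route cannot do better than $k$. The correct quantity is the walk-distance along $W$, which is small precisely because the consecutive gaps $q_{j+1}-q_j$ telescope to $q_s-q_1\le k$ — and this telescoping is available only after one observes that a window-bag is a contiguous run of the $f$-ordered vertices rather than an arbitrary subset of them. Once that is in place, the remaining work (the axiom check and the width count) is bookkeeping.
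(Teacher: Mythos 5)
Your proof is correct, and for the construction of the bags and for the bounds on $\pl$ and $\pw$ it coincides with the paper's: both take sliding windows of length $k=\ld(G)$ on the line as bags, verify the three path-decomposition axioms in the same way, and read off diameter at most $k$ (giving $\pl(G)\le k$) and cardinality at most $k+1$ (giving $\pw(G)\le k$). Where you genuinely diverge is the breadth bound. The paper first invokes the existence of a \emph{canonic} optimal embedding --- one in which $|f(x)-f(y)|=d_G(x,y)$ for vertices placed consecutively on the line, a structural fact quoted from~\cite{HMP11} --- then places the target point $p_v=f(v)+\lfloor k/2\rfloor$ between two consecutive images $f(x)\le p_v<f(y)$ and uses canonicity to manufacture a center $c$ on a shortest $x,y$-path at the exact offset $p_v-f(x)$; the estimate $d_G(c,w)\le\lceil k/2\rceil$ for all $w$ in the bag then follows from non-contractivity. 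You instead work with an arbitrary optimal non-contractive embedding, string the bag's vertices together along a walk of total length at most $q_s-q_1\le k$, and take the midpoint of that walk as the center. Your route buys self-containedness --- it never needs the canonic-embedding lemma --- at the price of a slightly less explicit center; both arguments give exactly $\lceil k/2\rceil$. One small correction to your closing commentary: the telescoping $\sum_j(q_{j+1}-q_j)=q_s-q_1$ holds for the bag's vertices listed in increasing $f$-order whether or not they form a contiguous block of the global order, so the contiguity observation, while true, is not what makes the estimate work; all you need is that every $q_j$ lies in an interval of length $k$.
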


\begin{proof}
  It is known (see, e.g.,~\cite{HMP11}) that every connected graph $G
  = (V, E)$ has a minimum distortion embedding $f$ into the line
  $\ell$ (called a {\em canonic} embedding) such that $|f(x) - f(y)| =
  d_G(x, y)$ for every two vertices \newekki{$x,y$} of $G$ that are
  placed next to each other in $\ell$ by $f$.  Assume, in what follows,
  that $f$ is such a canonic embedding and let $k := \ld(G)$.

  Consider the following path-decomposition of $G$ created from $f$.
  For each vertex $v$, form a bag $B_v$ consisting of all vertices of
  $G$ which are placed by $f$ in the interval $[f(v), f(v) + k]$ of
  the line $\ell$.  Order these bags with respect to the left ends of
  the corresponding intervals.  Evidently, for every vertex $v \in V$,
  $v \in B_v$, i.e., each vertex belongs to a bag.  More generally, a
  vertex $u$ belongs to a bag $B_v$ if and only if $f(v) \leq f(u)
  \leq f(v) + k$.  Since $\ld(G) = k$, for every edge $uv$ of $G$,
  $|f(u) - f(v)| \le k$ holds.  Hence, both ends of edge $uv$ belong
  either to bag $B_u$ (if $f(u) < f(v)$) or to bag $B_v$ (if $f(v) <
  f(u)$).  \newekki{Now consider} three bags $B_a$, $B_b$, and $B_c$
  with $f(a) < f(b) < f(c)$ and a vertex $v$ of $G$ that belongs to
  $B_a$ and $B_c$.  We have $f(a) < f(b) < f(c) \leq f(v) \leq f(a) +
  k < f(b) + k$.  Hence, necessarily, $v$ belongs to $B_b$ as well.

  It remains to show that each bag $B_v$, $v \in V$, has in $G$
  diameter at most $k$, radius at most $\lceil{k/2}\rceil$ and
  cardinality at most $k + 1$.  Indeed, for any two vertices $x, y \in
  B_v$, we have $|f(x) - f(y)| \leq k$, i.e., $d_G(x, y) \le |f(x) -
  f(y)| \leq k$.  Furthermore, any interval $[f(v), f(v) + k]$ (of
  length $k$) can have at most $k + 1$ vertices of $G$ as the distance
  between any two vertices placed by $f$ to this interval is at least
  1 ($|f(x) - f(y)| \ge d_G(x, y) \ge 1$).  Thus, $|B_v| \leq k + 1$
  for every $v \in V$.

  \newekki{Now consider} the point $p_v := f(v) + \lfloor{k/2}\rfloor$
  in the interval $[f(v), f(v) + k]$ of $\ell$.  Assume, without loss
  of generality, that $p_v$ is between $f(x)$ and $f(y)$, the images
  of two vertices $x$ and $y$ of $G$ placed next to each other in
  $\ell$ by $f$.  Let $f(x) \leq p_v < f(y)$ (see
  Fig.~\ref{fig:ld-vs-pb} for an illustration).  Since $f$ is a
  canonic embedding \newekki{of $G$}, there must exist 
  a vertex $c$ on a shortest path
  between $x$ and $y$ such that $d_G(x, c) = p_v - f(x)$ and $d_G(c,
  y) = f(y) - p_v = d_G(x, y) - d_G(x, c)$. We claim that for every
  vertex $w \in B_v$, $d_G(c, w) \le \lceil{k/2}\rceil$ holds.  Assume
  $f(w) \ge f(y)$ (the case when $f(w) \le f(x)$ is similar).  Then,
  we have $d_G(c, w) \le d_G(c, y) + d_G(y, w) \leq (f(y) - p_v) +
  (f(w) - f(y)) = f(w) - p_v = f(w) - f(v) - \lfloor{k/2}\rfloor
  \leq k - \lfloor{k/2}\rfloor \leq
  \lceil{k/2}\rceil$. \qed 
\end{proof}

\begin{figure}[htb]
  \begin{center} \vspace*{-29mm}
    \begin{minipage}[b]{16cm}
      \begin{center} 
        \hspace*{-28mm}
        \includegraphics[height=16cm]{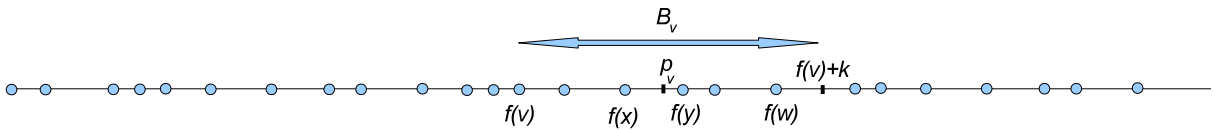}
      \end{center} \vspace*{-120mm}
      \caption{\label{fig:ld-vs-pb} Illustration to the proof of
        Proposition \ref{prop:ld-vs-pb}.  } %
    \end{minipage}
  \end{center}
\end{figure}

It should be noted that the difference between the path-length and the
line-distortion of a graph can be very large.  A complete graph $K_n$
on $n$ vertices has path-length $1$, whereas the line-distortion of
$K_n$ is $n-1$.  Note also that the bandwidth and the path-length of a
graph do not bound each other.  The bandwidth of $K_n$ is $n-1$ while
its path-length is $1$.  On the other hand, the path-length of cycle
$C_{2n}$ is $n$ while its bandwidth is $2$.

\subsection{Line-distortion of graphs with bounded
  path-length}\label{ssec:pb-vs-ld}

In this subsection we show that there is an efficient algorithm that
for any graph $G$ with $\pl(G) = \lambda$ produces 
an embedding $f$ of $G$ into the line $\ell$ with distortion at most
$(12 \, \lambda+7)\,\ld(G)$.  Again, this statement is true even for
all graphs with $\lambda$-dominating shortest paths.


We will need the following auxiliary lemma from~\cite{{BDG+05}}.  We
reformulate it slightly.  Recall that a subset of vertices of a graph
is called {\em connected} if the subgraph induced by those vertices is
connected.

\begin{lemma}[\cite{BDG+05}]\label{lem:small-coon-set}
  Any connected subset $S \subseteq V$ of a graph $G = (V, E)$ can be
  embedded into the line with distortion at most $2|S| - 1$ in time
  $\cO(|V| + |E|)$.  In particular, there is a mapping $f$, computable
  in $\cO(|V| + |E|)$ time, of \newekki{the} vertices from $S$ into
  points of the line such that $d_G(x, y) \leq |f(x) - f(y)| \leq 2|S|
  - 1$ for every $x,y \in S$.
\end{lemma}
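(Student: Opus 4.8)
The plan is to build $f$ from an Euler tour of a spanning tree of $G[S]$. Since $S$ is connected, $G[S]$ is a connected graph; I would first compute a spanning tree $T$ of $G[S]$ by a single graph search and root it at an arbitrary vertex $r$, which costs $\cO(|V|+|E|)$ time. Then take the usual Euler tour of $T$ starting and ending at $r$, i.e.\ the closed walk $u_0 = r, u_1, \dots, u_{2(|S|-1)} = r$ in which consecutive vertices are adjacent in $T$ and every edge of $T$ is traversed exactly twice. For each $v \in S$ let $t(v)$ be the least index $i$ with $u_i = v$ (the time of its first visit), and set $f(v) := t(v)$. Building the tour and reading off the first-visit times takes $\cO(|S|)$ additional time, so the total running time is $\cO(|V|+|E|)$, and $f$ is one-to-one on $S$.

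For the upper bound, note that the Euler tour has indices $0, 1, \dots, 2(|S|-1)$ and $t(r)=0$, hence $f(S) \subseteq \{0, 1, \dots, 2|S|-2\}$. Therefore $|f(x) - f(y)| \le 2|S|-2 \le 2|S|-1$ for all $x,y \in S$; in particular, since $d_G(x,y)\ge 1$ whenever $x\ne y$, this also yields $|f(x)-f(y)| \le (2|S|-1)\,d_G(x,y)$, so the distortion is at most $2|S|-1$.

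For non-contractiveness, fix distinct $x,y\in S$ and assume without loss of generality $t(x) < t(y)$. Then $u_{t(x)}, u_{t(x)+1}, \dots, u_{t(y)}$ is a walk in $T$ from $x$ to $y$ of length exactly $t(y)-t(x) = |f(x)-f(y)|$, so $|f(x)-f(y)| \ge d_T(x,y)$. Since $T$ is a subgraph of $G[S]$ which is a subgraph of $G$, every $T$-path between $x$ and $y$ is a $G$-path, whence $d_T(x,y) \ge d_{G[S]}(x,y) \ge d_G(x,y)$; thus $|f(x)-f(y)| \ge d_G(x,y)$, as required.

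The one point deserving attention is exactly this chain of metric inequalities $d_T \ge d_{G[S]} \ge d_G$: the embedding must be non-contractive for the \emph{smallest} of these, $d_G$, and it is, because $|f(x)-f(y)|$ measures the length of an honest walk in $T$ whose length dominates the $G$-distance. This is also the reason a plain DFS preorder numbering does not suffice and the Euler tour (hence the factor $2$) is needed: on a star $K_{1,3}$ a preorder numbering places two leaves at consecutive integers although their tree-distance is $2$, so it would be contractive. Beyond this observation the argument is routine bookkeeping.
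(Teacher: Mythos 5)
Your proof is correct. Note that the paper itself gives no proof of this lemma --- it is quoted from \cite{BDG+05} --- and your Euler-tour/first-visit-time argument is essentially the standard proof of that cited result: the spanning-tree walk of length $2(|S|-1)$ gives the upper bound, and the fact that the segment of the walk between the first visits of $x$ and $y$ is an honest walk in $T\subseteq G[S]\subseteq G$ gives non-contractiveness with respect to $d_G$, exactly as you argue.
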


The main result of this subsection is the following.

\begin{proposition}\label{prop:DP-ld}
  Every graph $G$ with a $k$-dominating shortest path admits an
  embedding $f$ of $G$ into the line with distortion at most $(8k +
  4)\,\ld(G) + (2k)^2 + 2k + 1$. 
  If a $k$-dominating shortest path of $G$ is given in advance,
  then such an embedding $f$ can be found in linear time.
\end{proposition}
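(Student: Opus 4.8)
The plan is to imitate the bandwidth construction from Proposition~\ref{prop:DP-bw} but to be more careful about the embedding \emph{inside} each branch $X_i$ of the $BFS(P,G)$-tree $T_P$, since for line-distortion the embedding must be non-contractive. Let $P=(x_0,x_1,\dots,x_q)$ be the given $k$-dominating shortest path, let $T_P$ be a $BFS(P,G)$-tree, and for each $i$ let $X_i$ be the vertex set of the branch of $T_P$ rooted at $x_i$; then $x_i\in X_i$ and $d_G(v,x_i)\le k$ for every $v\in X_i$. First I would lay out the blocks $X_0,X_1,\dots,X_q$ consecutively along the line in this order; within block $X_i$ I would not use an arbitrary order but instead apply Lemma~\ref{lem:small-coon-set} to $G[X_i \cup \{\text{relevant part of }P\}]$ — more precisely, $X_i$ together with $x_i$ (and the tree edges of $T_P$ inside the branch) is connected, so Lemma~\ref{lem:small-coon-set} gives a non-contractive embedding of $X_i$ into an interval of length at most $2|X_i|-1$. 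Since $d_G(v,x_i)\le k$, every vertex of $X_i$ lies in the disk $D_G(x_i,k)$, so $|X_i|\le |D_G(x_i,k)|$, and by Lemma~\ref{lem:Discbandwidth} applied with $r=k$ we have $|X_i|-1\le 2k\,\bw(G)\le 2k\,\ld(G)$, hence each block occupies an interval of length at most $2|X_i|-1\le 4k\,\ld(G)+1$.

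Next I would verify non-contractiveness of the global map $f$. For two vertices $u\in X_i$, $v\in X_j$ with $i\le j$, if $i=j$ this is handled by Lemma~\ref{lem:small-coon-set}; if $i<j$, then $f$ places $u$ strictly left of $v$, and I need $|f(u)-f(v)|\ge d_G(u,v)$. The distance $d_G(u,v)$ is at most $d_G(u,x_i)+d_P(x_i,x_j)+d_G(x_j,v)\le k+(j-i)+k$, so I want the gap between block $X_i$ and block $X_j$ to absorb the value $(j-i)+2k$; this is arranged by inserting, between consecutive blocks $X_{i}$ and $X_{i+1}$, a spacer of appropriate constant length (of order $k$) — i.e. I reserve roughly $2k+1$ extra units of line between each pair of consecutive blocks so that passing through $t$ block-boundaries contributes at least $t+$ (something) to $|f(u)-f(v)|$. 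Choosing the spacer length to be exactly large enough to dominate the within-block slack $2k$ at each end plus the $P$-distance $j-i$ gives $|f(u)-f(v)|\ge d_G(u,v)$, establishing that $f$ is an embedding into the line.

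Then I would bound the expansion (distortion). For $u\in X_i$, $v\in X_j$ with $i\le j$, the quantity $|f(u)-f(v)|$ is at most the total length occupied by blocks $X_i,\dots,X_j$ plus the $(j-i)$ spacers between them. As in the proof of Proposition~\ref{prop:DP-bw}, $j-i=d_P(x_i,x_j)\le d_G(x_i,u)+1+d_G(x_j,v)\le 2k+1$, and the midpoint vertex $x_c$ with $c=i+\lfloor (j-i)/2\rfloor$ satisfies $d_G(x_c,w)\le 2k+1$ for every $w\in\bigcup_{\ell=i}^j X_\ell$, so $\bigl|\bigcup_{\ell=i}^j X_\ell\bigr|\le |D_G(x_c,2k+1)|\le 2(2k+1)\,\bw(G)+1\le (4k+2)\,\ld(G)+1$ by Lemma~\ref{lem:Discbandwidth}. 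Each of these at most $(4k+2)\ld(G)+1$ vertices contributes a block-interval of length $\le 2|X_\ell|-1$, but summed over $\ell$ the total block length is at most $2\bigl|\bigcup_{\ell=i}^j X_\ell\bigr|-1 \le 2(4k+2)\ld(G)+1$, and the $(j-i)\le 2k+1$ spacers of length $O(k)$ add at most $(2k+1)(2k+1)=(2k+1)^2$. Combining, $|f(u)-f(v)|\le (8k+4)\ld(G)+(2k+1)^2+(\text{lower-order})$, and since $d_G(u,v)\ge 1$ this bounds the distortion by $(8k+4)\,\ld(G)+(2k)^2+2k+1$ after collecting constants. The linear-time claim follows because $T_P$, the partition into blocks, the within-block embeddings via Lemma~\ref{lem:small-coon-set}, and the spacer insertion are all computable in $\cO(n+m)$ time once $P$ is given.

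The main obstacle I anticipate is the bookkeeping needed to make the spacer lengths simultaneously (a) small enough that the total expansion stays within the stated polynomial in $k$ and (b) large enough that non-contractiveness holds for \emph{every} pair $u\in X_i,v\in X_j$ — in particular for pairs in non-adjacent blocks, where the accumulated within-block slack could in principle cancel against the spacer budget. The resolution is that the within-block embeddings are non-contractive, so a vertex's position within its block can only \emph{increase} the left-to-right gap, never decrease it below the sum of spacer lengths; hence it suffices to choose each spacer of length $2k+1$ (covering the $\le k$ slack at each endpoint block contributed by $d_G(u,x_i)$ and $d_G(x_j,v)$, together with one unit per intermediate $P$-edge), and a short case analysis on $i=j$ versus $i<j$ completes the verification. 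Getting the final constants to match $(2k)^2+2k+1$ exactly is then a routine optimization of which integer spacer length to use.
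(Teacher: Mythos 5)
Your proposal is correct and follows essentially the same route as the paper's proof: the same partition into $BFS(P,G)$-tree branches $X_i$, the same within-block embedding via Lemma~\ref{lem:small-coon-set}, the same spacers of length $2k+1$, and the same expansion bound via Lemma~\ref{lem:Discbandwidth} applied to $D_G(x_c,2k+1)$ together with $\bw(G)\le\ld(G)$. The only difference is bookkeeping: the paper recovers the exact additive term $(2k)^2+2k+1$ by using $\sum_{h=i}^{j}(2|X_h|-1)=2\bigl|\bigcup_{h=i}^{j}X_h\bigr|-(j-i+1)$ rather than your looser $2\bigl|\bigcup_{h=i}^{j}X_h\bigr|-1$, which is exactly the ``routine optimization'' you anticipated.
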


\begin{proof}
  Like in the proof of Proposition~\ref{prop:DP-bw}, consider a
  $k$-dominating shortest path $P = (x_0, x_1, \dots, x_i, \dots, x_j,
  \dots, x_q)$ of $G$ and identify by $BFS(P,G)$ the sets $X_i$, $i
  \in \{1,\dots,q\}$.  We had $d_G(v, x_i) \leq k$ for every $i \in
  \{1, \dots, q\}$ and every $v \in X_i$.  It is clear also that each
  $X_i$ is a connected subset of $G$.  Similar to~\cite{BDG+05}, we
  define an embedding $f$ of $G$ into the line $\ell$ by placing
  \newekki{all the} vertices of $X_i$ before all vertices of $X_j$, if
  $i < j$, and by placing vertices within each $X_i$ in accordance
  with the embedding mentioned in Lemma~\ref{lem:small-coon-set}.
  Also, for each $i \in \{1, \dots, q-1\}$, leave a space of length
  $2k + 1$ between the interval of $\ell$ spanning the vertices of
  $X_i$ and the interval spanning the vertices of $X_{i + 1}$. See
  Fig.~\ref{fig:ld-vs-pb1} for an illustration.

  \begin{figure}[htb]
    \begin{center} \vspace*{-25mm}
      \begin{minipage}[b]{16cm}
        \begin{center} 
          \hspace*{-28mm}
          \includegraphics[height=16cm]{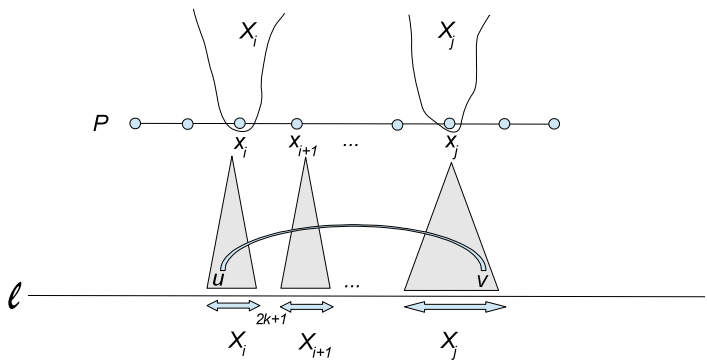}
        \end{center} \vspace*{-94mm}
        \caption{\label{fig:ld-vs-pb1} Illustration to the proof of
          Proposition \ref{prop:DP-ld}.  } %
      \end{minipage}
    \end{center}
  \end{figure}

  We claim that $f$ is a (non-contractive) embedding with distortion
  at most $(8k + 4) \, \ld(G) + (2k)^2 + 2k + 1$.  It is sufficient to
  show that $d_G(x, y) \leq |f(x) - f(y)|$ for every two vertices of
  $G$ that are placed next to each other in $\ell$ by $f$ and that
  $|f(v) - f(u)| \leq (8k + 4) \, \ld(G) + (2k)^2 + 2k + 1$ for every
  edge $uv$ of $G$ (see, e.g.,~\cite{BDG+05,HMP11}).

  From Lemma~\ref{lem:small-coon-set}, we know that $d_G(x, y) \leq
  |f(x) - f(y)| \leq 2|X_h| - 1$ for every $x,y \in X_h$ and $h \in
  \{1, 2, \dots, q\}$.  Additionally, for every $x \in X_i$ and $y \in
  X_{i + 1}$ ($i \in \{1, 2, \dots, q-1\}$), we have $d_G(x, y) \leq
  d_G(x, x_i) + 1 + d_G(y, x_{i + 1}) \le 2k + 1 \leq |f(y) - f(x)|$
  (as a space of length $2k + 1$ is left between the interval of
  $\ell$ spanning the vertices of $X_i$ and the interval spanning the
  vertices of $X_{i + 1}$).  Hence, $f$ is non-contractive.

  Consider now an arbitrary edge $uv$ of $G$ and assume $u \in X_i$
  and $v \in X_j$ ($i \le j)$.  For this edge $uv$ (by Lemma~\ref{lem:small-coon-set}) we have %
  $f(v) - f(u) \leq \sum_{h = i}^j (2|X_h| - 1 + 2k + 1) - 2k - 1 = 
  2|\bigcup_{h = i}^j X_h| + 2k(j - i + 1) - 2k - 1 = 2|\bigcup_{h =
    i}^j X_h| + 2k(j - i) - 1$. 
    Recall that $d_P(x_i,
  x_j) = j - i \leq 2k + 1$, since $P$ is a shortest path of $G$ and
  $d_P(x_i, x_j) = d_G(x_i, x_j) \leq d_G(x_i, u) + 1 + d_G(x_j, v)
  \le 2k + 1$.  Hence, $f(v) - f(u) \leq 2|\bigcup_{h = i}^j X_h| +
  2k(2k + 1) - 1$.

  As in the proof of Proposition~\ref{prop:DP-bw}, $|\bigcup_{h = i}^j
  X_h| - 1 \leq (4k + 2) \, \bw(G)$.  As $\bw(G) \le \ld(G)$ for every
  graph $G$ (see, e.g.,~\cite{HMP11}), we get $f(v) - f(u) \leq
  2|\bigcup_{h = i}^j X_h| + 2k(2k + 1) - 1 \leq 2(4k + 2) \, \bw(G) +
  2k(2k + 1) + 1 \leq (8k + 4) \, \ld(G) + 2k(2k + 1) + 1$. \qed
\end{proof}


Proposition~\ref{prop:DP-ld}, Corollary~\ref{cor:DP-pb} and
Proposition~\ref{prop:dom-pair-linear} imply.

\begin{corollary}\label{cor:appr-ld-pl}
  For every $n$-vertex $m$-edge graph $G$, an embedding into the line
  with distortion at most $(12 \, \pl(G) + 7) \, \ld(G)$ can be found
  in $\cO(n^2 m)$ time and with distortion at most $(24 \, \pl(G) + 7)
  \, \ld(G)$ can be found in $\cO(n + m)$ time.
\end{corollary}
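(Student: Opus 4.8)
The plan is to glue together the three results the corollary cites --- Proposition~\ref{prop:DP-ld}, Corollary~\ref{cor:DP-pb}, and Proposition~\ref{prop:dom-pair-linear} --- essentially mechanically, followed by one elementary estimate. Write $\lambda := \pl(G)$ and $L := \ld(G)$; by Proposition~\ref{prop:ld-vs-pb} we have $\lambda \le L$, and $L \ge 1$ since $G$ is connected with $n>1$ and hence has an edge. Proposition~\ref{prop:DP-ld} already does the combinatorial work: from any $k$-dominating shortest path of $G$ it builds, in linear extra time, a line embedding of distortion at most $(8k+4)L + (2k)^2 + 2k + 1$. So all that is left is to decide how fast one can produce a $k$-dominating shortest path with $k$ small in terms of $\pl(G)$, and the two halves of the statement correspond to two such subroutines.

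For the $\cO(n^2 m)$ bound I would run Algorithm~\ref{algo:plDomPath}, which needs no knowledge of $\lambda$: it ranges over all $\cO(n^2)$ vertex pairs, takes a BFS shortest path for each, measures that path's domination radius with one more BFS, and returns the best path, so it runs in $\cO(n^2 m)$ time; Corollary~\ref{cor:DP-pb} guarantees the returned path is $k$-dominating for some $k \le \lambda$. Feeding it into Proposition~\ref{prop:DP-ld} with $k = \lambda$ gives distortion at most $(8\lambda+4)L + (2\lambda)^2 + 2\lambda + 1$, and using $\lambda^2 \le \lambda L$ together with $2\lambda + 1 \le 2L+1 \le 3L$ this is at most $(12\lambda + 7)L = (12\,\pl(G)+7)\,\ld(G)$.

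For the $\cO(n+m)$ bound I would instead invoke the two-sweep procedure of Proposition~\ref{prop:dom-pair-linear} (Algorithm~\ref{algo:2plDomPath}): a BFS from an arbitrary vertex to a farthest vertex $x$, then a BFS from $x$ to a farthest vertex $y$, delivers in linear time a $2\lambda$-dominating pair $x,y$, so any shortest $x,y$-path is a $2\lambda$-dominating shortest path. Applying Proposition~\ref{prop:DP-ld} with $k = 2\lambda$ again yields the embedding in linear time, with distortion at most $(16\lambda+4)L + (4\lambda)^2 + 4\lambda + 1$; bounding the quadratic term by $\lambda^2 \le \lambda L$ and the remaining constants through $\lambda \le L$ collapses this to the asserted $\cO(\pl(G))\cdot\ld(G)$ bound. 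The only delicate point is exactly this last estimate: the $(2k)^2$ term of Proposition~\ref{prop:DP-ld} is quadratic in $\pl(G)$, and it is precisely the inequality $\pl(G)\le\ld(G)$ from Proposition~\ref{prop:ld-vs-pb} that turns it back into a linear-in-$\pl(G)$ multiple of $\ld(G)$; everything else is a verbatim appeal to results and algorithms already established.
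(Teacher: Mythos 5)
Your proposal follows the paper's own route step for step: Algorithm~\ref{algo:plDomPath} (all-pairs search for a shortest path with minimum domination radius, justified by Corollary~\ref{cor:DP-pb}) for the $\cO(n^2m)$ case, the two-sweep Algorithm~\ref{algo:2plDomPath} of Proposition~\ref{prop:dom-pair-linear} for the linear-time case, Proposition~\ref{prop:DP-ld} to turn the dominating path into an embedding, and the inequality $\pl(G)\le\ld(G)$ from Proposition~\ref{prop:ld-vs-pb} to linearize the $(2k)^2$ term. The first half is fully correct: your estimate $(8\lambda+4)L+4\lambda^2+2\lambda+1\le(8\lambda+4)L+4\lambda L+3L=(12\lambda+7)L$ matches the paper's $(12\lambda+6)L+1\le(12\lambda+7)L$.

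The one genuine gap is the final estimate in the linear-time case, which you assert ``collapses to the asserted bound'' without carrying it out. If you do carry it out with $k=2\lambda$, you get $(16\lambda+4)L+16\lambda^2+4\lambda+1\le(16\lambda+4)L+16\lambda L+5L=(32\lambda+9)L$, not $(24\lambda+7)L$: the quadratic term $16\lambda^2$ can only be reduced to $8\lambda L$ if $2\lambda\le L$, and that inequality is not available (e.g., a path has $\pl=\ld=1$). So your argument proves the corollary with the constant $32\,\pl(G)+9$ in place of $24\,\pl(G)+7$; the qualitative $\cO(\pl(G))\cdot\ld(G)$ statement and Theorem~\ref{th:ld-pl} are unaffected. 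To be fair, the paper's own proof has exactly the same defect --- it simply writes down $(24\,\pl(G)+7)\,\ld(G)$ for the $2\,\pl(G)$-dominating case without showing the intermediate step, and that constant does not follow from Proposition~\ref{prop:DP-ld} by the stated chain of inequalities either. You should either accept the weaker constant or explain how the factor $24$ is actually obtained; as written, neither your proposal nor the paper does the latter.
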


\begin{proof}
  See the proof of Corollary~\ref{cor:appr-bw-pl} and note that, by
  Proposition~\ref{prop:ld-vs-pb}, $\pl(G) \leq \ld(G)$.  Hence, the
  distortion established in Proposition~\ref{prop:DP-ld}
  becomes 
  $\leq (8 \, \pl(G) + 4) \, \ld(G) + 2(2 \, \pl(G) + 1) \, \ld(G) + 1
  \leq (12 \, \pl(G) + 7) \, \ld(G)$, if we use a $\pl(G)$-dominating
  shortest path, and becomes $\leq (24 \, \pl(G) + 7) \, \ld(G)$, if
  we use a $2 \, \pl(G)$-dominating shortest path.
  Algorithm~\ref{algo:8kLineDesApprox} covers both cases.  Its runtime
  and approximation \newekki{ratio} depend on the algorithm to
  calculate a $k$-dominating shortest path.  \qed
\end{proof}

\begin{algorithm}
  \KwIn{A graph $G$.}%
  \KwOut{An embedding $f$ of $G$ into the line $\ell$.}%

  \smallskip

  Find a $k$-dominating shortest path $P=(x_0, x_1, \ldots, x_q)$
  using Algorithm~\ref{algo:2plDomPath} or
  Algorithm~\ref{algo:plDomPath}.

  Partition $V$ into sets $X_0, X_1, \ldots, X_q$ using \newekki{a}
  $BFS(P,G)$-tree of $G$ (see the proof of
  Proposition~\ref{prop:DP-ld}).

  Create an embedding $f$ of $G$ into the line $\ell$ by placing
  \newekki{all the} vertices of $X_i$ before all vertices of $X_j$, if
  $i < j$, and
  by placing vertices within each $X_i$ in accordance with the
  embedding mentioned in Lemma~\ref{lem:small-coon-set}.  Also, for
  each $i \in \{1, \dots, q - 1\}$, leave a space of length $2k + 1$
  between the interval of $\ell$ spanning the vertices of $X_i$ and
  the interval spanning the vertices of $X_{i + 1}$.

  Output $f$.

  \caption{An $\cO(\lambda)$-approximation algorithm for
    \newekki{computing} the minimum line-distortion of a graph $G$
    with $\pl(G) \leq \lambda$.}\label{algo:8kLineDesApprox}
\end{algorithm}

Thus, we have the following interesting conclusion.

\begin{theorem}\label{th:ld-pl}
  For every class of graphs with path-length bounded by a constant,
  there is an efficient constant-factor approximation algorithm for
  the minimum line-distortion problem.
\end{theorem}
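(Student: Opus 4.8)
The plan is to combine the structural and algorithmic results already assembled in Sections~\ref{sec:met-prop} and~\ref{sec:ld-vs-pl}. The theorem claims that for any fixed constant $c$, the class $\mathcal{G}_c := \{G : \pl(G) \leq c\}$ admits an efficient constant-factor approximation for minimum line-distortion. First I would invoke Corollary~\ref{cor:appr-ld-pl}: for an arbitrary $n$-vertex $m$-edge graph $G$, an embedding of $G$ into the line with distortion at most $(12\,\pl(G)+7)\,\ld(G)$ can be computed in $\cO(n^2 m)$ time, and one with distortion at most $(24\,\pl(G)+7)\,\ld(G)$ can be computed in $\cO(n+m)$ time. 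The key point is that this corollary requires no path-decomposition to be supplied --- it relies only on Algorithm~\ref{algo:plDomPath} or Algorithm~\ref{algo:2plDomPath}, which find a $k$-dominating shortest path with $k \leq \pl(G)$ (resp.\ $k \leq 2\,\pl(G)$) without knowing $\pl(G)$ in advance.

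The main step, then, is simply the substitution $\pl(G) \leq c$. For every $G \in \mathcal{G}_c$, the computed embedding has distortion at most $(12c+7)\,\ld(G)$ (in polynomial time) or at most $(24c+7)\,\ld(G)$ (in linear time). Since $c$ is a constant, both $12c+7$ and $24c+7$ are constants, so Algorithm~\ref{algo:8kLineDesApprox} is a constant-factor approximation algorithm for the minimum line-distortion problem on $\mathcal{G}_c$, running in polynomial (indeed, linear) time. This establishes the theorem.

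I do not expect any serious obstacle here: the theorem is a direct corollary of the machinery already developed, and the only thing to be careful about is emphasizing that the algorithm does \emph{not} need $\pl(G)$ or a witnessing path-decomposition as input --- it is the same algorithm for the whole class, with a performance guarantee that degrades gracefully with the (unknown) path-length and is bounded by the class constant. One could optionally remark that, as noted after Corollary~\ref{cor:appr-ld-pl}, this in particular reproduces the $\cO(c^2)$-distortion polynomial-time embedding of~\cite{BDG+05} for graphs with $\ld(G) = c$, via Proposition~\ref{prop:ld-vs-pb} ($\pl(G) \leq \ld(G)$), but that is a side consequence rather than part of the proof of Theorem~\ref{th:ld-pl} itself.
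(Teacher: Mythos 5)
Your proposal is correct and follows exactly the paper's own route: Theorem~\ref{th:ld-pl} is stated there as an immediate consequence of Corollary~\ref{cor:appr-ld-pl}, obtained by substituting the class constant for $\pl(G)$ in the distortion bounds $(12\,\pl(G)+7)\,\ld(G)$ and $(24\,\pl(G)+7)\,\ld(G)$. Your added emphasis that the algorithm needs neither $\pl(G)$ nor a witnessing path-decomposition as input is accurate and consistent with the paper's remarks.
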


Using inequality $\pl(G) \leq \ld(G)$ in
Corollary~\ref{cor:appr-ld-pl} once more, we reproduce a result
of~\cite{BDG+05}.

\begin{corollary}[\cite{BDG+05}]\label{cor:MIT}
  For every graph $G$ with $\ld(G) = c$, an embedding into the line
  with distortion at most $\cO(c^2)$ can be found in polynomial time.
\end{corollary}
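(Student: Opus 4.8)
The plan is simply to compose the approximation guarantee of Corollary~\ref{cor:appr-ld-pl} with the structural inequality of Proposition~\ref{prop:ld-vs-pb}. First I would invoke Corollary~\ref{cor:appr-ld-pl}: for an $n$-vertex $m$-edge graph $G$ it produces, in $\cO(n^2 m)$ time, an embedding of $G$ into the line with distortion at most $(12\,\pl(G)+7)\,\ld(G)$. The key point is that the algorithm underlying that corollary (via Algorithm~\ref{algo:plDomPath}) does not take $\pl(G)$, or any path-decomposition, as input — it finds a $\pl(G)$-dominating shortest path by iterating over all vertex pairs and picking the shortest path whose domination radius is smallest, with Corollary~\ref{cor:DP-pb} guaranteeing that this radius does not exceed $\pl(G)$. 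Hence the whole procedure is genuinely polynomial and self-contained.

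Next I would apply Proposition~\ref{prop:ld-vs-pb}, which gives $\pl(G) \leq \ld(G)$. Plugging $\ld(G) = c$ into the distortion bound yields a distortion of at most $(12\,\pl(G)+7)\,\ld(G) \leq (12c+7)\,c = 12c^2 + 7c = \cO(c^2)$, which is exactly the claimed bound, reproducing the result of~\cite{BDG+05}.

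The only point that might look like an obstacle — that the approximation factor $(12\,\pl(G)+7)$ itself involves the a~priori unknown quantity $\pl(G)$ — is in fact harmless: there is no circularity because the algorithm never needs to know or compute $\pl(G)$, and the factor is bounded purely for the analysis using $\pl(G) \leq c$. So I expect no genuine difficulty here; the statement is an immediate two-line consequence of the results already established.
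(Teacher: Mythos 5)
Your proposal is correct and is precisely the paper's own argument: the paper derives Corollary~\ref{cor:MIT} by substituting the inequality $\pl(G) \leq \ld(G)$ from Proposition~\ref{prop:ld-vs-pb} into the $(12\,\pl(G)+7)\,\ld(G)$ bound of Corollary~\ref{cor:appr-ld-pl}, exactly as you do. Your added remark that the algorithm never needs to know $\pl(G)$ (it only enters the analysis) is a valid and worthwhile clarification, but not a different route.
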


It should be noted that, since the difference between the path-length
and the line-distortion of a graph can be very large (close to $n$),
the result in Corollary~\ref{cor:appr-ld-pl} seems to be stronger.

\medskip

Theorem~\ref{th:bw-pl} and Theorem~\ref{th:ld-pl} stress the
importance of \newekki{investigating the questions} (i) \newekki{for
  which} particular graph classes \newekki{there is a} constant bound
on \newekki{the} path-length and of (ii) how fast \newekki{can} the
path-length of an arbitrary graph be computed or sharply estimated.
In the next two sections we address some of those questions.

\section{Constant-factor approximations of path-length and
  path-breadth}\label{sec:appr-pl-pb}

Let $G = (V, E)$ be an arbitrary graph and \newekki{let} $s$ be
\newekki{an} arbitrary vertex \newekki{of $G$}.  A \emph{layering}
${\cal L}(s, G)$ of $G$ with respect to a start vertex $s$ is the
decomposition of $V$ into layers $L_i = \{u \in V: d_G(s, u) = i\}$,
$i = 0, 1, \dots, q$.  \newekki{For an integer $i \geq 1$ and a vertex
  $v \in L_i$ denote by $N_G^{\downarrow}(v) = N_G(v)\cap L_{i-1}$ the
  neighborhood of $v$ in the previous layer
  $L_{i-1}$.} 
  We can get a
path-decomposition of $G$ by adding to each layer $L_i$ ($i > 0$) all
vertices from layer $L_{i - 1}$ that have a neighbor in
$L_i$\newekki{, in particular, let} $L^+_i := L_i \cup (\bigcup_{v \in
  L_i} \newekki{N_G^{\downarrow}(v)})$.
Clearly, the sequence $\{L^+_1, \dots, L^+_q\}$ is a
path-decomposition of $G$ and can be constructed in $\cO(|E|)$ total
time.  We call this path-decomposition an {\em extended layering} of
$G$ and denote it by $\cL^+(s, G)$.

\newekki{As shown in the next theorem,} this type of
path-decomposition has length at most twice as large as the
path-length of the graph.

\begin{theorem}\label{th:approx-tl}
  For every graph $G$ with $\pl(G) = \lambda$ there is a vertex $s$
  such that the length of the extended layering $\cL^+(s, G)$ of $G$
  is at most $2 \, \lambda$. In particular, a factor $2$ approximation
  of the path-length of an arbitrary $n$-vertex 
  graph can be computed in 
  $\cO(n^3)$ total time.
\end{theorem}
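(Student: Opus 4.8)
The plan is to find a good start vertex $s$ by a two-step BFS and then argue that the extended layering $\cL^+(s,G)$ it induces has length at most $2\lambda$. Concretely, I would take an optimal path-decomposition $\cP(G)=\{X_1,\dots,X_q\}$ of length $\lambda$, pick any vertex $t$ in the first bag $X_1$, run $BFS(t,G)$ to get a most distant vertex $s$, and use this $s$ as the root of the layering $\cL(s,G)$. The point of choosing $s$ this way is to guarantee a ``straightening'' property: for every vertex $v$, a shortest $s,v$-path should be roughly monotone with respect to the bag order of $\cP(G)$, because $s$ lies (almost) at the end of the decomposition opposite to $X_1$.

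The key estimate is to bound $d_G(u,v)$ for any two vertices $u,v$ lying in a common extended layer $L_i^+$. Recall $L_i^+ = L_i \cup \bigcup_{w\in L_i} N_G^{\downarrow}(w)$, so any vertex of $L_i^+$ is within distance $1$ of some vertex of $L_i$; hence it suffices to bound $d_G(u,v)$ for $u,v\in L_i$ and add a constant. For two vertices $u,v$ at the same BFS-distance $i$ from $s$, I would look at where they sit in the path-decomposition $\cP(G)$: say the bags containing $u$ precede (or overlap) those containing $v$. If $u,v$ share a bag of $\cP(G)$ then $d_G(u,v)\le\lambda$ and we are done. Otherwise, using that $s$ is a BFS-far vertex from a vertex of $X_1$, one shows $s$ lies in a bag with larger index than both $u$ and $v$; then shortest paths $P(s,u)$ and $P(s,v)$ must each cross every intermediate bag, in particular a bag $X_m$ between the $u$-bags and the $v$-bags. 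Pick $a\in P(s,u)\cap X_m$ and $b\in P(s,v)\cap X_m$; since $d_G(s,u)=d_G(s,v)=i$ and $a,b$ lie on shortest paths to $s$, the portions $d_G(a,u)$ and $d_G(b,v)$ differ by at most $d_G(a,b)\le\lambda$, and one gets $d_G(u,v)\le d_G(u,a)+d_G(a,b)+d_G(b,v)\le 2\lambda+O(1)$ — and a careful version of this argument (tracking the BFS-layer indices, which force $d_G(a,u)=d_G(b,v)=i-d_G(s,X_m)$) actually yields $d_G(u,v)\le 2\lambda$ for the layer itself, hence the claimed $2\lambda$ bound for $L_i^+$ after absorbing the extra $\pm 1$ into the layer radius accounting. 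This layer-index bookkeeping, making the constants come out to exactly $2\lambda$ rather than $2\lambda+c$, is the part I expect to be the main obstacle.

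For the complexity statement: we do not know the optimal $s$ a priori, but the bound above holds for \emph{some} vertex $s$, so the algorithm simply tries every vertex $s\in V$ as a BFS root, builds the extended layering $\cL^+(s,G)$ in $\cO(n+m)$ time, computes its length by running a BFS from each vertex to get all pairwise distances and then taking, over all bags, the maximum intra-bag distance, and finally outputs the layering of minimum length. Computing all-pairs distances by $n$ BFS runs costs $\cO(n(n+m))=\cO(nm)$; for each of the $n$ candidate roots we examine $\cO(n)$ bags each of size $\cO(n)$, so checking one layering's length is $\cO(n^2)$ once distances are known (or $\cO(n^3)$ overall across all roots if done naively, and bounded by $\cO(n^3)$ in any case since $m\le n^2$). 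Thus the total running time is $\cO(n^3)$, and the returned extended layering has length at most $2\lambda=2\,\pl(G)$, giving the desired factor-$2$ approximation.
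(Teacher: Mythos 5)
Your overall plan (an extended layering rooted at a vertex attached to the first bag of an optimal path-decomposition, an exhaustive search over all roots, and the ``shortest paths from the root must cross intermediate bags'' argument) is the same as the paper's, and your complexity analysis is fine; but two steps of the existence argument have genuine gaps. The first is the choice of $s$. You take $t\in X_1$, let $s$ be a BFS-farthest vertex from $t$, and then claim that $s$ lies in bags of larger index than both $u$ and $v$. That is false in general: the farthest vertex from $t$ can sit in a middle bag (think of a long hair attached to the middle of a caterpillar's spine), so there can be vertices $u$ whose bags precede $s$'s and vertices $v$ whose bags follow them; then $P(s,u)$ and $P(s,v)$ traverse disjoint ranges of the decomposition and your common bag $X_m$ does not exist, and this case is simply not covered. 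Even when $s$'s bags do come last, a bag $X_m$ strictly between the $u$-bags and the $v$-bags is crossed by $P(s,u)$ but need not be crossed by $P(s,v)$; the bag you should intersect with $P(s,u)$ is one containing $v$ (or, as the paper does, one containing the edge from $u$ down to $L_{i-1}$), after which the layering identity $d_G(s,u)=d_G(s,v)$ forces $d_G(a,u)\le\lambda$ and gives $d_G(u,v)\le 2\lambda$ directly. The paper sidesteps the whole positioning issue by taking $s$ to be an arbitrary vertex of $X_1$ itself, so that every vertex's bags lie weakly after $s$'s and the separation property is immediate.

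The second gap is the additive slack you flag yourself: bounding distances inside $L_i$ by $2\lambda$ and then ``adding a constant'' for the down-neighbors in $L_i^+\setminus L_i$ only yields $2\lambda+1$, which does not prove the stated bound of $2\lambda$ (and would worsen the approximation factor). The paper closes this by proving the three bounds $d_G(x,y)\le 2\lambda$, $d_G(x,y')\le 2\lambda$, $d_G(x',y)\le 2\lambda$ simultaneously for $x,y\in L_i$ with down-neighbors $x',y'\in L_{i-1}$, handling $d_G(x',y')$ by induction on $i$. The witness vertex $w$ is chosen in the bag $B$ that contains the whole edge $xx'$ (so $d_G(x,w)\le\lambda$ \emph{and} $d_G(x',w)\le\lambda$) and on a shortest $s$--$y$ path through $y'$; the computation $d_G(w,y')+1=d_G(w,y)=d_G(s,x)-d_G(s,w)\le d_G(w,x)\le\lambda$ then gives $d_G(w,y)\le\lambda$ and $d_G(w,y')\le\lambda-1$, so all the mixed distances come out to at most $2\lambda$ with no additive loss. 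Without some such device your argument proves a $(2\lambda+1)$-length decomposition, not the claimed $2\lambda$.
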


\begin{proof}
  Consider a path-decomposition $\cP(G) = \{X_1, X_2, \dots, X_p\}$ of
  length $\pl(G) = \lambda$ of $G$.  Let $s$ be an arbitrary vertex
  from $X_1$.  Consider the layering ${\cal L}(s, G)$ of $G$ with
  respect to $s$ where $L_i = \{u \in V: d_G(s, u) = i\}$\newekki{,}
  $(i = 0, 1, \dots, q$).  Let $x$ and $y$ be two arbitrary vertices
  from $L_i$ ($i \in \{1, \dots, q\}$) and $x'$ and $y'$ be arbitrary
  vertices from $L_{i - 1}$ with $xx', yy' \in E$.  We will show that
  $\max\{d_G(x, y), d_G(x, y'), d_G(x', y)\} \leq 2 \, \lambda$.  By
  induction on $i$, we may assume that $d_G(y', x') \le 2 \, \lambda$
  as $x',y' \in L_{i - 1}$.

  If there is a bag in $\cP(G)$ containing both vertices $x$ and $y$,
  then $d_G(x, y) \le \lambda$ and therefore $d_G(x, y') \le \lambda +
  1 \leq 2 \, \lambda$, $d_G(y, x') \le \lambda + 1 \leq 2 \, \lambda$.
  Assume now that all bags containing $x$ are earlier in $\cP(G) =
  \{X_1, X_2, \dots, X_p\}$ than the bags containing $y$.  Let $B$ be
  a bag of $\cP(G)$ containing both ends of edge $xx'$ (such a bag
  necessarily exists by properties of path-decompositions).  By the
  position of this bag $B$ in $\cP(G)$ and the fact that $s \in X_1$,
  any shortest path connecting $s$ with $y$ must have a vertex in $B$.
  Let $w$ be a vertex of $B$ that is on a shortest path of $G$
  connecting vertices $s$ and $y$ and containing edge $yy'$.  Such a
  shortest path must exist because of the structure of the layering
  ${\cal L}(s, G)$ that starts at $s$ and puts $y'$ and $y$ in
  consecutive layers.  \newekki{Since $x, x', w \in B$} we have
  $\max\{d_G(x, w), d_G(x', w)\} \leq \lambda$.  If $w = y'$ then we
  are done; $\max\{d_G(x, y), d_G(x, y'), d_G(x', y)\} \leq \lambda +
  1 \leq 2 \, \lambda$.  So, assume that $w \neq y'$.  Since $d_G(x, s) =
  d_G(s, y) = i$ (by the layering) and $d_G(x, w) \leq \lambda$, we
  must have $d_G(w, y') + 1 = d_G(w, y) = d_G(s, y) - d_G(s, w) =
  d_G(s, x) - d_G(s, w) \leq d_G(w, x) \leq \lambda$.  Hence, $d_G(y,
  x) \leq d_G(y, w) + d_G(w, x) \leq 2 \, \lambda$, $d_G(y, x') \leq
  d_G(y, w) + d_G(w, x') \leq 2 \, \lambda$ and $d_G(y', x) \leq d_G(y',
  w) + d_G(w, x) \leq 2 \, \lambda - 1$.

  We conclude that the distance between any two vertices \newekki{of}
  $L_i^+$ is at most $2 \, \lambda$, that is, the length of \newekki{the}
  tree decomposition $\cL^+(s, G)$ of $G$ is at most $2 \, \lambda$. \qed
\end{proof}

Algorithm~\ref{algo:2plApprox} formalizes the method described above.

\begin{algorithm}
  \KwIn{A graph $G=(V,E)$.}%
  \KwOut{A path-decomposition for $G$.}%

  \smallskip

  Calculate distances $d_G(u, v)$ for all vertices $u,v \in V$.

  \ForEach {$s \in V$}%
  {%
    Calculate a decomposition ${\cal L}^+(s\newekki{, G}) = \{
    L_0^+(s), L_1^+(s), \ldots \}$ with $L_i(s) = \{v \in V : d_G(s,
    v) = i \}$ and $L_i^+(s) = L_i(s) \cup \{v \in L_{i - 1}(s) :
    N_G(v) \cap L_i(s) \neq \emptyset \} $.

    Determine the length $l(s)$ of ${\cal L}^+(s\newekki{, G})$%
  }

  Output a decomposition ${\cal L}^+(s\newekki{, G})$ for which $l(s)$
  is minimal.

  \caption{A 2-approximation algorithm for \newekki{computing} the
    path-length of a graph.}\label{algo:2plApprox}
\end{algorithm}

\begin{theorem}\label{th:approx-tb}
  For every graph $G$ with $\pb(G) = \rho$ there is a vertex $s$ such
  that the breadth of the extended layering $\cL^+(s, G)$ of $G$ is at
  most $3 \rho$.  In particular, a factor $3$ approximation of the
  path-breadth of an arbitrary $n$-vertex 
  graph can be computed in 
  $\cO(n^3)$ total time.
\end{theorem}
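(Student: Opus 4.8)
The plan is to mirror the proof of Theorem~\ref{th:approx-tl}, but to turn its \emph{diameter} bound on the bags of an extended layering into a \emph{radius} bound by invoking the Helly-type Corollary~\ref{cor:helly}. First I would fix an optimal path-decomposition $\cP(G) = \{X_1, X_2, \dots, X_p\}$ of breadth $\pb(G) = \rho$, say with $X_j \subseteq D_G(c_j, \rho)$ for suitable vertices $c_j$. Then every bag has diameter at most $2\rho$ in $G$, so $\cP(G)$ has length at most $2\rho$. Taking $s$ to be any vertex of $X_1$ and repeating the argument of the proof of Theorem~\ref{th:approx-tl} for this particular $\cP(G)$ (that argument uses $\cP(G)$ only through the diameters of its bags and the choice $s \in X_1$), I obtain that every bag $L_i^+$ of the extended layering $\cL^+(s, G)$ has diameter at most $2 \cdot 2\rho = 4\rho$ in $G$.

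Next I would upgrade this to a radius bound, one bag at a time. Fix $S := L_i^+$. Since any two vertices $x, y \in S$ satisfy $d_G(x, y) \leq 4\rho$, a vertex lying on a shortest $x,y$-path at distance $\lfloor d_G(x,y)/2\rfloor$ from $x$ is at distance at most $2\rho$ from both $x$ and $y$; hence the disks of the family $\{D_G(x, 2\rho) : x \in S\}$ pairwise intersect (a disk intersects itself since it contains its center). Applying Corollary~\ref{cor:helly} with the constant radius function $r \equiv 2\rho$, the disks $\{D_G(x, 2\rho + \rho) : x \in S\} = \{D_G(x, 3\rho) : x \in S\}$ have a common vertex $z$, i.e.\ $L_i^+ \subseteq D_G(z, 3\rho)$. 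As $i$ was arbitrary, the breadth of $\cL^+(s, G)$ is at most $3\rho$.

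For the algorithmic statement I would first compute all pairwise distances via a breadth-first search from every vertex, in $\cO(n(n+m)) = \cO(n^3)$ time. Then, for each candidate start vertex $s$, build $\cL^+(s, G)$ in $\cO(n+m)$ time and compute its breadth: for every vertex $z$ and every bag $L_i^+$ evaluate $\max_{v \in L_i^+} d_G(z, v)$, and for each bag take the minimum of this over $z$. Since $\sum_i |L_i^+| \leq 2n$ (each $L_i^+ \subseteq L_i \cup L_{i-1}$ and the layers partition $V$), this costs $\cO(n^2)$ per choice of $s$, hence $\cO(n^3)$ overall. Outputting a layering of minimum breadth gives, by the first two paragraphs, a path-decomposition of breadth at most $3\rho$, while every path-decomposition of $G$ has breadth at least $\pb(G) = \rho$; this is the claimed factor-$3$ approximation running in $\cO(n^3)$ time.

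The only delicate point is the reduction in the first paragraph: one has to check that the proof of Theorem~\ref{th:approx-tl} really depends on the input path-decomposition solely through its bag diameters (and through $s \in X_1$), so that it applies unchanged to the breadth-$\rho$ decomposition regarded as a length-$(\leq 2\rho)$ one. Once that diameter bound of $4\rho$ is in hand, the passage to a center at distance $3\rho$ is a one-line application of Corollary~\ref{cor:helly} with no further computation.
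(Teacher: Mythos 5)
Your proposal is correct and follows essentially the same route as the paper: the paper likewise obtains the diameter bound $4\rho$ on the bags of $\cL^+(s,G)$ from Theorem~\ref{th:approx-tl} (via $\pl(G)\leq 2\,\pb(G)$, which is just your observation that a breadth-$\rho$ decomposition has length at most $2\rho$) and then applies Corollary~\ref{cor:helly} to the pairwise-intersecting family $\{D_G(x,2\rho):x\in L_i^+\}$ to extract a center at distance $3\rho$. Your additional care about how Theorem~\ref{th:approx-tl} is invoked, and the explicit $\cO(n^3)$ accounting, are fine but do not change the argument.
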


\begin{proof}
  Since $\pl(G) \leq 2 \, \pb(G)$, by Theorem~\ref{th:approx-tl},
  there is a vertex $s$ in $G$ such that the length of extended
  layering $\cL^+(s,G) = \{L_1^+, \dots, L_q^+\}$ of $G$ is at most $4
  \rho$.  Consider a bag $L_i^+$ of $\cL^+(s, G)$ and a family $\cF =
  \{D_G(x, 2 \rho) :x \in L_i^+\}$ of disks of $G$.  Since $d_G(u, v)
  \leq 4 \rho$ for every pair $u,v \in L_i^+$, the disks of $\cF$
  pairwise intersect.  Hence, by Corollary~\ref{cor:helly}, the disks
  $\{D_G(x, 3 \rho) : x \in L_i^+\}$ have a nonempty common
  intersection.  A vertex $w$ from that common intersection has all
  vertices of $L_i^+$ within distance at most $3 \rho$.  That is, for
  each $i \in \{1, \dots, q\}$ there is a vertex $w_i$ with $L_i^+
  \subseteq D_G(w_i, 3 \rho)$. \qed
\end{proof}

Combining Theorem~\ref{th:approx-tl} and Proposition~\ref{prop:PD-bw},
we obtain the following result.

\begin{theorem}\label{th:bw-pl-int}
  For every $n$-vertex graph $G$, a layout $f$ with bandwidth at most
  $2 \, \pl(G) \, \bw(G)$ can be found in $\cO(n^3)$ total time.
\end{theorem}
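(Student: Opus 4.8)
The strategy is simply to chain together the two ingredients that are already available. First I would invoke Theorem~\ref{th:approx-tl}: for an arbitrary $n$-vertex graph $G$ with $\pl(G)=\lambda$, the algorithm there (Algorithm~\ref{algo:2plApprox}) computes, in $\cO(n^3)$ total time, a path-decomposition $\cP(G)$ of $G$ — namely an extended layering $\cL^+(s,G)$ for a suitable start vertex $s$ — whose length is at most $2\lambda = 2\,\pl(G)$. The $\cO(n^3)$ bound covers the all-pairs distance computation plus the linear-time construction and length evaluation of $\cL^+(s,G)$ over all $n$ choices of $s$.

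Next I would feed this path-decomposition into Proposition~\ref{prop:PD-bw}. That proposition takes a graph together with a path-decomposition of length $\lambda'$ and produces, in $\cO(n^2 + n\log^2 n)$ time, a layout $f$ of bandwidth at most $\lambda'\,\bw(G)$ (by completing each bag into a clique to obtain an interval supergraph $G^+$ with $G\subseteq G^+\subseteq G^{\lambda'}$, running Sprague's optimal interval-graph bandwidth algorithm on $G^+$, and observing $\bw(G^+)\le\bw(G^{\lambda'})\le\lambda'\,\bw(G)$). Applying this with $\lambda' \le 2\,\pl(G)$ yields a layout of bandwidth at most $2\,\pl(G)\,\bw(G)$.

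Finally I would combine the running times: $\cO(n^3)$ for Theorem~\ref{th:approx-tl} plus $\cO(n^2 + n\log^2 n)$ for Proposition~\ref{prop:PD-bw} is $\cO(n^3)$ in total. This is essentially a two-line proof once the two cited results are in hand; there is no real obstacle, only the bookkeeping of noting that the path-decomposition of length $\le 2\,\pl(G)$ produced by the first result is exactly the kind of input the second result requires, and that neither $\pl(G)$ nor $\lambda$ needs to be known in advance (Algorithm~\ref{algo:2plApprox} simply minimizes the length $l(s)$ over all $s$, and the proof of Theorem~\ref{th:approx-tl} guarantees the minimum is at most $2\,\pl(G)$).
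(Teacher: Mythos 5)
Your proposal is correct and is exactly the paper's argument: the paper derives Theorem~\ref{th:bw-pl-int} by the same chaining of Theorem~\ref{th:approx-tl} (which yields a path-decomposition of length at most $2\,\pl(G)$ in $\cO(n^3)$ time) with Proposition~\ref{prop:PD-bw} (which converts such a decomposition into a layout of bandwidth at most $2\,\pl(G)\,\bw(G)$ in $\cO(n^2 + n\log^2 n)$ time). Nothing is missing.
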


\section{Bounds on path-length and path-breadth for special graph
  classes}\label{sec:spec-classes}

The class of AT-free graphs contains many intersection families of
graphs, among them interval graphs, permutation graphs, trapezoid
graphs and cocomparability graphs.  These three families of graphs can
be defined as follows~\cite{Bra-Book,Gol-book}.  Consider a line in
the plane and $n$ intervals on this line.  The intersection graph of
such a set of intervals is called an {\em interval graph}.  Consider
two parallel lines (upper and lower) in the plane.  Assume that each
line contains $n$ points, labeled $1$ to $n$.  \newekki{Each} two
points with the same label define a segment with that label.  The
intersection graph of such a set of segments between two parallel
lines is called a {\em permutation graph}.  Assume now that each
\newekki{of the two parallel} line\newekki{s} contains $n$ intervals,
labeled $1$ to $n$, and each two intervals with the same label define
a trapezoid with that label (a trapezoid can degenerate to a triangle
or to a segment).  The intersection graph of such a set of trapezoids
between two parallel lines is called a {\em trapezoid graph}.
Clearly, every permutation graph is a trapezoid graph, but not vice
versa.  The class of cocomparability graphs (which contains all
interval graphs and all trapezoid graphs as subclasses) can be defined
as the intersection graphs of continuous function
diagrams\newekki{~\cite{golumbic-rotem-urrutia}}, but for this paper
it is more convenient to define them via the existence of a special
vertex ordering.  A graph $G$ is a {\em cocomparability graph} if it
admits a vertex ordering $\sigma = [v_1, v_2, \dots, v_n]$, called a
{\em cocomparability ordering}, such that for any $i < j < k$, if
$v_i$ is adjacent to $v_k$ then $v_j$ must be adjacent to at least one
of $v_i$, $v_k$.  According to~\cite{McCSp}, such an ordering of a
cocomparability graph can be constructed in linear time.  Note also
that, given a permutation graph $G$, a {\em permutation model} (i.e.,
a set of segments between two parallel lines, defining $G$) can be
found in linear time~\cite{McCSp}; a {\em trapezoid model} for an
$n$-vertex trapezoid graph can be found in $\cO(n^2)$
time~\cite{MaSp}.

In this section we show that the path-breadth of every permutation
graph and every trapezoid graph is $1$ and the path-length (and
therefore, the path-breadth) of every cocomparability graph and every
AT-free graph is bounded by $2$.

\begin{proposition}\label{prop:perm}
  If $G$ is a permutation graph, then $\pb(G) = 1$ and, therefore,
  $\pl(G) \leq 2$.  Furthermore, a path-decomposition of $G$ with
  breadth $1$ can be computed in linear time.
\end{proposition}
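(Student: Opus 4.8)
The plan is to use the permutation model of $G$ directly to build a path-decomposition whose bags are each contained in a single closed neighborhood. Given a permutation graph $G$, compute a permutation model in linear time (\cite{McCSp}): two parallel lines, upper and lower, each carrying $n$ labeled points, with vertices adjacent iff their segments cross. Let $\sigma = [v_1, \dots, v_n]$ be the order of the labels along the upper line; this is the natural left-to-right sweep order. First I would sweep a vertical cut line from left to right across the model and, for each position, collect the set of segments it crosses; these ``cut sets'' change only when the cut passes an endpoint, so there are $O(n)$ distinct ones, and they naturally form a linear sequence. The key observation is that the set of segments crossed by a fixed cut is a clique (pairwise crossing), so each such bag already has diameter $1$ and is trivially covered by a disk of radius $1$ around any of its vertices.

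The difficulty is that the cut sets alone do not form a valid path-decomposition: an edge $v_iv_j$ corresponds to two crossing segments, and there is indeed a cut passing through their crossing point that contains both — so property 2 (every edge in some bag) holds — but a vertex's segment is crossed by the cuts over a contiguous range of positions, so property 3 (contiguity) also holds. So in fact the cut sets themselves already give a path-decomposition of $G$, and it has breadth $1$ (each bag is a clique). Thus the main work is simply to verify these three properties carefully against the geometry of the permutation diagram, and to note that all distinct cut sets, together with the $O(n)$ events where they change, can be enumerated in linear time by sorting the endpoints (which come pre-sorted from the model), giving the claimed linear running time.

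The one subtlety I expect to be the real obstacle is handling degeneracies: coincident endpoints on a line, or segments sharing an endpoint (which in the standard model do not cross, hence are non-adjacent). I would resolve this by perturbing the model — or equivalently by fixing a strict tie-breaking order — so that all $2n$ endpoints on each line are distinct and no crossing point lies on a cut through an endpoint; this does not change the intersection graph and makes the sweep unambiguous. After that, property 2 is immediate (the crossing point of two crossing segments lies strictly between some pair of consecutive events, so a cut there contains both endpoints' segments), and property 3 follows because the set of cut positions at which a given segment is crossed is exactly the open interval between its leftmost and rightmost endpoint projections, which is an interval of the linearly ordered event sequence. Since every bag is a set of pairwise-crossing segments, hence a clique, we get $\pb(G) \le 1$; as $G$ has an edge (we assume $n > 1$ and $G$ connected), $\pb(G) = 1$, and then $\pl(G) \le 2\,\pb(G) = 2$ by the inequality noted in Section~\ref{sec:intro}.
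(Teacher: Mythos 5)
Your construction of the bags is where the argument breaks down: the set of segments met by a vertical sweep line is \emph{not} in general a clique. Two segments that lean the same way can have overlapping horizontal projections without crossing each other; for instance, a segment from upper point $1$ to lower point $3$ and a segment from upper point $2$ to lower point $4$ are both met by the vertical line at $x=2.5$, yet they appear in the same relative order on both lines and hence are non-adjacent. Worse, the weaker property you actually need --- that each cut set lies inside a single closed neighborhood --- also fails. Take the permutation model of the $4$-cycle with upper order $s_1,s_2,s_4,s_3$ and lower order $s_4,s_3,s_1,s_2$: every one of the four segments has horizontal extent containing the abscissa $2.5$, so the corresponding cut set is the entire vertex set of $C_4$, which is contained in no closed neighborhood (every vertex of $C_4$ has a non-neighbor). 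Your family of cut sets is indeed a valid path-decomposition --- your verification of properties 1--3 is correct --- but its breadth is $2$ rather than $1$, so it does not establish the proposition.

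The paper avoids this trap by not using sweep-line bags at all. It greedily selects a maximal independent set $M=\{x_1,\dots,x_k\}$ by repeatedly taking the vertex whose upper point is leftmost among those whose segments miss all previously chosen ones, and shows that the sequence of closed neighborhoods $N_G[x_1],\dots,N_G[x_k]$ is a path-decomposition. There each bag is a closed neighborhood by construction, so breadth $1$ is immediate, and the geometric work goes into verifying that every edge lies in some $N_G[x_i]$ and that the bags containing a fixed vertex are consecutive. To repair your argument you would have to replace the cut sets by bags of that kind; the left-to-right order of upper endpoints that drives your sweep is still the right linear order to organize such a decomposition.
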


\begin{proof}
  We assume that a permutation model of $G$ is given in advance (if
  not, we can compute one for $G$ in linear time~\cite{McCSp}).  That
  is, each vertex $v$ of $G$ is associated with a segment $s(v)$ such
  that $uv \in E$ if and only if segments $s(v)$ and $s(u)$ intersect.
  In what follows, ``u.p.'' and ``l.p.'' refer to a vertex's point on
  the upper and lower, respectively, line of the permutation model.

  First we compute a\newekki{n (inclusion)} maximal independent set
  $M$ of $G$ in linear time as follows.  Put in $M$ (which is
  initially empty) a vertex $x_1$ whose u.p.\ is leftmost.  For each
  $i \geq 2$, select a vertex $x_i$ whose u.p.\ is leftmost among all
  vertices whose segments do not intersect $s(x_1), \dots, s(x_{i -
    1})$ (in fact, it is enough to check intersection with $s(x_{i -
    1})$ only).  If such a vertex exists, put it in $M$ and continue.
  If no such vertex exists, $M = \{x_1, \dots, x_k\}$ has been
  constructed.

  Now, we claim that $\{N_G[x_1], \dots, N_G[x_k]\}$ is a
  path-decomposition of $G$ with breadth $1$ and, hence, with length
  at most $2$.  Clearly, each vertex of $G$ is in some bag since every
  vertex not in $M$ is adjacent to a vertex in $M$, by \newekki{the}
  maximality of $M$.  Consider an arbitrary edge $uv$ of $G$.  Assume
  that neither $u$ nor $v$ is in $M$ and that the u.p.\ of $u$ is to
  the left of the u.p.\ of $v$.  Necessarily, the l.p.\ of $v$ is to
  the left of the l.p.\ of $u$, as segments $s(v)$ and $s(u)$
  intersect.  Assume that the u.p.\ of $u$ is between the u.p.s of
  $x_i$ and $x_{i + 1}$.  From the construction of $M$, $s(u)$ and
  $s(x_i)$ must intersect, i.e., the l.p.\ of $u$ is to the left of
  the l.p.\ of $x_i$.  But then, since the l.p.\ of $v$ is to the left
  of the l.p.\ of $x_i$, segments $s(v)$ and $s(x_i)$ must intersect,
  too.  Thus, edge $uv$ is in bag $N_G[x_i]$.

  To show that all bags containing any particular vertex form a
  contiguous subsequence of the sequence $N_G[x_1], \dots, N_G[x_k]$,
  consider an arbitrary vertex $v$ of $G$ and let $v \in N_G[x_i] \cap
  N_G[x_j]$ for $i < j$.  Consider an arbitrary bag $N_G[x_l]$ with $i
  < l < j$.  We know that vertices $x_i, x_l, x_j \in M$ are pairwise
  non-adjacent.  Furthermore, segment $s(v)$ intersects segments
  $s(x_i)$ and $s(x_j)$.  As segment $s(x_l)$ is between $s(x_i)$ and
  $s(x_j)$, necessarily, $s(v)$ intersects $s(x_l)$ as well.~\qed
\end{proof}

\begin{proposition}\label{prop:trap}
  If $G$ is an $n$-vertex trapezoid graph, then $\pb(G) = 1$ and,
  therefore, $\pl(G) \leq 2$.  Furthermore, a path-decomposition of
  $G$ with breadth $1$ can be computed in $\cO(n^2)$ time.
\end{proposition}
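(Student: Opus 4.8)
The plan is to mimic the proof of Proposition~\ref{prop:perm}, but working with the (two-dimensional) trapezoid model of $G$ instead of a permutation model. First I would compute a trapezoid model in $\cO(n^2)$ time~\cite{MaSp}, so that each vertex $v$ becomes a trapezoid $T(v)$ with top interval $[a_v,b_v]$ and bottom interval $[c_v,d_v]$ and $uv\in E$ iff $T(u)\cap T(v)\neq\emptyset$. Write $u\ll v$ if $T(u)$ lies entirely to the left of $T(v)$, i.e.\ $b_u<a_v$ and $d_u<c_v$. Then $\ll$ is a strict partial order whose incomparable pairs are precisely the edges of $G$; hence the independent sets of $G$ are exactly the chains of $\ll$, and along any chain the left-to-right orders of the top intervals and of the bottom intervals both coincide with the $\ll$-order.

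Next I would construct a maximal independent set $M=\{x_1\ll x_2\ll\dots\ll x_k\}$ by a left-to-right greedy sweep over the model: pick a ``leftmost'' trapezoid $x_1$; having chosen $x_1\ll\dots\ll x_i$, add a ``leftmost'' trapezoid among those lying entirely to the right of $x_i$; stop when none exists. I would then output the sequence $\cP(G)=\{N_G[x_1],\dots,N_G[x_k]\}$. Its breadth is at most $1$, since every bag is a closed neighborhood, i.e.\ $N_G[x_i]\subseteq D_G(x_i,1)$; as $G$ has an edge this yields $\pb(G)=1$ and therefore $\pl(G)\le 2\,\pb(G)=2$. To see that $\cP(G)$ is indeed a path-decomposition one argues exactly as in Proposition~\ref{prop:perm}: coverage is immediate from the maximality of $M$; and the contiguity property follows from the chain structure, since if $v\in N_G[x_i]\cap N_G[x_j]$ with $i<l<j$ but $v\notin N_G[x_l]$, then $T(v)$ is entirely left of $T(x_l)$ or entirely right of it, and transitivity along $x_i\ll x_l\ll x_j$ contradicts $v\in N_G[x_i]$ respectively $v\in N_G[x_j]$.

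The crux --- and, I expect, the main obstacle --- is the edge property: every edge $uv$ with $u,v\notin M$ must have both ends in a common bag $N_G[x_i]$, i.e.\ some $x_i\in M$ is adjacent to both $u$ and $v$. As in the permutation case, the set of $M$-neighbors of a vertex $u\notin M$ is a contiguous, nonempty block of the chain, $\{x_i:\alpha(u)<i<\beta(u)\}$ with $\alpha(u)=\max\{i:x_i\ll u\}$ and $\beta(u)=\min\{i:u\ll x_i\}$; the greedy choice should force this block to contain the trapezoid into whose ``slot'' $u$ falls. One then has to show that for an edge $uv$ (with $u,v\notin M$) the blocks of $u$ and of $v$ overlap --- equivalently, that neither $u$'s rightmost $M$-neighbor lies entirely to the left of $T(v)$ nor $v$'s rightmost $M$-neighbor lies entirely to the left of $T(u)$ --- and this comes down to a geometric case analysis on the model, using $T(u)\cap T(v)\neq\emptyset$ together with the positions of the two $M$-trapezoids flanking $u$ (respectively $v$). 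Here the two-dimensionality of trapezoids really bites: the ``position'' of a trapezoid is not a single point, and a slanted trapezoid --- narrow on one line but reaching far to the right on the other --- can lead a careless sweep to select a bad maximal independent set, e.g.\ the two endpoints of an induced $P_4$, whose closed neighborhoods miss the middle edge. So the notion of ``leftmost'' must be chosen carefully (for instance by the sum $b_w+d_w$ of the two right endpoints, which is monotone along $\ll$-chains) precisely so that this overlap argument goes through.

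Finally, for the running time: the model costs $\cO(n^2)$, and both the greedy sweep and the construction of the $k\le n$ bags $N_G[x_i]$ fit into $\cO(n^2)$, matching the claimed bound.
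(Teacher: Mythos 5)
Your proposal has a genuine gap, and it is exactly the one you flag yourself: the edge-coverage property of the greedy decomposition is never established, and the repair you suggest does not work. Concretely, consider the trapezoid model in which $T(x_1)$ has top and bottom intervals $[0,1]$ and $[0,1]$, $T(u)$ has $[0.5,\,1.2]$ and $[0.5,\,5]$, $T(v)$ has $[2,100]$ and $[4,100]$, and $T(x_2)$ has $[1.5,3]$ and $[6,7]$. The resulting graph is the induced path $x_1-u-v-x_2$. Your sweep first selects $x_1$ (it is leftmost under any reasonable measure), and must then choose among the trapezoids lying entirely to the right of $T(x_1)$, namely $v$ and $x_2$. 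Whether ``leftmost'' means smallest top-left corner ($a_{x_2}=1.5<2=a_v$) or, as you propose, smallest sum of right endpoints ($b_{x_2}+d_{x_2}=10<200=b_v+d_v$), it selects $x_2$ and stops, returning $M=\{x_1,x_2\}$ --- the two endpoints of the $P_4$ --- so the edge $uv$ lies in no bag $N_G[x_i]$. The slanted trapezoid $T(x_2)$, cheap in both coordinates you measure yet starting far to the right on the bottom line, is precisely the phenomenon you warned about, and it defeats your candidate orderings. The coverage and contiguity parts of your argument are fine; but the ``geometric case analysis'' you defer at the crux is not a routine verification, and with the orderings you name the claimed construction is simply false.

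The paper sidesteps the issue entirely by a reduction rather than a direct sweep: replace each trapezoid by its two diagonals, obtaining a permutation model on $2n$ segments whose permutation graph $H$ satisfies the property that two trapezoids intersect iff some diagonal of one meets some diagonal of the other; hence $G$ arises from $H$ by contracting the $n$ edges joining the two diagonals of each trapezoid. Since contracting edges does not increase path-breadth~\cite{DraganK11}, Proposition~\ref{prop:perm} applied to $H$ yields $\pb(G)=\pb(H)=1$, and a breadth-$1$ path-decomposition of $H$ projects to one of $G$; the $\cO(n^2)$ cost is just that of building the trapezoid model~\cite{MaSp}. If you want to rescue a greedy argument, this is effectively what it must become: run the permutation-graph greedy on the $2n$ diagonals, not on some single-parameter ordering of the trapezoids themselves.
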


\begin{proof}
  We will show that every trapezoid graph $G$ is a minor of a
  permutation graph.  Recall that a graph $G$ is called a {\em minor}
  of a graph $H$ if $G$ can be formed from $H$ by deleting edges and
  vertices and by contracting edges.

  First, we compute in $\cO(n^2)$ time a trapezoid model for
  $G$~\cite{MaSp}.  Then, we replace each trapezoid $\cT_i$ in this
  model with its two diagonals obtaining a permutation model with $2
  n$ vertices.  Let $H$ be the permutation graph of this permutation
  model.  It is easy to see that two trapezoids $\cT_1$ and $\cT_2$
  intersect if and only if a diagonal of $\cT_1$ and a diagonal of
  $\cT_2$ intersect.

  Now, $G$ can be obtained back from $H$ by a series of $n$ edge
  contractions; for each trapezoid $\cT_i$, contract the edge of $H$
  that corresponds to two diagonals of $\cT_i$.

  Since contracting edges does not increase the path-breadth
  (see~\cite{DraganK11}), we get $\pb(G) = \pb(H) = 1$ by
  Proposition~\ref{prop:perm}.  Any path-decomposition of $H$ with
  breadth $1$ is a path-decomposition of $G$ with breadth $1$.~\qed
\end{proof}

\newekki{%
  For the proof of the next result we will need a special vertex
  ordering $\sigma: V \rightarrow \{1, \dots, n\}$ produced by a
  so-called {\em Lexicographic-Breadth-First-Search} (LBFS for short)
  which is a refinement of a standard {\em Breadth-First-Search}
  (BFS).  LBFS$(s,G)$ starts at some start vertex $s$, orders the
  vertices of a graph $G$ by assigning numbers from $n$ to $1$ to the
  vertices in the order as they are discovered by the following search
  process.  Each vertex $v$ has a \emph{label} consisting of a
  (revers) ordered list of the \emph{numbers} of those neighbors of
  $v$ that were already visited by the LBFS; initially this label is
  empty.  LBFS starts with some vertex $s$, assigns number $n$ to $s$,
  and adds this number to the end of the label of all un-numbered
  neighbors of $s$.  Then, in each step, LBFS selects the un-numbered
  vertex $v$ with the lexicographically largest label, assigns the
  next available number $k$ to $v$, and adds this number to the end of
  the labels of all un-numbered neighbors of $v$.  An ordering
  $\sigma$ of the vertex set of a graph generated by LBFS$(s,G)$ is
  called a LBFS$(s,G)$-ordering.  Note that the closer a vertex is to
  $s$ in $G$ the larger its number is in $\sigma$.
  It is known that a LBFS-ordering of an arbitrary graph can be
  generated in linear time~\cite{Gol-book}.  }%

\begin{proposition}\label{prop:AT}
  If $G$ is an $n$-vertex AT-free graph, then $\pb(G) \leq \pl(G) \leq
  2$.  Furthermore, a path-decomposition of $G$ with length at most
  $2$ can be computed in $\cO(n^2)$ time.
\end{proposition}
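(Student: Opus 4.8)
The plan is to construct, for the AT-free graph $G$, a path-decomposition all of whose bags have diameter at most $2$ in $G$. The backbone is the classical fact that every connected AT-free graph has a \emph{dominating pair} $(x,y)$ — a pair such that every $x$--$y$ path is $1$-dominating — together with the fact that a dominating pair can be computed in linear time by two consecutive LBFS sweeps: run LBFS$(s,G)$ from an arbitrary vertex $s$, take its last vertex $t$, run LBFS$(t,G)$, and take its last vertex $y$; then $(t,y)$ is a dominating pair. Fixing such a pair $(x,y)$ and a shortest $x$--$y$ path $P=(x_0,\dots,x_q)$, every vertex of $G$ lies within distance $1$ of $P$, so $V=P\cup N_G(P)$; using $BFS(P,G)$ one also assigns every $v\notin P$ an ``attachment'' path vertex in $N_G(v)$, and hence every vertex $v$ an interval of indices along $P$.

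The technical heart is to group these vertices into bags, ordered along $P$, that form a genuine path-decomposition and are each of diameter at most $2$. The structural input is the AT-free property restated in the following convenient form, which is exactly the $\lambda=1$ case of the conclusion of Proposition~\ref{prop:AT-pl}: for any three vertices of $G$ there is one of them, say $v$, such that $N_G[v]=D_G(v,1)$ intercepts every path joining the other two. Applying this to triples of path vertices (and to triples of the form ``two near-$P$ vertices together with $x$ or $y$'') controls how far apart along $P$ the endpoints of an edge can be attached, and — more importantly — forces any two vertices placed together in a bag to lie within distance $2$; the Helly property for intervals (as in Proposition~\ref{prop:bramble}) and for disks (Corollary~\ref{cor:helly}) then lets one realize the grouping so that the contiguity axiom and the diameter bound hold simultaneously. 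One then verifies the path-decomposition axioms (every vertex and every edge lies in a bag; the bags through a fixed vertex are contiguous) and the diameter-$2$ bound bag by bag. For the complexity, finding the dominating pair and $P$ is linear, and building and checking the $\cO(n)$ bags fits in $\cO(n^2)$ time, so a length-$2$ path-decomposition of $G$ is produced in $\cO(n^2)$; this also subsumes the cocomparability case, for which a cocomparability ordering admits a cleaner linear-time variant.

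I expect the main obstacle to be precisely this last point: bringing the bag diameter down to $2$ while still covering every edge. Several natural candidate decompositions are not good enough on AT-free graphs — the bags $L_i\cup L_{i+1}$ of a BFS-layering (already on an ``H''-shaped tree), and even the extended-layering bags $\cL^+(s,G)$ of Section~\ref{sec:appr-pl-pb} taken from a dominating-pair endpoint $s$ (e.g.\ on a complete bipartite graph with one pendant vertex), can have diameter $3$, and closed neighborhoods of a greedily chosen maximal independent set fail too (e.g.\ on $C_5$, where the correct length-$2$ decomposition is a single diameter-$2$ bag). So the delicate step is to choose the grouping rule so that the index intervals of the vertices placed in one bag certify that they sit inside a common disk of radius $1$, or, for a few exceptional bags, inside a diameter-$2$ set, and to turn any ``two bag-vertices at distance $\geq 3$'' into an asteroidal triple built from $x$, $y$, or a vertex of $P$, contradicting the hypothesis.
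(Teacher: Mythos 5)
There is a genuine gap: your proposal is an outline whose central construction is never supplied. You correctly set up a dominating pair via two LBFS sweeps and a dominating shortest path $P$, but the ``technical heart'' --- the actual rule that groups vertices into bags which simultaneously cover every edge, satisfy the contiguity axiom, and have diameter at most $2$ --- is only described by the properties it should have, and you yourself flag it as the unresolved ``delicate step.'' This is precisely where the difficulty lives: with a $1$-dominating shortest path, the endpoints of an edge can attach to path vertices three apart, so bags of the form $N_G[x_i]$ miss edges, while merged bags $N_G[x_i]\cup N_G[x_{i+1}]\cup\dots$ can exceed diameter $2$; invoking the Helly property and Proposition~\ref{prop:AT-pl} does not by itself resolve this tension. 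As written, the claim that one can ``turn any two bag-vertices at distance $\geq 3$ into an asteroidal triple'' is an announcement of a lemma, not a proof of one.

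The paper's proof takes a concretely different route that you should compare against. It does not build bags along a dominating path at all. Instead it takes the BFS layering $L_0,\dots,L_k$ from the LBFS-endpoint $x$ together with the second LBFS order $\sigma$, and proves the key structural fact (Claim~2) that for non-adjacent $u,v\in L_i$ with $\sigma(v)<\sigma(u)$ the down-neighborhoods satisfy $N_G^{\downarrow}(v)\subseteq N_G^{\downarrow}(u)$ --- a direct consequence of the Corneil--Olariu--Stewart property that $x,y$ is a $1$-dominating pair of $G_{\geq\sigma(y)}$ for \emph{every} $y$, not just for the last vertex. This already gives that each single layer $L_i$ has diameter at most $2$ (note: each $L_i$, not $L_i\cup L_{i+1}$, so your objection to layering-based bags does not apply to what the paper does). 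The proof then completes $G$ to a graph $G^+\subseteq G^2$ by making each layer a clique and making down-neighborhoods of adjacent same-layer vertices comparable, shows $\sigma$ is an interval ordering of $G^+$, and takes the clique path-decomposition of the interval graph $G^+$ as the desired length-$2$ decomposition of $G$. To repair your proof you would either need to supply an explicit grouping rule along $P$ with a full verification, or switch to an argument of this interval-completion type.
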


\begin{proof}
  Let $s$ be an arbitrary vertex of $G$ and $x$ be a vertex last
  visited (numbered $1$) by an LBFS$(s,G)$.  Let $\sigma$ be an
  LBFS$(x,G)$-ordering of vertices of $G$.  Clearly, $\sigma$ can be
  generated in linear time as one needs only $2$ scans of LBFS to do
  that.  The following useful result was proven in~\cite{COS-SICOMP}.

  \medskip%
  \noindent%
  {\em Claim 1:~\cite{COS-SICOMP} %
    For every vertex $y$ of an AT-free graph $G$, the pair $x,y$ is a
    $1$-dominating pair of the subgraph $G_{\geq\sigma(y)}$ of $G$
    induced by vertices $\{z \in V: \sigma(y) \leq \sigma(z) \leq
    \sigma(x) = n\}$.  In particular, $x$ and the vertex last visited
    by LBFS$(x,G)$ constitute a $1$-dominating pair of $G$.%
  }
  \medskip


  Let now $\mathcal{L}\newekki{(u,G)} = \{L_0, \ldots, L_k\}$ with
  $L_i = \{ u \in V : d_G(u,x) = i \}$ be a layering of $G$ produced
  by LBFS$(x,G)$. 

  \medskip%
  \noindent%
  {\em Claim 2: %
    For every integer $i \geq 1$ and every two non-adjacent vertices
    $u,v \in L_i$ of an AT-free graph $G$, $\sigma(v) < \sigma(u)$
    implies $N_G^{\downarrow}(v) \subseteq N_G^{\downarrow}(u)$.
    In particular, $d_G(v, u) \leq 2$ holds for every $u,v\in L_i$ and
    every $i$.%
  }

  \begin{proof}[of Claim~2]
    Consider an arbitrary neighbor $w \in L_{i - 1}$ of $v$ and a
    shortest path $P$ from $v$ to $x$ in $G$ containing $w$.  Since
    $\sigma(v) < \sigma(u)$, by Claim 1,
    path $P$ must dominate vertex $u$.  Since $u$ and $v$ are not
    adjacent, $u$ is in $L_i$ and all vertices of $P \setminus
    \{v,w\}$ belong to layers $L_j$ with $j < i - 1$, vertex $u$ must
    be adjacent to $w$.\qedc
  \end{proof}

  We can transform \newekki{an} AT-free graph $G = (V, E)$ into an
  interval graph $G^+ = (V, E^+)$ by applying the following two
  operations:
  \begin{enumerate}[(1)]
  \item ({\it make layers complete graphs}) In each layer $L_i$, make
    every two vertices $u,v \in L_i$ adjacent to each other in $G^+$;
  \item ({\it make down-neighborhoods of adjacent vertices of a layer
      comparable, too}) For each $i$ and every edge $uv$ of $G$ with
    $u,v \in L_i$ and $\sigma(v) < \sigma(u)$, make every $w \in
    N_G^{\downarrow}(v)$ adjacent to $u$ in $G^+$.
  \end{enumerate}

  Clearly, for every edge $uw$ of $G^+$ added by operation~(2),
  $d_G(u, w) \leq 2$ holds.  Also, for every edge $uv$ of $G^+$ added
  by operation~(1), $d_G(u, v) \leq 2$ holds by Claim~2.
  Thus, we have.

  \medskip%
  \noindent%
  {\em Claim 3: %
    $G^+$ is a subgraph of $G^2$.  }
  \medskip

  Next we show that $G^+$ is an interval graph.

  \medskip%
  \noindent%
  {\em Claim 4: %
    $G^+$ is an interval graph.
  }

  \begin{proof}[of Claim~4]
    It is known~\cite{Olariu91} that a graph is an interval graph if
    and only if its vertices admit an {\sl interval ordering}, i.e.,
    an ordering $\tau: V \rightarrow \newekki{\{1, \dots, n\}}$ such
    that for every choice of vertices $a, b, c$ with $\tau(a) <
    \tau(b) < \tau(c)$, $ac \in E$ implies $bc \in E$.  We show here
    that the LBFS$(x,G)$-ordering $\sigma$ of $G$ is an interval
    ordering of $G^+$.  Recall that, for every $v \in L_i$ and $u \in
    L_j$ with $i > j$, \newekki{it holds that} $\sigma(v) < \sigma(u)$
    (as $\sigma$ is a\newekki{n} LBFS-ordering).  Consider
    \newekki{three} arbitrary vertices $a,b,c$ of $G$ and assume that
    $\sigma(a) < \sigma(b) < \sigma(c)$ and $ac \in E^+$.  Assume also
    that $a \in L_i$ for some $i$.  If $c$ belongs to $L_i$, then $b$
    must be in $L_i$ as well and\newekki{,} hence\newekki{,} $bc \in
    E^+$ due to operation~(1).  If both $b$ and $c$ are in $L_{i -
      1}$, then again $bc \in E^+$ due to operation~(1). Consider now
    the remaining \newekki{case}, i.e., $a,b \in L_i$ and $c \in L_{i
      - 1}$.  If $ac \in E$ then $bc \in E^+$ because \newekki{either
      $ab \in E$ and thus operation~(2) applies, or $ab \notin E$ and
      thus Claim~2 implies $bc \in E^+$.}  If $ac \in E^+ \setminus E$
    then, according to operation~(2), edge $ac$ was created in $G^+$
    because some vertex $a'\in L_i$ existed such that $\sigma(a') <
    \sigma(a)$ and $a'a, a'c \in E$.  Since $a'c \in E$ and
    $\sigma(a') < \sigma(b) < \sigma(c)$, as before, $bc \in E^+$ must
    hold.~\qedc
  \end{proof}

  \medskip

  To complete the proof of Proposition~\ref{prop:AT}, we recall that a
  graph is an interval graph if and only if it has a
  path-decomposition with each bag being a maximal clique (see,
  e.g.,~\cite{Diestel00,FulGro1965,GilHof64,Gol-book}).  Furthermore,
  such a path\newekki{-}decomposition of an interval graph can
  \newekki{easily} be computed in linear time.  Let $\cP(G^+) = \{X_1,
  X_2, \dots, X_q\}$ be a path-decomposition of our interval graph
  $G^+$.  Then, $\cP(G) := \cP(G^+) = \{X_1, X_2, \ldots, X_q\}$ is a
  path decomposition of $G$ with length at most $2$ since, for every
  edge $uv$ of $G^+$, the distance in $G$ between $u$ and $v$ is at
  most $2$\newekki{, as shown in Claim~3}.~\qed
\end{proof}

Algorithm~\ref{algo:ATfreeDeco} formalizes the steps described in the
previous proof.

\begin{algorithm}
  \KwIn{An AT-free graph $G=(V,E)$.}%
  \KwOut{A path-decomposition of $G$.}%

  \smallskip

  Calculate a\newekki{n} LBFS$(s,G)$ ordering $\sigma$ with an
  arbitrary start vertex $s \in V$.  Let $x$ be the last visited
  vertex, i.e., $\sigma(x) = 1$.

  Calculate a\newekki{n} LBFS$(x,G)$ ordering $\sigma'$.

  Set $E^+ := E$.

  \ForEach {vertex pair $u, v$ with $d_G(x,u) = d_G(x,v)$ and
    $\sigma'(u) < \sigma'(v)$}%
  {%
    Add $uv$ to $E^+$.

    For each $w \in N_G(u)$ with $\sigma'(v) < \sigma'(w)$, add $vw$
    to $E^+$.%
  }

  Calculate a path-decomposition $\cP(G^+)$ of the interval graph $G^+
  = (V, E^+)$ by determining the maximal cliques of $G^+$.

  Output $\cP(G^+)$.

  \caption{\newekki{Computing} a path-decomposition of length at most
    $2$ for a given AT-free graph.}\label{algo:ATfreeDeco}
\end{algorithm}

As the class of cocomparability graphs is a proper subclass of AT-free
graphs, we obtain \newekki{the following corollary}.

\begin{corollary}\label{prop:cocomp}
  If $G$ is an $n$-vertex cocomparability graph, then $\pb(G) \leq
  \pl(G) \leq 2$.  Furthermore, a path-decomposition of $G$ with
  length at most $2$ can be computed in $\cO(n^2)$ time.
\end{corollary}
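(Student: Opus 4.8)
The plan is to obtain this corollary directly from Proposition~\ref{prop:AT} by invoking the classical fact that every cocomparability graph is AT-free. I would first recall the defining property of a cocomparability ordering $\sigma = [v_1,\dots,v_n]$: for all $i<j<k$, if $v_iv_k\in E$ then $v_j$ is adjacent to $v_i$ or to $v_k$. Then I would show, by contradiction, that a graph with such an ordering contains no asteroidal triple. Suppose $\{a,b,c\}$ is an AT and, after relabeling, $\sigma(a)<\sigma(b)<\sigma(c)$; since an AT is an independent set, none of $a,c$ lies in $N_G[b]$. Let $P$ be a path from $a$ to $c$ avoiding $N_G[b]$. No vertex of $P$ can have $\sigma$-value equal to $\sigma(b)$ (that vertex would have to be $b$), and $P$ starts at a vertex with $\sigma$-value below $\sigma(b)$ and ends at one above it, so $P$ contains an edge $xy$ with $\sigma(x)<\sigma(b)<\sigma(y)$. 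Applying the cocomparability property to the triple $x,b,y$ forces $b$ to be adjacent to $x$ or to $y$, contradicting $P\cap N_G[b]=\emptyset$. Hence every cocomparability graph is AT-free; this is of course well known and could instead simply be cited (see, e.g.,~\cite{Bra-Book,Gol-book,CoOlSt97}).

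The second step is pure bookkeeping. Since $G$ is AT-free, Proposition~\ref{prop:AT} immediately yields $\pb(G)\leq\pl(G)\leq 2$, and Algorithm~\ref{algo:ATfreeDeco} produces a path-decomposition of $G$ of length at most $2$ in $\cO(n^2)$ time. I would remark that Algorithm~\ref{algo:ATfreeDeco} uses only two LBFS sweeps together with a single scan over pairs of equidistant vertices, so it never needs a cocomparability ordering as input; the quadratic bound therefore transfers verbatim. (If one did want the ordering, it is available in linear time by~\cite{McCSp}, but it is not used here.)

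I do not expect any real obstacle: the entire content is the containment of cocomparability graphs in the AT-free class, for which the straddling-edge argument above is elementary, and the remaining claims are inherited directly from Proposition~\ref{prop:AT}. The only point worth stating explicitly is that the $\cO(n^2)$ running time carries over without any recognition step for cocomparability graphs, so the bound in the statement is exactly the one from Proposition~\ref{prop:AT}.
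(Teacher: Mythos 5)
Your proposal is correct and follows exactly the paper's route: the paper derives this corollary from Proposition~\ref{prop:AT} by simply noting that cocomparability graphs form a (proper) subclass of AT-free graphs, which is precisely your argument. The only difference is that you spell out the straddling-edge proof of that containment (correctly), whereas the paper just asserts it as known.
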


The complement of an induced cycle on six vertices shows that the
bound $2$ on the path-breadth of cocomparability \newekki{graphs} (and
therefore, of AT-free \newekki{graphs}) is sharp.  Indeed, the edge
set of $\overline{C}_6$ forms a bramble but no vertex $1$-dominates
all edges, implying, by Proposition~\ref{prop:bramble}, that
$\pb(\overline{C}_6) = 2$.


Since the minimum line-distortion problem is NP-hard on
cocomparability graphs~\cite{HM10}, it is NP-hard also on bounded
path-length graphs.

\begin{corollary}\label{cor:ldNPpl}
  The minimum line-distortion problem is NP-hard on bounded
  path-length graphs.
\end{corollary}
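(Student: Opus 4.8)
The plan is to deduce this directly from two facts that are already in hand: first, that every cocomparability graph has path-length at most $2$, and second, that the minimum line-distortion problem is already known to be NP-hard on cocomparability graphs. No new combinatorial work is needed; the corollary is a ``transfer of hardness along a class inclusion'' argument.

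First I would recall from Corollary~\ref{prop:cocomp} (which itself follows from Proposition~\ref{prop:AT} together with the standard fact that cocomparability graphs form a subclass of AT-free graphs) that $\pl(G) \le 2$ for every cocomparability graph $G$. Hence the class of cocomparability graphs is contained in the class of graphs of path-length at most $2$, and in particular is contained in a class of bounded path-length.

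Next I would invoke the result of Heggernes et al.~\cite{HM10} that computing the minimum line-distortion of a graph remains NP-hard when the input is restricted to cocomparability graphs. The key (and essentially only) observation is that NP-hardness of a problem restricted to an input class $\mathcal{C}$ transfers immediately to every superclass $\mathcal{C}' \supseteq \mathcal{C}$: the very same polynomial-time reduction witnessing hardness on $\mathcal{C}$ produces instances lying in $\mathcal{C}$, hence also in $\mathcal{C}'$, and therefore already witnesses hardness on $\mathcal{C}'$. Applying this with $\mathcal{C}$ the cocomparability graphs and $\mathcal{C}'$ the graphs of path-length at most $2$ yields the claim; one may even state the sharper version ``NP-hard on graphs of path-length at most $2$.''

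The only step that requires any care at all is the class-inclusion/monotonicity remark in the previous paragraph, and even that is routine; there is no genuine obstacle, since the statement is an immediate consequence of the structural results of this section combined with the cited hardness result.
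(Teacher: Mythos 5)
Your proposal is correct and matches the paper's argument exactly: the paper likewise derives this corollary by combining the bound $\pl(G)\le 2$ for cocomparability graphs (Corollary~\ref{prop:cocomp}) with the NP-hardness of minimum line-distortion on cocomparability graphs from~\cite{HM10}, transferring hardness to the superclass of bounded path-length graphs. Nothing further is needed.
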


We know that the minimum bandwidth problem is NP-hard even on bounded
path-width graphs (e.g., even on caterpillars of hair-length at most
$3$~\cite{Monien86,DubeyaFeige2011}).  Recently, in~\cite{ShTaUe2012},
it was shown that the minimum bandwidth problem is NP-hard also on
so-called {\em convex bipartite graphs}.  A bipartite graph $G = (U,
V; E)$ is said to be {\em convex} if \newekki{for} one of its parts,
say $U$, \newekki{there is an ordering} $(u_1, u_2, \ldots, u_q)$ such
that for all $v \in V$ the vertices adjacent to $v$ are consecutive.
It is easy to see
that, in this case, $\{N_G[u_1], N_G[u_2], \ldots, N_G[u_q]\}$ is a
path-decomposition of $G$ of breadth $1$.  Note that, given a convex
bipartite graph $G = (U, V; E)$, a proper ordering $(u_1, u_2, \ldots,
u_q)$ of $U$ can be found in linear time~\cite{BoothLueker}.  Thus,
the following two results are true.

\begin{proposition}\label{prop:ChBip}
  If $G$ is a convex bipartite graph, then $\pb(G) = 1$ and,
  therefore, $\pl(G) \leq 2$.  Furthermore, a path-decomposition of
  $G$ with breadth $1$ can be computed in linear time.
\end{proposition}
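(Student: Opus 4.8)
The plan is to verify directly that the family of closed neighbourhoods $\{N_G[u_1], N_G[u_2], \ldots, N_G[u_q]\}$, listed in the order given by a convex ordering $(u_1, \ldots, u_q)$ of the part $U$, is a path-decomposition of $G$ whose every bag is a disk of radius $1$ in $G$. Everything then follows from this single observation together with the inequality $1 \le \pb(G) \le \pl(G) \le 2\,\pb(G)$ already recorded for graphs with at least one edge.

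First I would check the three defining properties of a path-decomposition. Coverage is immediate: $u_i \in N_G[u_i]$ for every $i$, and since $G$ is connected with $n>1$ there are no isolated vertices, so every $v \in V$ has some neighbour $u_i \in U$ and hence lies in $N_G[u_i]$. For the edge property, each edge of $G$ has exactly one endpoint in $U$ (because $G$ is bipartite with parts $U$ and $V$), say the edge is $u_i v$; then both of its endpoints belong to the bag $N_G[u_i]$. For the contiguity property I would split into the two vertex types. A vertex $u_i \in U$ lies only in the bag $N_G[u_i]$, since $U$ is independent and therefore $u_i \in N_G[u_j]$ forces $i=j$; a single index is trivially a contiguous subsequence. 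A vertex $v \in V$ lies in $N_G[u_j]$ exactly when $u_j$ is a neighbour of $v$, and by the definition of a convex bipartite graph the neighbours of $v$ form a block of consecutive elements $u_a, u_{a+1}, \ldots, u_b$ of the ordering of $U$; hence the bags containing $v$ are exactly $N_G[u_a], \ldots, N_G[u_b]$, a contiguous subsequence.

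Next I would observe that each bag satisfies $N_G[u_i] = D_G(u_i, 1)$, so this path-decomposition has breadth at most $1$; since $G$ has at least one edge it cannot have breadth $0$, whence $\pb(G) = 1$ and $\pl(G) \le 2\,\pb(G) = 2$. For the running time, I would invoke~\cite{BoothLueker} to produce a convex ordering $(u_1, \ldots, u_q)$ of $U$ in linear time; with this ordering in hand, forming the closed neighbourhoods and outputting them in order costs $\cO(n+m)$, so the whole construction is linear.

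I do not expect any genuine obstacle in this proof. The only step that deserves a moment's care is the contiguity property, and there the argument collapses at once onto the consecutiveness that is built into the definition of a convex bipartite graph; the one easy-to-overlook point is the use of connectedness to rule out isolated vertices of $V$ in the coverage step, which is harmless since all graphs in this paper are connected with $n>1$.
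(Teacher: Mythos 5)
Your proposal is correct and follows exactly the paper's route: the paper likewise takes the bags $\{N_G[u_1],\dots,N_G[u_q]\}$ in the convex ordering of $U$, asserts (without spelling out the verification you supply) that this is a path-decomposition of breadth $1$, and cites~\cite{BoothLueker} for computing the ordering in linear time. Your write-up simply fills in the ``easy to see'' details, and does so correctly.
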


\begin{corollary}\label{cor:bwNPpl}
  The minimum bandwidth problem is NP-hard on bounded path-length
  graphs.
\end{corollary}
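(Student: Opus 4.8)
The plan is to derive the statement immediately from a known hardness result combined with the structural bound on convex bipartite graphs established just above. First I would recall that Shrestha, Tayu, and Ueno showed in~\cite{ShTaUe2012} that computing the minimum bandwidth is NP-hard even when the input is restricted to convex bipartite graphs. Second, I would invoke Proposition~\ref{prop:ChBip}, which asserts that every convex bipartite graph $G$ satisfies $\pb(G) = 1$ and hence $\pl(G) \leq 2$. Putting these two facts together, the class of convex bipartite graphs is contained in the class of graphs with path-length at most $2$, so the NP-hardness of the bandwidth problem on the former class is inherited by the latter.

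To make this airtight, I would point out that the reduction underlying the hardness result of~\cite{ShTaUe2012} produces, as its output instances, graphs that are convex bipartite by construction; applying to each such instance the path-decomposition $\{N_G[u_1], \ldots, N_G[u_q]\}$ of breadth $1$ exhibited in the proof of Proposition~\ref{prop:ChBip} shows that every instance has path-length at most $2$. Consequently, the bandwidth problem remains NP-hard even when the input is promised to have path-length at most the fixed constant $2$, which is precisely the assertion that the problem is NP-hard on bounded path-length graphs (mirroring the argument already used for Corollary~\ref{cor:ldNPpl}, which passed through cocomparability graphs instead).

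I do not expect any genuine obstacle here: the only point requiring care is the logical direction of the implication. One needs a hardness result for a graph class that is \emph{known} to have bounded path-length, rather than merely the fact that bounded-path-length graphs admit good approximation algorithms (Theorem~\ref{th:bw-pl}), which would push the implication the wrong way. Since Proposition~\ref{prop:ChBip} supplies exactly such a class, the corollary follows with essentially no further work.
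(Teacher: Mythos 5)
Your proposal is correct and matches the paper's own argument exactly: the corollary is obtained by combining the NP-hardness of bandwidth on convex bipartite graphs from~\cite{ShTaUe2012} with Proposition~\ref{prop:ChBip}, which places that class inside the graphs of path-length at most $2$. Nothing further is needed.
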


\section{Constant-factor approximation of line-distortions of AT-free
  graphs} \label{sec:appr-ld-AT}

From Theorem \ref{th:ld-pl} and results of the previous section, it
follows already that there is an efficient constant-factor
approximation algorithm for the minimum line-distortion problem on
such particular graph classes as permutation graphs, trapezoid graphs,
cocomparability graphs as well as AT-free graphs.  Recall that for
arbitrary (unweighted) graphs the minimum line-distortion problem is
hard to approximate within a constant factor~\cite{BDG+05}.
Furthermore, the problem remains NP-hard even when the input graph is
restricted to a
chordal, cocomparability, or AT-free graph~\cite{HM10}.
Polynomial-time constant-factor approximation algorithms were known
only for split and cocomparability graphs;~\cite{HM10} gave efficient
$6$-approximation algorithms for both graph classes.  As far as we
know, for AT-free graphs (the class which contains all cocomparability
graphs), no prior efficient approximation algorithm was known.

In this section, using additional structural properties of AT-free
graphs and ideas \newekki{from} Section~\ref{ssec:pb-vs-ld}, we give a
better approximation algorithm for all AT-free graphs\newekki{; more
  precisely, we give} an $8$-approximation algorithm \newekki{that}
runs in linear time.

The following nice structural result from~\cite{KKM99} will be very
useful.

\begin{lemma}[\cite{KKM99}]\label{lem:ATfreeLayering}
  Let $G = (V, E)$ be an AT-free graph.  Then, there is a dominating
  path $\pi = (v_0, \ldots, v_k)$ and a layering $\mathcal{L} = \{L_0,
  \ldots, L_k\}$ with $L_i = \{ u \in V : d_G(u, v_0) = i\}$ such that
  for all $u \in L_i$ $(i \geq 1)$, $uv_i \in E$ or $uv_{i-1} \in E$.
  Computing $\pi$ and $\mathcal{L}$ can be done in linear time.
\end{lemma}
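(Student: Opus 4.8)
The plan is to combine the dominating-pair theory of AT-free graphs with a breadth-first layering and then to synchronize a dominating shortest path with the layers; this is the structural lemma of Kloks, Kratsch and M\"uller~\cite{KKM99}, and I only sketch here how one arrives at it.

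First I would compute, in linear time, a dominating pair $(v_0,v_k)$ of $G$. Running LBFS from an arbitrary vertex to obtain a vertex $v_0$, and then LBFS from $v_0$ to obtain a vertex $v_k$, produces such a pair by the Corneil--Olariu--Stewart theory already invoked above (see Claim~1 in the proof of Proposition~\ref{prop:AT} and~\cite{CoOlSt97,COS-SICOMP}); moreover one may arrange that $d_G(v_0,v_k)$ equals the eccentricity of $v_0$. Performing $BFS(v_0,G)$ with layers $L_i=\{u\in V:d_G(u,v_0)=i\}$ for $i=0,\dots,k$, the vertex $v_k$ then lies in the last layer $L_k$, so $\mathcal{L}=\{L_0,\dots,L_k\}$ has exactly the shape required by the statement.

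Next I would take a shortest path $\pi=(v_0,v_1,\dots,v_k)$ from $v_0$ to $v_k$ with $v_i\in L_i$. Since $(v_0,v_k)$ is a dominating pair, $\pi$ is a dominating path, so every vertex $u$ is equal or adjacent to some $v_j$; and since $uv_j\in E$ forces $|i-j|\le 1$ when $u\in L_i$, we get $j\in\{i-1,i,i+1\}$. The only obstruction to the lemma is therefore a vertex $u\in L_i$ that is adjacent to $v_{i+1}$ but to neither $v_{i-1}$ nor $v_i$. To rule this out I would not take an arbitrary $\pi$ but one built by a greedy LBFS-based rule: having fixed $v_{i-1}$, let $v_i$ be the neighbour of $v_{i-1}$ in $L_i$ that is extremal in a fixed LBFS ordering of $G$. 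An exchange argument should then produce, from a putative bad configuration consisting of the vertex $u$, the vertex $v_{i+1}$, and a suitably chosen far vertex $w$, an independent set of three vertices each pair of which is joined by a path avoiding the neighbourhood of the third, i.e.\ an asteroidal triple, contradicting that $G$ is AT-free. (For the few small values of $i$ where $u$ might be adjacent to $v_0$ or $v_k$ the bad case does not arise and can be checked directly.)

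The step I expect to be the main obstacle is exactly this last one: pinning down the precise greedy rule for choosing $v_i$ and the invariant it maintains about the down-neighbourhoods $N_G^{\downarrow}$, and then verifying that all three ``path avoiding a neighbourhood'' requirements of an asteroidal triple genuinely hold. Everything else -- the two LBFS sweeps, the single BFS layering, and reading off the path $\pi$ -- is routine and runs in linear time, which also delivers the claimed $\cO(n+m)$ bound.
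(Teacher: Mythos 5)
The paper does not prove this lemma at all: it is imported verbatim from~\cite{KKM99} and used as a black box, so there is no internal proof to compare your attempt against. Judged on its own terms, your proposal correctly reconstructs the outer shell of the Kloks--Kratsch--M\"uller argument: two LBFS sweeps give a dominating pair $(v_0,v_k)$ with $d_G(v_0,v_k)$ equal to the eccentricity of $v_0$ (by the Corneil--Olariu--Stewart theory, cf.~\cite{COS-SICOMP}), the BFS layering from $v_0$ then has $v_k$ in the last layer, and a shortest $v_0,v_k$-path $\pi$ with $v_i\in L_i$ is dominating, so the only possible failure is a vertex $u\in L_i$ adjacent to $v_{i+1}$ but to neither $v_{i-1}$ nor $v_i$. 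All of that is sound.

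The genuine gap is the one you flag yourself: excluding that bad configuration \emph{is} the lemma, and you do not carry it out. ``Let $v_i$ be the neighbour of $v_{i-1}$ in $L_i$ that is extremal in a fixed LBFS ordering'' is not yet an argument: you do not say which extremum is taken, in which of the two orderings, what invariant about the down-neighbourhoods $N_G^{\downarrow}(v_i)$ this choice maintains, or how that invariant is preserved from layer $i$ to layer $i+1$. More importantly, the asserted contradiction is never exhibited: to derive an asteroidal triple you must name three pairwise nonadjacent vertices and, for each of the three, a path joining the other two that avoids its closed neighbourhood; for the natural candidates built from $u$, a vertex near $v_0$, and a vertex near $v_k$, verifying the three avoidance conditions is exactly where AT-freeness and the greedy rule interact, and ``an exchange argument should then produce'' such a triple leaves the single nontrivial step of the proof undone. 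As written, your text is an accurate plan plus a correct attribution rather than a proof; citing~\cite{KKM99} outright, as the paper does, is legitimate, but the sketch should not be presented as a derivation until the greedy rule and the triple construction are pinned down.
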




\begin{theorem}\label{theo:ATfreeLineApprox}
  There is a linear time algorithm to compute an $8$-approximation of
  the line-distortion of an AT-free graph.
\end{theorem}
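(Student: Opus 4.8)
The plan is to combine the AT-free structural result of Lemma~\ref{lem:ATfreeLayering} with the machinery of Section~\ref{ssec:pb-vs-ld}, exactly as Proposition~\ref{prop:DP-ld} does, but keeping track of the fact that the dominating path we obtain has domination parameter $k=1$. First I would invoke Lemma~\ref{lem:ATfreeLayering} to compute, in linear time, a dominating path $\pi = (v_0, \dots, v_k)$ and its associated layering $\mathcal{L} = \{L_0, \dots, L_k\}$ with the property that every $u \in L_i$ ($i \geq 1$) is adjacent to $v_i$ or to $v_{i-1}$. In particular $\pi$ is a $1$-dominating path of $G$. Note, however, that $\pi$ need not be a \emph{shortest} path of $G$, so Proposition~\ref{prop:DP-ld} does not apply verbatim; the layering $\mathcal{L}$ is precisely what substitutes for the ``shortest path'' hypothesis, since it guarantees $d_G(v_i,v_j) = |i-j|$ along the path indices and, more importantly, bounds how many consecutive layers an edge of $G$ can span.

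Next I would mimic the embedding construction in the proof of Proposition~\ref{prop:DP-ld}: using a BFS from the path $\pi$ (equivalently, using the layering together with the branch structure), partition $V$ into sets $X_0, X_1, \dots, X_k$ with $x_i \in X_i$ and $d_G(v,x_i)\le 1$ for every $v\in X_i$; place the vertices of $X_i$ before those of $X_j$ whenever $i<j$; embed within each $X_i$ by Lemma~\ref{lem:small-coon-set} (each $X_i$ is connected), so that intra-block distances are stretched by at most a factor dictated by $2|X_i|-1$; and leave a gap of length $2k+1 = 3$ between consecutive blocks to guarantee non-contractiveness across block boundaries. Non-contractiveness of $f$ then follows exactly as before: adjacent-in-$\ell$ vertices inside a block are handled by Lemma~\ref{lem:small-coon-set}, and across a block boundary we have $d_G(x,y)\le d_G(x,v_i)+1+d_G(y,v_{i+1})\le 3$ which does not exceed the gap. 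For the stretch bound, take an edge $uv$ with $u\in X_i$, $v\in X_j$, $i\le j$; since $v_i,\dots,v_j$ lie on a $1$-dominating path and $u,v$ are adjacent, $j-i$ is bounded by a small constant, and the union $\bigcup_{h=i}^j X_h$ is contained in a disk of small radius around the middle index, so $|\bigcup_{h=i}^j X_h|-1 \le c\cdot\bw(G)$ by Lemma~\ref{lem:Discbandwidth}. Plugging $k=1$ into the arithmetic of Proposition~\ref{prop:DP-ld} and using $\bw(G)\le\ld(G)$ gives a constant-factor bound; with $k=1$ the expression $(8k+4)\ld(G)+(2k)^2+2k+1 = 12\,\ld(G)+7$ collapses, and a slightly sharper accounting specific to AT-free graphs (exploiting that $\pi$ actually dominates and the layering controls the span $j-i$ more tightly) is what yields the claimed factor $8$.

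The main obstacle, and the part that needs genuine care rather than routine calculation, is bounding $j-i$ and hence $|\bigcup_{h=i}^j X_h|$ sharply enough to land at $8$ rather than at the looser constant coming directly from Proposition~\ref{prop:DP-ld}. For a general $1$-dominating \emph{shortest} path one gets $j-i\le 2k+1=3$, but here we must use the extra structure: if $u\in L_{i'}$ and $v\in L_{j'}$ are adjacent then $|i'-j'|\le 1$ in the \emph{layering}, and the layering index essentially agrees with the path index because every layer vertex hangs off $v_i$ or $v_{i-1}$; chasing this through shows each relevant disk has radius at most $2$ (not $2k+1=3$), saving a factor and tightening the constant. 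Everything else — the linear running time (Lemma~\ref{lem:ATfreeLayering} and Lemma~\ref{lem:small-coon-set} are both linear, and the partition and placement are a single BFS-based sweep), non-contractiveness, and the final substitution into the stretch inequality — is routine bookkeeping once this span bound is in place. I would therefore organize the write-up as: (1) obtain $\pi,\mathcal{L}$; (2) build $X_0,\dots,X_k$ and the embedding $f$ with gaps $3$; (3) verify non-contractiveness; (4) prove the sharp span bound $j-i\le 2$ (or whatever constant the AT-free layering gives) and the disk containment; (5) apply Lemma~\ref{lem:Discbandwidth} and $\bw(G)\le\ld(G)$ to conclude distortion at most $8\,\ld(G)$.
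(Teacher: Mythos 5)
Your overall strategy (dominating path from Lemma~\ref{lem:ATfreeLayering}, local density via Lemma~\ref{lem:Discbandwidth}, and $\bw(G)\le\ld(G)$) is the right family of ideas, but the step that is supposed to turn the generic bound of Proposition~\ref{prop:DP-ld} into the factor $8$ does not go through. With $k=1$ your construction gives $(8k+4)\ld(G)+(2k)^2+2k+1=12\,\ld(G)+7$, and your proposed sharpening rests on two claims that fail for the BFS-branch partition $X_0,\dots,X_k$. First, the span bound: for an edge $uv$ with $u\in X_i$, $v\in X_j$ you only get $d_G(v_i,v_j)\le 1+1+1=3$, hence $j-i\le 3$, not $j-i\le 2$; the layering constraint ($|i'-j'|\le 1$ for the BFS layers of adjacent vertices) does not force the \emph{branch} indices to agree with the layer indices tightly enough, since a vertex of $L_{i+1}$ may be hung off $v_i$ or $v_{i+1}$ or even $v_{i+2}$. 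Consequently $\bigcup_{h=i}^{j}X_h$ need not fit in a disk of radius $2$. Second, even granting radius-$2$ disks, your embedding inserts inter-block gaps of length $3$ and pays $2|X_h|-1$ inside each block via Lemma~\ref{lem:small-coon-set}, so the stretch of an edge is $2|\bigcup_{h=i}^{j}X_h|+2(j-i)-1$, which is at best $8\,\ld(G)+O(1)$ — an additive surplus remains, whereas the theorem (and the paper's proof) achieves $|f(x)-f(y)|\le 8\,\ld(G)$ exactly.

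The paper avoids both problems by not using the branch partition or Lemma~\ref{lem:small-coon-set} at all. It splits each BFS layer $L_i$ into $\overline{X}_i$ (vertices adjacent to $v_{i-1}$ but not $v_i$), $X_i$ (vertices adjacent to $v_i$), and $\{v_i\}$, places them in the order $(\dots,v_{i-1},\overline{X}_i,X_i,v_i,\overline{X}_{i+1},X_{i+1},v_{i+1},\dots)$, and between consecutive vertices leaves a gap equal to their exact distance in $G$, which is $1$, $2$, or $3$ by the adjacency to $v_{i-1}$ or $v_i$. Non-contractiveness is then immediate, and since every edge has both endpoints among $\{v_{i-1}\}\cup L_i\cup L_{i+1}\subseteq D_G(v_i,2)$ and the gaps between $f(v_{i-1})$ and $f(v_{i+1})$ sum to at most $2(|D_G(v_i,2)|-1)$, Lemma~\ref{lem:Discbandwidth} with $r=2$ gives $|f(x)-f(y)|\le 2\cdot 4\,\bw(G)=8\,\ld(G)$. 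To repair your write-up you would need to replace steps (2)--(4) of your plan by this layer-based ordering with exact-distance gaps (or find some other way to eliminate both the additive gap overhead and the radius-$3$ disks).
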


\begin{proof}
  Let $G$ be an AT-free graph.  We first compute a path~$\pi = (v_0,
  \ldots, v_k)$ and a layering~$\mathcal{L} = \{L_0, \ldots, L_k\}$ as
  defined in Lemma~\ref{lem:ATfreeLayering}.  To define an embedding
  $f$ of $G$ into the line, we partition every layer~$L_i$ in three
  sets: $\{v_i\}$, $X_i = \{ x : x \in L_i, v_ix \in E \}$, and
  $\overline{X}_i = L_i \setminus (\{v_i\} \cup X_i)$
  (see~Fig.~\ref{fig:ATfreeLineApprox}).  Note that if $x \in
  \overline{X}_i$, then $v_{i - 1}x \in E$.  Since each vertex in
  $X_i$ is adjacent to $v_i$ and each vertex in $\overline{X}_i$ is
  adjacent to $v_{i - 1}$, for all $x,y \in X_i$, $d_G(x, y) \leq 2$,
  and for all $x, y \in \overline{X}_i$, $d_G(x, y) \leq 2$.  Also,
  for all $x \in X_i$ and $y \in \overline{X}_i$, $d_G(x, y) \leq 3$.
  The embedding $f$ places vertices of $G$ into the line in the
  following order: $(v_0, \ldots, v_{i-1}, \overline{X}_{i}, X_{i},
  v_i, \overline{X}_{i + 1}, X_{i + 1}, v_{i+1}, \ldots, v_k)$.
  Between every two vertices $x, y$ placed next to each other
  \newekki{on} the line, to guarantee non-contractiveness, $f$ leaves
  a space of length $d_G(x, y)$ (which is either $1$ or $2$ or $3$,
  where $3$ occurs only when $x \in \overline{X}_i$ and $y \in {X}_i$
  for some $i$).


  \begin{figure}[htb]
    \begin{center} 
      \includegraphics[height=5.0cm]{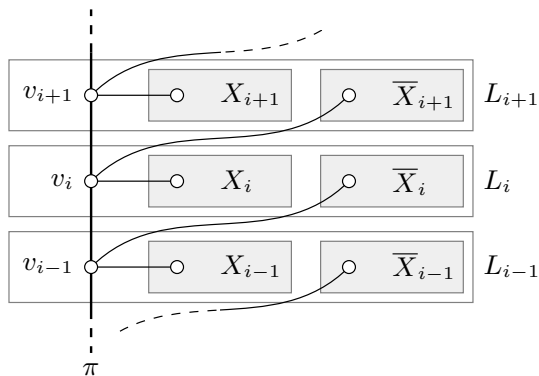}
      \caption{\label{fig:ATfreeLineApprox} Layering of an AT-free
        graph; illustration to the proof of
        Theorem~\ref{theo:ATfreeLineApprox}.  } %
    \end{center}
    \vspace*{-5mm}
  \end{figure}

  \newekki{Now } we will show that $f$ approximates the minimum
  line-distortion of $G$.  Since $\mathcal{L}$ is a BFS layering
  \newekki{started} from $v_0$, i.e., it represents the distances of
  vertices from $v_0$, there is no edge~$xy$ with $x \in L_{i-1}$ and
  $y \in L_{i+1}$.  Also note that $D_G(v_i, 2) \supseteq L_i \cup
  L_{i + 1} \cup \{v_{i - 1}\}$.  By the definition of $f$, for all
  $xy \in E$ with $x, y \in L_{i} \cup L_{i + 1}$, $|f(x) - f(y)| <
  |f(v_{i - 1}) - f(v_{i + 1})|$.  Therefore, counting how many
  vertices are placed by $f$ between $f(v_{i - 1})$ and $f(v_{i + 1})$
  and the distance in $G$ between vertices placed next to each other,
  we get $|f(x) - f(y)| \leq 2(|D_G(v_i, 2)| - 2) + 2 = 2(|D_G(v_i,
  2)| - 1)$.  Using 
  Lemma~\ref{lem:Discbandwidth} and the fact that $\bw(G) \leq \ld(G)$,
  we get $|f(x) - f(y)| \leq 8 \, \ld(G)$ for all $xy \in E$. \qed
\end{proof}

Algorithm~\ref{algo:ATfreeLineDesApprox} formalizes the method
described above.

\begin{algorithm}
  \KwIn{An AT-free graph $G=(V,E)$.}%
  \KwOut{ An embedding $f$ of $G$ into the line.}%

  \smallskip

  Compute a path~$\pi = (v_0, \ldots, v_k)$ and a
  layering~$\mathcal{L} = \{L_0, \ldots, L_k\}$ as defined in
  Lemma~\ref{lem:ATfreeLayering}.

  Partition \newekki{each} layer~$L_i$ into three sets: $\{v_i\}$,
  $X_i = \{ x : x \in L_i, v_ix \in E \}$, and $\overline{X}_i = L_i
  \setminus (\{v_i\} \cup X_i)$.

  Create an embedding $f$ by placing the vertices of $G$ into the line
  in the order $(v_0, \ldots, v_{i-1}, \overline{X}_{i}, X_{i}, v_i,
  \overline{X}_{i + 1}, X_{i + 1}, v_{i + 1}, \ldots, v_k)$.

  Between every two \newekki{consecutive} vertices $x,y$ \newekki{on}
  the line, leave a space of length $d_G(x, y)$.

  Output $f$.

  \caption{An $8$-approximation algorithm for the minimum
    line-distortion of an AT-free
    graph.}\label{algo:ATfreeLineDesApprox}
\end{algorithm}

It is easy to see that the order in which \newekki{the} vertices of
$G$ \newekki{are} placed by $f$ into the line gives also a layout of
$G$ with bandwidth at most $4 \, \bw(G)$.  This reproduces an
approximation result from~\cite{KKM99} (in fact, their algorithm
\newekki{has} complexity $\cO(m + n \log n)$ for an $n$-vertex
$m$-edge graph, since it \newekki{involves} an $\cO(n \log n)$ time
algorithm from~\cite{APSZ81} to find an optimal layout for a
caterpillar with hair-length at most $1$).

\begin{corollary}[\cite{KKM99}]\label{cor:ATfreeBWApprox}
  There is a linear time algorithm to compute a $4$-approximation of
  the minimum bandwidth of an AT-free graph.
\end{corollary}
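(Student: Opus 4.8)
The plan is to reuse, essentially verbatim, the construction from the proof of Theorem~\ref{theo:ATfreeLineApprox} and simply read off the vertex order it produces as a layout. First I would compute, in linear time, the dominating path $\pi = (v_0, \ldots, v_k)$ and the layering $\mathcal{L} = \{L_0, \ldots, L_k\}$ of Lemma~\ref{lem:ATfreeLayering}, partition each layer $L_i$ into $\{v_i\}$, $X_i = \{x \in L_i : v_i x \in E\}$ and $\overline{X}_i = L_i \setminus (\{v_i\}\cup X_i)$ exactly as there, and list the vertices of $G$ from left to right in the order $(v_0, \ldots, v_{i-1}, \overline{X}_i, X_i, v_i, \overline{X}_{i+1}, X_{i+1}, v_{i+1}, \ldots, v_k)$. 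Instead of spreading these vertices on the line with gaps of length $d_G(x,y)$, I would now assign them the integers $1, 2, \ldots, n$ in precisely this order, obtaining a layout $g$ of $G$. All of these steps are clearly linear time, so the only thing left is to bound $\bw(g)$.

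For the bandwidth bound I would argue as in the bandwidth part of the proof of Theorem~\ref{theo:ATfreeLineApprox}. Since $\mathcal{L}$ is a BFS layering from $v_0$, every edge $xy \in E$ has both endpoints inside $L_i \cup L_{i+1}$ for some $i$. Reading off the order above, the vertices of $L_i \cup L_{i+1} = \overline{X}_i \cup X_i \cup \{v_i\} \cup \overline{X}_{i+1} \cup X_{i+1} \cup \{v_{i+1}\}$ occupy a contiguous block of $|L_i| + |L_{i+1}|$ consecutive integers under $g$, so $|g(x) - g(y)| \leq |L_i| + |L_{i+1}| - 1$. By Lemma~\ref{lem:ATfreeLayering}, every vertex of $L_i$ is adjacent to $v_i$ or to $v_{i-1}$ and every vertex of $L_{i+1}$ is adjacent to $v_{i+1}$ or to $v_i$; since consecutive vertices of $\pi$ are adjacent, this gives $L_i \cup L_{i+1} \subseteq D_G(v_i, 2)$, hence $|g(x) - g(y)| \leq |D_G(v_i, 2)| - 1$. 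Finally, Lemma~\ref{lem:Discbandwidth} applied with $r = 2$ yields $|D_G(v_i, 2)| - 1 \leq 4\,\bw(G)$, and therefore $\bw(g) \leq 4\,\bw(G)$.

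There is no real obstacle here: the statement is essentially an immediate byproduct of Theorem~\ref{theo:ATfreeLineApprox}, whose proof already remarks that the order in which $f$ places the vertices is a $4$-approximation of $\bw(G)$. The only points that need care are the two structural facts used above — that $L_i \cup L_{i+1}$ is an interval in the layout order, and that $L_i \cup L_{i+1} \subseteq D_G(v_i, 2)$ — both of which follow directly from the definition of the placement order and from Lemma~\ref{lem:ATfreeLayering}. It is also worth noting, as the paper does, that this matches the $4$-approximation of~\cite{KKM99} while improving the running time to linear, since the earlier algorithm spends $\cO(n\log n)$ additional time computing an optimal caterpillar layout via~\cite{APSZ81}.
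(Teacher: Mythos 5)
Your proposal is correct and follows exactly the route the paper takes: the paper's own justification is the one-line remark after Theorem~\ref{theo:ATfreeLineApprox} that the vertex order produced by $f$ already yields a layout of bandwidth at most $4\,\bw(G)$, and you have simply (and correctly) filled in the details, namely that $L_i \cup L_{i+1}$ is contiguous in the order, that $L_i \cup L_{i+1} \subseteq D_G(v_i,2)$ by Lemma~\ref{lem:ATfreeLayering}, and that Lemma~\ref{lem:Discbandwidth} with $r=2$ gives the factor $4$.
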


Combining Proposition~\ref{prop:AT} and Proposition~\ref{prop:PD-bw},
we obtain also the following result from~\cite{KKM99} as a corollary.

\begin{corollary}[\cite{KKM99}]\label{cor:bw-AT-2-appr}
  There is an $\cO(m + n \log^2 n)$ time algorithm to compute a
  $2$-approximation of the minimum bandwidth of an AT-free graph.
\end{corollary}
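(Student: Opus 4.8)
The plan is to read this corollary off Proposition~\ref{prop:AT} together with the interval-graph reduction used in the proof of Proposition~\ref{prop:PD-bw}. By Proposition~\ref{prop:AT}, an AT-free graph $G=(V,E)$ has a path-decomposition of length at most $2$; more precisely, that proof produces an interval graph $G^+=(V,E^+)$ with $G\subseteq G^+\subseteq G^2$ for which this path-decomposition, with its bags being the maximal cliques of $G^+$, is also a path-decomposition of $G^+$. Exactly as in the proof of Proposition~\ref{prop:PD-bw} (taken with $\lambda=2$), an optimal bandwidth layout $f$ of the interval graph $G^+$ satisfies $\max_{uv\in E}|f(u)-f(v)|\le\max_{uv\in E^+}|f(u)-f(v)|=\bw(G^+)\le\bw(G^2)\le 2\,\bw(G)$, so $f$ is a $2$-approximate bandwidth layout of $G$. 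Thus correctness is immediate from the two propositions, and the only real work is in the running time.

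I want the bound $\cO(m+n\log^2 n)$, whereas invoking Propositions~\ref{prop:AT} and~\ref{prop:PD-bw} verbatim costs $\cO(n^2+n\log^2 n)$, since both spend $\Theta(n^2)$ time writing down the possibly dense supergraph $G^+$. The idea is never to materialise $G^+$, but only to extract an \emph{interval model} of it. First I would run the two Lexicographic-Breadth-First-Searches of the proof of Proposition~\ref{prop:AT} --- $\mathrm{LBFS}(s,G)$ to choose the endpoint $x$, then $\mathrm{LBFS}(x,G)$ --- obtaining in linear time the layering $\mathcal{L}=\{L_0,\dots,L_k\}$ (with $L_i=\{u:d_G(u,x)=i\}$) and the LBFS-ordering $\sigma$, which by Claim~4 of that proof is already an interval ordering of $G^+$. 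The structural point I would exploit is that, for two vertices $u,v$ in the same layer $L_i$ with $\sigma(v)<\sigma(u)$, Claim~2 together with operation~(2) of that proof force the set of neighbours of $v$ in $L_{i-1}$, taken in $G^+$, to be contained in the corresponding set for $u$; hence the down-neighbourhoods of the vertices of $L_i$ in $G^+$ are linearly ordered by inclusion when those vertices are sorted by $\sigma$. This nested ``staircase'' shape of $G^+$ between consecutive layers lets one assign to each vertex $v$ a left endpoint and a right endpoint in the natural left-to-right order of the maximal cliques of $G^+$ by a single sweep through $\mathcal{L}$ that maintains, layer by layer, the current chain of nested down-neighbourhoods using only the adjacency lists of $G$. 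This produces an interval model of $G^+$ in $\cO(m+n)$ time, without ever listing $E^+$.

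Finally I would feed this interval model to Sprague's bandwidth algorithm for interval graphs~\cite{Sprague94}, which runs in $\cO(n\log^2 n)$ time on an interval model and returns an optimal-bandwidth layout $f$ of $G^+$; by the first paragraph, $f$ is a $2$-approximate bandwidth layout of $G$, and the total time is $\cO(m+n)+\cO(n\log^2 n)=\cO(m+n\log^2 n)$. I expect the main obstacle to be exactly the middle step: Proposition~\ref{prop:AT} only delivers $G^+$ and its clique-path-decomposition in $\Theta(n^2)$ time, so I must instead prove that the inclusion-nesting of down-neighbourhoods inside each layer (coming from Claim~2 and operation~(2)) determines, and lets one compute in one pass over the layering and the adjacency lists of $G$, the two interval endpoints of every vertex. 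A secondary point to check is that Sprague's algorithm indeed takes an interval model --- rather than the possibly dense graph $G^+$ --- as input, so that the $\cO(n\log^2 n)$ bound applies as stated.
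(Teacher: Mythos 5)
Your correctness argument is exactly the paper's: the corollary is obtained by combining Proposition~\ref{prop:AT} (AT-free graphs have a path-decomposition of length at most $2$, realized by an interval supergraph $G^+$ with $G\subseteq G^+\subseteq G^2$) with the reduction of Proposition~\ref{prop:PD-bw} (an optimal bandwidth layout of $G^+$, computed by Sprague's algorithm, has bandwidth at most $\bw(G^2)\le 2\,\bw(G)$ on $G$). Where you genuinely depart from the paper is on the running time, and you are right to do so: the paper's one-line justification (``combine the two propositions'') only yields $\cO(n^2+n\log^2 n)$ as stated, since Proposition~\ref{prop:AT} is quoted with an $\cO(n^2)$ bound and $G^+$ may have $\Theta(n^2)$ edges (each layer becomes a clique), whereas the corollary claims $\cO(m+n\log^2 n)$, which is strictly better on sparse AT-free graphs such as paths. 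Your fix --- never materialise $E^+$, but extract an interval model of $G^+$ directly from the LBFS layering in $\cO(n+m)$ time and hand that to Sprague's algorithm --- is the right repair, and the nesting property you need is cleaner than you make it: it follows in one line from Claim~4's interval-ordering property rather than from a case split on Claim~2 versus operation~(2). Indeed, for $u,v\in L_i$ with $\sigma(v)<\sigma(u)$ and any $w\in L_{i-1}$ with $vw\in E^+$, the LBFS ordering gives $\sigma(v)<\sigma(u)<\sigma(w)$, so $uw\in E^+$; hence the $G^+$-down-neighbourhoods in each layer form a chain under $\sigma$, and $N_{G^+}^{\downarrow}(u)=\bigcup\{N_G^{\downarrow}(v):v\in L_i,\ \sigma(v)\le\sigma(u)\}$ is computable by prefix unions in total time $\cO(m)$.

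The one piece you have identified but not actually supplied is the sweep that converts this chain structure into interval endpoints (equivalently, into the left-to-right sequence of maximal cliques of $G^+$, each contained in some $L_{i-1}\cup L_i$) in $\cO(n+m)$ time; as written this is a plausible plan rather than a proof, and it is the step that carries the entire improvement from $\cO(n^2)$ to $\cO(m)$. You should also verify, as you note, that the $\cO(n\log^2 n)$ bound of~\cite{Sprague94} is stated for an interval model as input. Neither point is a flaw in your reasoning --- they are exactly the details the paper itself leaves unaddressed --- but the corollary's stated time bound is not proved until they are filled in.
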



\section{Concluding remarks}

\newekki{In this paper} we have shown that if a graph $G$ has a
$k$-dominating shortest path\newekki{, where} $k$ is a
constant\newekki{,} or the path-length of $G$ is bounded by a
constant, then both the minimum line-distortion problem and the
minimum bandwidth problem on $G$ can \newekki{be} efficiently
approximated within constant factors.  As AT-free graphs,
cocomparability graphs, permutation graphs, trapezoid graphs, convex
bipartite graphs, caterpillars with hairs of bounded length, all have
bounded path-length or have $k$-dominating shortest paths with
constant $k$, they admit constant factor approximations of the minimum
bandwidth and the minimum line-distortion.  Thus, \newekki{the}
constant factor approximation results of~\cite{HM10,KKM99,ShTaUe2012}
become special cases of our results.

We conclude this paper with a few open questions.  We have presented a
$2$-approximation algorithm for computing the path-length of a general
graph but we do not know the complexity status of this problem.
So, our first open question is the following.
\begin{enumerate}[1)]
\item Is \newekki{it NP-complete to decide} whether a graph has
  path-length at most $k$ ($k > 1$)?
\end{enumerate}
We gave a first constant-factor approximation ($8$-approximation)
algorithm for the minimum line-distortion problem on AT-free graphs
and reproduced a $4$-approximation and a $2$-approximation for the
minimum bandwidth on such graphs. 
\begin{enumerate}
\item [2)] Does there exist a better approximation algorithm for the
  minimum line-distortion problem on AT-free graphs?
\end{enumerate}
We have mentioned that the minimum bandwidth problem is notoriously
hard 
even on bounded path-width graphs (e.g., even on caterpillars of
hair-length at most $3$~\cite{Monien86,DubeyaFeige2011}) and even on
bounded path-length graphs (e.g., even on convex bipartite
graphs~\cite{ShTaUe2012}). 
Since the minimum line-distortion problem is NP-hard on
cocomparability graphs~\cite{HM10}, it is NP-hard also on bounded
path-length graphs.  However, the status of the minimum
line-distortion problem on bounded path-width graphs is unknown.  See
Table~\ref{tbl:NPhardBwLd} for a summary.


\begin{enumerate}
\item [3)] Is the minimum line-distortion problem on bounded
  path-width graphs NP-hard?
\end{enumerate}
We are \newekki{also} interested in a more general
question.  
\begin{enumerate}
\item [4)] Is there a better hardness result (\newekki{better} than a
  constant~\cite{BDG+05}) for the line-distortion problem in general
  graphs?
\end{enumerate} 

\begin{table}
\scriptsize
  \centering
  \caption{Hardness results for the minimum line-distortion problem
    and the minimum bandwidth problem on graphs with bounded path-with
    or bounded path-length.}\label{tbl:NPhardBwLd}
  \begin{tabular}{l     c   c}
    \toprule
    & $\pw(G) \leq c$  & $\pl(G) \leq c$ \\
    \midrule
    bandwidth & ~NP-hard (caterpillars with hair-length at most $3$ \cite{Monien86})~ & ~NP-hard (convex bipartite graphs \cite{ShTaUe2012})~ \\
    line-distortion~ & ? & ~NP-hard (cocomparability graphs \cite{HM10}) \\
    \bottomrule
\end{tabular}
\end{table}

\end{document}